\newcommand{\Vol}{\mathrm{Vol}}
\newcommand{\Area}{\mathrm{Area}}
\newcommand{\TruncLap}{\mathrm{TruncLap}}
\newcommand{\COV}{\mathrm{cov}}
\newcommand{\calD}{\mathcal{D}}
\newcommand{\calA}{\mathcal{A}}
\newcommand{\calB}{\mathcal{B}}
\newcommand{\calX}{\mathcal{X}}
\title{Fingerprinting Codes Meet Geometry: Improved Lower Bounds for Private Query Release and Adaptive Data Analysis}
\author{Xin Lyu\footnote{Research done while the author was at Apple. Email: xinlyu@berkeley.edu}\\ UC Berkeley \and Kunal Talwar\footnote{Email:kunal@kunaltalwar.org}\\Apple}
\date{}
\begin{document}

\maketitle

\newcommand{\xin}[1]{{\color{red} Xin: #1}}
\newcommand{\kunal}[1]{{\color{purple} Kunal: #1}}

\thispagestyle{empty}

\begin{abstract}

Fingerprinting codes are a crucial tool for proving lower bounds in differential privacy. They have been used to prove tight lower bounds for several fundamental questions, especially in the ``low accuracy'' regime. Unlike reconstruction/discrepancy approaches however, they are more suited for query sets that arise naturally from the fingerprinting codes construction. In this work, we propose a general framework for proving fingerprinting type lower bounds, that allows us to tailor the technique to the geometry of the query set.
Our approach allows us to prove several new results, including the following.
\begin{itemize}
  \item We show that any (sample- and population-)accurate algorithm for answering $Q$ arbitrary adaptive counting queries over a universe  $\mathcal{X}$ to accuracy $\alpha$ needs $\Omega(\frac{\sqrt{\log |\mathcal{X}|}\cdot \log Q}{\alpha^3})$ samples, matching known upper bounds. This shows that the approaches based on differential privacy are optimal for this question, and improves significantly on the previously known lower bounds of $\frac{\log Q}{\alpha^2}$ and $\min(\sqrt{Q}, \sqrt{\log |\mathcal{X}|})/\alpha^2$.
  \item We show that any $(\eps,\delta)$-DP algorithm for answering $Q$ counting queries to accuracy $\alpha$ needs $\Omega(\frac{\sqrt{ \log|\mathcal{X}| \log(1/\delta)} \log Q}{\eps \alpha^2})$ samples, matching known upper bounds up to constants. Our framework allows for proving this bound via a direct correlation analysis and improves the prior bound of~\cite{BunUV14} by $\sqrt{\log(1/\delta)}$. 
  \item For privately releasing a set of random $0$-$1$ queries, we show tight sample complexity lower bounds in the high accuracy regime.
\end{itemize}

  In the low accuracy regime, the picture is more complex. For random queries, we show that there is a discontinuity in the sample complexity. For $d$ random queries over a universe $\calX$, the sample complexity grows as $\Theta_{\eps, \delta}(\frac{1}{\alpha^2})$, with no dependence on $d$ or $|\mathcal{X}|$. This new sample complexity bound, based on sparse histograms, is asymptotically better than known lower bounds for CDP. However, at $\alpha \approx \sqrt{\log |\mathcal{X}|}/\sqrt{d}$, the sample complexity jumps to $\Theta_{\eps,\delta}(\sqrt{d}/\alpha)$.

\end{abstract}












\newpage

\thispagestyle{empty}

\setcounter{tocdepth}{2}
\tableofcontents

\clearpage
\pagenumbering{arabic}

\newpage

\section{Introduction}

Differential Privacy~\cite{DworkMNS06j} is standard notion of privacy in statistical databases. Differentially Private (DP) algorithms have been deployed by the US Census Bureau for publishing tables~\cite{AbowdACGHHJKLMMSSZ22}, the Israeli Department of Health for publishing birth records~\cite{HodC24}, and by several companies for training machine learning models and sharing statistics (see e.g.~\cite{Apple2017, DingKY17,xu2023federated, ZhangRXZZK23}). While there is now a large body of literature on designing differentially private algorithms for numerous machine learning and statistical tasks, algorithms for answering statistical (SQ) queries date back to some of the earliest work on Differential Privacy. In this work, we revisit the question of answering statistical/counting queries under a differential privacy constraint.

In a bit more detail, we consider a dataset of size $n$ where each user comes from a universe $\calX$. A counting query $q$ is defined by a function $p : \calX \rightarrow [-1,1]$ and the desired answer to such a query on a dataset $D = \{d_1,\ldots,d_n\}$ is $q(D) = \frac{1}{n} \sum_{i=1}^n p(d_i)$. We would like to design an $(\eps,\delta)$-DP algorithm that answers $Q$ counting queries, and we will measure the error of a mechanism $M$ on a dataset $D$ by either the  $\ell_\infty$ error $\max_{i \in Q} |M(D)_i - q_i(D)|$
or the (normalized) $\ell_2$ error $\sqrt{\frac{1}{Q} \sum_{i=1}^Q (M(D)_i -q_i(D))^2}$. There has been a large body of work on understanding upper and lower bounds for counting queries under differential privacy (see \cref{sec:related}).

Lower bounds for $(\eps,\delta)$-DP broadly fall into a few classes. Reconstruction arguments~\cite{DinurN03} show that an algorithm that is too accurate allows for reconstructing the input dataset. This approach was generalized by Muthukrishnan and Nikolov~\cite{MuthukrishnanN12} who recognized the hereditary discrepancy of the query set as the primary object of interest. Nikolov, Talwar and Zhang~\cite{NikolovTZ13} connected this to the geometry of the so-called {\em sensitivity polytope}, and showed that for {\em every} set of queries, the lower bound from hereditary discrepancy is within polylogarithmic factors of an appropriate Gaussian noise mechanism with a carefully chosen covariance, when $n$ is large enough. For the large dataset regime, this reconstrution/discrepancy approach yields nearly tight bounds not just in the worst case, but for any given set of queries.

For small datasets, this approach is suboptimal by polylogarithmic in $|\mathcal{X}|$ factors. Indeed, for a query set as simple as one-way marginals, the right achievable error rate was open until the seminal work of Bun, Ullman, and Vadhan~\cite{BunUV14}. This work showed that fingerprinting codes from cryptography can be adapted to prove tight lower bounds for certain sets of queries, including the tight bounds for one-way marginals. Further, they showed that these lower bounds can be ``composed'' with other lower bounds to get tight worst-case bounds. This approach has been used to prove worst-case lower bounds for several problems in differential privacy and in adaptive data analysis~\cite{HardtU14, SteinkeU15}. It is natural to ask if there is a geometric query-specific generalization of the fingerprinting codes technique.

In this work, we make progress towards pushing this technique to a broader class of problems. We develop a new abstract framework for proving lower bounds using the underlying approach in fingerprinting codes. Our approach has two main technical ingredients. First, we use an {\em exponential tilt} to define a family of distributions over an arbitrary set of points. This yields a family of distributions over the polytope $K$ of choice, that comes from the exponential family and is thus more amenable to the use of fingerprinting tools~\cite{KamathMS22}. Second, we relax the need for the ``parameter vector'' to come from a hyper-rectangle as in most previous work, using Stokes' theorem to control the appropriate ``score''. Recent work by Portella and Harvey~\cite{PortellaH24} used Stokes' theorem for a specific family of distributions via the Stein-Haff identity. Our work shows that this approach can be used to prove lower bounds for a large family of problems.

\subsection{Our Results}

%

\paragraph*{Adaptive data analysis over a bounded domain.}
Adaptive Data Analysis addresses the question of statistical validity of query answers in the face of adaptivity. While $k$ non-adaptive counting queries can be answered to accuracy $\alpha$ using $O(\log k/\alpha^2)$ samples, these bounds break down when the queries can be adaptive. Dwork, Feldman, Hardt, Pitassi, Reingold and Roth~\cite{DworkFHPRR15} first showed that $O(\sqrt{k})$ samples suffice to answer $k$ queries. Using the private multiplicative weights algorithm of Hardt and Rothblum~\cite{HardtR10}, subsequent work by Bassily, Nissim, Smith, Steinke, Stemmer and Ullman~\cite{BassilyNSSSU16} showed that one can answer $Q$ statistical queries over a universe $\calX$ using a sample of size $O(\frac{\sqrt{\log(|X|)}\log(Q)}{\alpha^3})$. The currently known lower bounds of $\frac{\log Q}{\alpha^2}$ (folklore) and $\min(\sqrt{Q}, \sqrt{\log |\mathcal{X}|})/\alpha^2$ due to~\cite{HardtU14, SteinkeU15, NissimSSSU18} leave a significant gap.
Using our new framework, we strengthen known lower bounds on private query releasing and derive new results on adaptive data analysis, closing this gap up to a $\log \frac 1 \alpha$ factor.

\begin{theorem}[Informal version of Theorem~\ref{theo:ada-lb-formal}]\label{theo:intro-ada-lb}
    Let $\calA$ be an algorithm for answering statistical linear queries over a domain $\calX$. Suppose $\calA$ operates on at most $o(\frac{\sqrt{\log|\calX|}\log(m)}{\alpha^3 \log(1/\alpha)})$ samples. Then, there is a $(\log |\calX|)$-round adaptive attack against $\calA$, which sends at most $m$ queries to $\calA$ and makes $\calA$ fail to be either sample- or distributional-accurate to within error $\alpha$ on at least one query. The attack succeeds with probability $\Omega(\alpha)$.
\end{theorem}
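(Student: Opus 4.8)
The plan is to prove \cref{theo:ada-lb-formal} by an explicit $R$-round adaptive fingerprinting attack with $R=\lceil\log_2|\calX|\rceil$, built on the exponential-tilt construction and the Stokes-theorem score estimate developed earlier in the paper. The attack is a ``reversal'' of private multiplicative weights: PMW answers all but $\Theta(\log|\calX|/\alpha^2)$ queries essentially for free and spends its budget on the ``hard'' ones, so the attacker's job is to force $\Theta(\log|\calX|)$ genuinely adaptive rounds, each presenting $\approx m/R$ queries that collectively behave like $\Theta(1/\alpha^2)$ hard queries, and then to accumulate across rounds a tracing signal whose size contradicts the claimed sample bound. The adaptivity in the attack is exactly the $(\log|\calX|)$-round structure in the statement.

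For the setup I would identify $\calX$ with $\{0,1\}^R$ and, instead of a product-of-Bernoulli prior, sample a hidden parameter $v$ from a prior on a polytope $K$ and the dataset $D=(x_1,\dots,x_n)\sim P_v^{\otimes n}$, where $\{P_v\}$ is the paper's exponential-tilt family over $\calX$. Two properties are what is needed: $P_v$ lies in an exponential family, so the framework's fingerprinting/score lemma applies to it directly; and $K$ is chosen so that its geometry matches the $\ell_\infty$ query set, which is exactly what lets the Stokes-theorem argument control the relevant score even though $v$ does not range over a hyper-rectangle. In round $r$ the attacker conditions on its current approximate decoding of the first $r-1$ coordinates of $v$ and asks a block $B_r$ of $\approx m/R$ linear queries probing coordinate $r$, laid out as a short fingerprinting code; the point is that $\ell_\infty$-accuracy on all of $B_r$ — both against the empirical mean on $D$ and against the induced population mean — forces the block score $\sum_{j\in B_r}\sum_i (a_j-\mu_j(v))(x_{ij}-\mu_j(v))$ (here $a_j$ is $\calA$'s answer and $\mu(v)$ the mean of $P_v$) to be large in expectation, with a $\log(m/R)\approx\log m$ Tardos-type amplification factor that is the source of the $\log m$ in the theorem. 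Choosing $B_r$ from the earlier answers is what lets these per-round signals be combined rather than merely maximized, and it is what forces $\calA$ to re-incur cost $R$ times, which after composing the rounds produces the $\sqrt R=\sqrt{\log|\calX|}$.

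The step I expect to be the main obstacle is the per-user bound that replaces the differential-privacy step of the usual fingerprinting argument: $\calA$ is not assumed to be DP, so I would bound a fixed user's total contribution via the leave-one-out identity $\mathbb E[\mathrm{score}_i]=\mathbb E[(a-a^{(-i)})(x_i-\mu)]$, where $a^{(-i)}$ is $\calA$'s output with $x_i$ resampled from $P_v$, combined with the two-sided accuracy hypothesis: accuracy against both the sample and the population pins every answer within $O(\alpha)$ of its population value, so no single user can move the transcript by more than $O(\alpha)$ in aggregate, and the $R$ per-round increments of a fixed user — being near-independent — have controlled first and second moments. Summing over users and comparing with the accuracy-based lower bound on the total score yields an inequality of the form $n\cdot(\text{per-user leakage})\gtrsim(\text{total score})$; tracking the powers of $\alpha$ through (i) the per-block Tardos amplification, (ii) the $O(\alpha)$-scale cap on a single user's influence coming from simultaneous sample/population accuracy, and (iii) the $\sqrt R$ (rather than $R$) advanced-composition savings across rounds, this rearranges to $n=\Omega\big(\sqrt{\log|\calX|}\,\log m/(\alpha^3\log(1/\alpha))\big)$. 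Making these three $\alpha$-dependencies line up to land on exactly the cubic power — and, above all, establishing the per-user bound at all without a DP hypothesis — is the crux, and it is where the $\log(1/\alpha)$ slack relative to the matching upper bound is incurred.

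Finally, converting the expectation statement into the claimed attack is routine: the total score is a bounded quantity (of order $1/\alpha$), so the fact that its expectation is large forces the event ``some query is inaccurate in one of the two senses'' to occur with probability $\Omega(\alpha)$, which is exactly the conclusion of the theorem. As a sanity check and an alternative route, instantiating the same framework for the non-adaptive $(\eps,\delta)$-DP release problem and then setting $\eps=\Theta(\alpha)$ recovers the same bound, consistent with the $(\eps,\delta)$-DP lower bound announced in the abstract.
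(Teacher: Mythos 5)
Your high-level skeleton matches the paper's: an exponential-tilt family over a structured vector set, a Stokes/divergence lower bound of order $d/\alpha^2$ on the total score, a per-user upper bound, a multi-round attack, and a final reverse-Markov step to extract an $\Omega(\alpha)$-probability failure. The parameter bookkeeping is also broadly right ($d=\Theta(\log|\calX|)$ coordinates of a hypercube giving $\sqrt{\log|\calX|}$, a $\Theta(\log m)$-dimensional component recovered by a reconstruction argument over all $2^{\Theta(\log m)}$ sign patterns giving $\log m$, and $k=\Theta(1/\alpha^2)$ orthogonal vectors giving the extra $1/\alpha$), though in the paper the $\log m$ comes from a VC-type reconstruction step rather than a Tardos amplification, and the $\sqrt{\log|\calX|}$ comes from subgaussian concentration of the fresh-sample score over the $d$ hypercube coordinates rather than from advanced composition.

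The genuine gap is exactly at the step you flag as the crux, and your proposed resolution does not work. You claim that simultaneous sample- and population-accuracy ``pins every answer within $O(\alpha)$ of its population value, so no single user can move the transcript by more than $O(\alpha)$ in aggregate,'' and you use this leave-one-out bound in place of the DP step. This is false: an algorithm can be perfectly accurate in both senses while being maximally non-stable with respect to a few users. The paper's own discussion of this point gives the counterexample --- an algorithm that answers all queries using only its first $\Theta(\log(m)/\alpha^2)$ samples is sample- and population-accurate (those samples generalize), yet those few samples have enormous score and a single one of them can shift the transcript by $\Theta(1)$, not $O(\alpha)$. Accuracy bounds the \emph{sum} of scores from below; it gives no upper bound on any individual score, and no leave-one-out stability. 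The paper's actual fix is not an analytic bound but an \emph{algorithmic intervention by the adversary}: each data point's $v$-part is XOR-masked by a random pad that the adversary reveals slice by slice over the $d=\log|\calX|$ rounds, the adversary tracks each universe element's partial score, and once a point's partial score crosses a threshold $\tau\approx\sqrt{d\log(1/\alpha)}/m$ the adversary permanently stops revealing that point's remaining mask bits (answering the corresponding queries with a constant instead). This forces every algorithm to be ``$\tau$-fair,'' after which each in-sample score is stochastically dominated by $\tau+O(\sqrt{d}/m)\cdot N(0,1)$; a separate argument (a random-walk bound showing only an $\alpha^{\Omega(1)}$-fraction of the universe ever gets compromised) is needed to certify that the modified queries still have the right population means. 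Without this masking-and-cutoff mechanism, or some substitute for it, your per-user bound has no justification and the comparison $n\cdot(\text{per-user leakage})\gtrsim(\text{total score})$ cannot be closed.
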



\paragraph*{Removing the sample-accurate assumption.} Although all the state-of-the-art algorithms for adaptive data analysis offer accuracy with respect to both sample and distribution, by definition, an ADA algorithm does not need to be sample-accurate. For example, suppose one splits their data set into a couple of subsets and uses them in a sophisticated manner. It could be possible that somehow the outputs of the algorithm fail to be sample-accurate w.r.t.~the whole data set, but its outputs generalize nonetheless.

We prove the following lower bound against algorithms that are only distributional accurate. The bound we obtain is weaker than Theorem~\ref{theo:intro-ada-lb} by a factor of $\frac{1}{\alpha}$.
\begin{theorem}[Informal version of Theorem~\ref{theo:lb-for-all-algo-distribution}]\label{theo:intro-lb-for-distributional-only}
Let $\calX$ be the domain. Any ADA algorithm $\mathcal{A}$ over $\calX$ operating on $o(\frac{\sqrt{\log|\calX|}\log(Q)}{\alpha^2\log(1/\alpha)})$ samples cannot answer more than $Q$ adaptively generated statistical queries to within generalization error $\alpha$. Moreover, any such algorithm can be broken into $O(\log|\calX|)$ adaptive rounds.
\end{theorem}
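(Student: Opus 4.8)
The plan is to run the attack behind Theorem~\ref{theo:ada-lb-formal} essentially verbatim, but to replace every use of the hypothesis ``$\calA$ is accurate on the sample $S$'' by the weaker ``$\calA$ is accurate on the population $P$''. That single substitution is what costs one factor of $\alpha$ — turning the $\alpha^{-3}$ of the sample-accurate bound into $\alpha^{-2}$ — and it yields only the weaker conclusion that the attack succeeds with probability $\Omega(\alpha)$. Concretely I would set $d=\lceil\log_2|\calX|\rceil$, identify $\calX$ with $\{0,1\}^{d}$, and take the population to be a product $P=\bigotimes_{j\le d}\mathrm{Ber}(p_j)$ whose bias vector $p$ is drawn from the exponential-tilt family $\{\pi_\theta\}$ on the polytope $K$ supplied by our general framework. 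The ``secret'' $\theta$ encodes $\Theta(d)$ bits, and the attacker peels these off over $R=\Theta(\log|\calX|)$ adaptive rounds, which is exactly why the resulting attack is $O(\log|\calX|)$-round.

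Next I would describe the attack and its bookkeeping. The attacker samples $S=(x_1,\dots,x_n)\sim P^{n}$ and feeds it to $\calA$; in round $r$ it uses only the answers returned in rounds $<r$ to choose a batch of $\approx Q/R$ counting queries that probe a direction of $\theta$ not yet resolved, and from $\calA$'s answers on that batch together with the (known) points $x_i$ it forms a per-user increment $\Delta_{r,i}$ and accumulates $\mathrm{Score}_i=\sum_{r\le R}\Delta_{r,i}$. Then comes the heart of the argument: assuming for contradiction that $\calA$ is $\alpha$-distributionally-accurate on all $Q$ queries except with probability $o(\alpha)$, I would sandwich $\mathbb{E}\big[\sum_{i\le n}\mathrm{Score}_i\big]$ from two sides. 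From below: each population mean $\mathbb{E}_{x\sim P}[q_r(x)]$ is linear in $\theta$, so accurate answers pin down $\theta$ in the probed direction to within $\alpha$; feeding this into the exponential family and evaluating the score via Stokes' theorem on $K$ — structurally the same computation that powers our $(\eps,\delta)$-DP query-release lower bound — lower-bounds the aggregate score. (In the sample-accurate analysis one may additionally use $a_r\approx$ the empirical mean of $q_r$ on $S$, which the attacker computes exactly; that extra handle is worth a further factor of $\alpha$, and is precisely what we must forgo here.) From above: fixing $\theta$ and comparing $\mathrm{Score}_i$ to the value it would take with $x_i$ replaced by a fresh resample $x_i'\sim P\mid\theta$ — for which the expectation vanishes because $x_i'$ is independent of the transcript given $\theta$ — makes $\mathbb{E}[\mathrm{Score}_i]$ a leave-one-out influence of $x_i$ on $\calA$'s transcript, which a ``monitor''/approximate-max-information argument (the converse direction of the generalization-from-stability transfer theorem, and the source of the extra $\log(1/\alpha)$) bounds per user. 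Comparing the two estimates forces $n=\Omega\big(\sqrt{\log|\calX|}\,\log Q\,/\,(\alpha^{2}\log(1/\alpha))\big)$, contradicting the hypothesis; carrying the $o(\alpha)$ slack through the argument leaves an attack succeeding with probability $\Omega(\alpha)$.

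The main obstacle, I expect, will be the upper side of the sandwich: extracting a usable stability/influence bound from distributional accuracy \emph{alone} — no differential privacy, no sample accuracy — while losing only a $\log(1/\alpha)$ factor, and doing so against a fully adaptive attacker. Making this compatible with the Stokes-based evaluation of the score seems to require running that evaluation round-by-round, i.e.\ writing $\sum_r\Delta_{r,\cdot}$ as a martingale over the $R$ rounds and re-deriving the exponential-family score identity conditionally on the transcript so far, so that the adaptively chosen queries do not destroy the gradient/divergence structure the computation relies on. A secondary technical point is arranging the per-round query batches so that the $\log Q$ factor materializes exactly as in the non-adaptive composition bound while the $R=\Theta(\log|\calX|)$ rounds contribute the $\sqrt{\log|\calX|}$.
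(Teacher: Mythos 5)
Your skeleton is right — the score sandwich, the exponential tilt on the composed set $K$, the divergence/Stokes evaluation of the aggregate score, and the $O(\log|\calX|)$-round structure all match the paper — but the step you yourself flag as ``the main obstacle'' is a genuine gap, and the substitute you propose for it does not work. You want to bound $\E[\mathrm{Score}_i]$ for each in-sample point by a ``monitor''/approximate-max-information argument, i.e.\ a converse of the generalization-from-stability transfer theorem. No such converse holds: distributional accuracy alone implies \emph{no} per-sample stability bound. The paper makes exactly this point when motivating its construction — a distributionally-accurate algorithm can evaluate all queries on its first $\Theta(\log(M)/\alpha^2)$ samples and completely expose them, so a few samples can carry essentially the entire $\Omega(d/\alpha^2)$ score while the rest contribute nothing. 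Any argument that tries to extract a uniform per-user influence bound from population accuracy is doomed on this example.

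The paper's resolution is not analytic but adversarial: the attacker \emph{forces} the algorithm to be stable. Each data point is obfuscated by a random one-time pad $\Pi$ applied to its $v$-part, and the mask is revealed slice by slice over the $d$ rounds; the adversary tracks the partial score $\pscore^{(r)}(x;q)$ of every universe element, and as soon as a point's partial score crosses a threshold $\tau = \Theta(\sqrt{d\log(1/\alpha)}/m)$ it is marked \emph{compromised} and its subsequent mask bits are never revealed (its future query values are replaced by a constant). This makes the algorithm $\tau$-fair by construction: once compromised, the remaining slices of that point are information-theoretically hidden, so their contribution to the score is mean-zero and $O(\sqrt{d}/m)$-subgaussian, giving the per-sample bound $\tau + O(\sqrt{d\log(1/\alpha)}/m)$. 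The $\log(1/\alpha)$ loss comes from choosing $\tau$ large enough that only an $\alpha^{O(1)}$ fraction of points ever get compromised (so the modified queries still have the right population means and the accuracy hypothesis transfers), not from any max-information bound. Separately, your choice of population as a plain product of Bernoullis on $\{0,1\}^d$ would only reproduce the one-way-marginal bound $\sqrt{d}/\alpha$; to get the full $\sqrt{\log|\calX|}\log Q/\alpha^2$ you need the three-way tensor structure $K=\{e^i\otimes u^j\otimes v\}$ with the type-conditioned tilt and the reconstruction-plus-projection post-processing, which is where the $m=\Theta(\log Q)$ and $k=\Theta(1/\alpha^2)$ factors actually enter.
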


Theorem~\ref{theo:intro-lb-for-distributional-only} still exhibits the right dependence on the universe size and the number of queries, but is off from the upper bound by a factor of $\frac{1}{\alpha}$. On the other hand, Theorem~\ref{theo:intro-ada-lb} shows that a nearly tight lower bound can be achieved by additionally assuming the algorithm is accurate w.r.t.~samples. Removing the ``sample-accurate'' assumption while retaining the correct dependence on $\alpha$ is an intriguing open question.

\paragraph*{Attack with bounded adaptivity.} The subtle distinction on whether the algorithm is sample-accurate or not has been thoroughly explored in the large-universe few-query regime. In particular, when $\mathcal{X}$ is unbounded, the Gaussian mechanism can use $\frac{\sqrt{k}}{\alpha^2}$ samples to answer $k$ adaptive queries to within error $\alpha$. On the lower bound side, following \cite{HardtU14}, the interactive fingerprinting code of \cite{SteinkeU15} proved a sample complexity of $\Omega\left(\frac{\sqrt{k}}{\alpha}\right)$. A subsequent work by Nissim, Smith, Steinke, Stemmer and Ullman~\cite{NissimSSSU18}  showed a stronger lower bound of $\Omega\left( \frac{\sqrt{k}}{\alpha^2} \right)$, \emph{assuming the algorithm is both sample-accurate and distributional-accurate}. It has been an important open question to close this gap and prove the same lower bound for algorithms only promised to be distributional accurate.

We make progress toward resolving this question by giving an attack with a bounded round of adaptivity. 
\begin{theorem}[Informal version of Theorem~\ref{theo:lb-for-fewer-queries}]\label{theo:intro-lb-for-fewer-queries}
Let $\calX$ be an unbounded domain. Any ADA algorithm $\mathcal{A}$ over $\calX$ operating on $o(\frac{\sqrt{k}}{\alpha^2\log(1/\alpha)})$ samples cannot answer more than $\frac{k}{\alpha^2}$ queries to within generalization error $\alpha$. Moreover, any such algorithm can be broken in $O(k)$ adaptive rounds.
\end{theorem}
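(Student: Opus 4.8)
The plan is to compress the Steinke--Ullman interactive fingerprinting attack so that it spends one adaptive round per \emph{block} of $m=\Theta(1/\alpha^2)$ queries rather than one round per query: $k$ rounds then generate $k/\alpha^2$ queries, while the accumulated fingerprinting signal behaves as in a $(k/\alpha^2)$-query attack, at the cost of a $\log(1/\alpha)$ factor. Concretely, I would take $\calX$ to be an unbounded subset of $\mathbb{R}^{D}$ with $D=\Theta(k/\alpha^2)$ (a scaled sphere, or a product of scalar supports), use the exponential tilt to build a family $\{P_p\}$ of distributions on $\calX$ whose mean is a reparametrization of a parameter $p$ drawn from a prior $\pi$ supported on a \emph{ball} $K$ (not a box), and take the dataset to be $n$ i.i.d.\ draws from $P_p$. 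Distributional $\alpha$-accuracy then forces the algorithm's answer on a query $q$ to lie within $\alpha$ of the tilt-image of a linear functional of $p$; the $D$ ``coordinate'' queries act as one-way marginals, and we partition them into $k$ blocks $B_1,\dots,B_k$ of size $m$.

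\textbf{Key steps, in order.} First, set up the tilted family and the prior $\pi$ on $K$ and record the per-round score identity: writing $a^{(j)}$ for the answers returned on block $B_j$, the sum over users of their round-$j$ fingerprinting scores, roughly $\sum_i \langle x_i-\mu,\, a^{(j)}-\mu^{(j)}\rangle$, has an expectation that we lower bound by applying Stokes' theorem to $K$ --- the step the framework developed here enables, in place of the coordinatewise integration by parts that only handles product priors. Since $B_j$ packs $m=1/\alpha^2$ roughly orthogonal directions, a coordinatewise $\alpha$-accurate answer vector sits at $\ell_2$-distance $O(1)$ from the truth, which is exactly the regime where block $B_j$ contributes an $\Omega(1)$ signal rather than an $O(\alpha)$ one. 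Second, specify the $k$-round attack: round $j$ issues the batch for $B_j$, and both the batch and the choice of which still-unused coordinates form $B_j$ are chosen adaptively from the transcript of rounds $1,\dots,j-1$, so that the round-$j$ score stays positively correlated with the \emph{same} pool of $\lesssim n$ suspect users across rounds (this is the interactive reweighting of \cite{SteinkeU15, NissimSSSU18}), making the total score over $k$ rounds grow like $\Omega(k)$ under distributional accuracy alone. Third, bound the total score from above using the information-theoretic limit on how much an $\alpha$-accurate transcript can reveal about any individual user --- the step through which the $\log(1/\alpha)$ loss enters --- to argue that no user's accumulated score exceeds its null fluctuation $O(\sqrt k)$ by more than a $\mathrm{polylog}(1/\alpha)$ factor, together with a Chebyshev bound on the global sum. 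Finally, combine the second and third steps: they contradict each other unless $n = \Omega\!\big(\sqrt{k}/(\alpha^2\log(1/\alpha))\big)$, and conclude by checking that the attack uses $O(k)$ rounds, $km = k/\alpha^2$ queries, and legal counting queries $\calX\to[-1,1]$ after the tilt normalization.

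\textbf{Main obstacle.} I expect the crux to be carrying out the first two steps together without a sample-accuracy crutch: showing that collapsing $1/\alpha^2$ queries into a single round still yields an $\Omega(1)$ per-round signal \emph{and} a per-round, per-user score with variance $O(1)$, so that the leakage bound in the third step is not drowned by noise. This is exactly the difficulty that led \cite{NissimSSSU18} to assume sample-accuracy; here the small number of rounds should make it tractable, since the empirical distribution barely drifts over $O(k)$ rounds, but turning that intuition into a clean bound --- and pinning the $\log(1/\alpha)$ as the only loss --- is the delicate part. A secondary point is choosing the prior $\pi$ and the body $K$ so that the Stokes boundary term is both tractable and large; this is where the query-set-tailored flexibility of the framework actually gets used.
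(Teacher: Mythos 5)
Your proposal matches the paper's architecture at a high level (exponential tilt over a ball-shaped parameter set, Stokes/divergence for the score lower bound, $k$ adaptive rounds each carrying a batch of $\Theta(1/\alpha^2)$ queries, and a mechanism to cap each user's accumulated score), but the quantitative accounting in your key steps does not produce the claimed bound, and the gap points at the one structural ingredient you are missing. You budget a total signal of $\Omega(k)$ over $k$ rounds against a per-user cap of $O(\sqrt{k}\cdot\mathrm{polylog})$; that only yields $n\gtrsim \sqrt{k}/\mathrm{polylog}$, which is weaker even than the Steinke--Ullman bound $\sqrt{k}/\alpha$, let alone the target $\sqrt{k}/\alpha^2$. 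The correct ledger is: total signal $\Omega(1)$ \emph{per query}, hence $\Omega(k/\alpha^2)$ overall, while the per-user accumulated score must still be only $O(\sqrt{k\log(1/\alpha)})$ --- i.e., each round of $1/\alpha^2$ queries must inject an $O(1)$-subgaussian increment into each user's score even though it contributes $\Omega(1/\alpha^2)$ to the sum over users. Your ``blocks of $1/\alpha^2$ coordinates of a point in $\mathbb{R}^{k/\alpha^2}$'' do not have this property: a user's score fluctuation over all $D=k/\alpha^2$ coordinates is $\Theta(\sqrt{D})=\Theta(\sqrt{k}/\alpha)$, not $\Theta(\sqrt{k})$. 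The paper gets the asymmetry from a tensor structure: data points are $u^j\otimes v$ with $u^j$ ranging over $1/\alpha^2$ mutually orthogonal Boolean ``types'' and $v\in\{\pm1\}^{k}$; the score is \emph{type-conditioned} (Proposition~\ref{prop:type-dependent-score-1}), and the algorithm's answers are projected onto $\mathrm{span}\{u^j\}$ so that each round reveals only one fresh bit $v_r$ per user with an $O(1)$ coefficient. The orthogonal types amplify the divergence/surface term by $1/\alpha^2$ per round without touching the per-user conditional variance. Without this (or an equivalent device), your steps one and three are mutually inconsistent.

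The second gap is the score-capping mechanism. You appeal to ``interactive reweighting'' and an ``information-theoretic limit on how much an $\alpha$-accurate transcript can reveal about any individual user,'' but with only distributional accuracy the algorithm can legally burn a few samples completely (e.g., answer everything from its first $O(\log M/\alpha^2)$ points), so no transcript-level leakage bound caps individual scores. The paper's resolution is constructive: each data point is XORed with a per-point random mask revealed one slice per round, and the adversary permanently stops revealing the mask of any point whose partial score crosses a threshold $\tau$ (``compromised'' points), which forces the algorithm's output to be independent of the unrevealed slices of high-score points and makes each user's score a stopped $O(1)$-increment martingale, hence $\tau+O(\sqrt{k\log(1/\alpha)})$. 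One must then separately verify that cutting off compromised points perturbs the query answers by only $o(\alpha)$, which is where the threshold $\tau=\Theta(\sqrt{k\log(1/\alpha)}/m)$-type choice and the $\log(1/\alpha)$ loss actually enter. I would revise the proposal to build in the orthogonal-type tensor structure and replace the leakage heuristic with an explicit mask-and-compromise reduction.
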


In Theorem~\ref{theo:intro-lb-for-fewer-queries}, let $m = \frac{k}{\alpha^2}$ be the number of queries. In terms of $m$, this is still a $\frac{\sqrt{m}}{\alpha}$ lower bound. However, the upshot is that we can achieve the attack only with $k=O(\alpha^2 m)$ rounds of interaction, where each round sends $\frac{1}{\alpha^2}$ queries. If one can push our framework further by reducing the number of queries in each round to a constant, this would fully resolve the open question.

\paragraph*{Composition of query-releasing lower bounds.} The remarkable work of Bun, Ullman, and Vadhan~\cite{BunUV14} showed a nearly tight lower bound on the sample complexity for privately releasing arbitrary counting queries. Specifically, let $\calX$ denote the universe, $Q$ the number of queries asked, and $\alpha$ the desired $\ell_\infty$ accuracy parameter. It is shown that any $(\eps, \delta)$-DP algorithm must use $n$ samples where
$$
n \ge \Omega\left( \frac{\sqrt{\log |\calX|} \log Q}{\eps \alpha^2} \right).
$$
This is nearly tight, in the sense that there are known algorithms (\cite{HardtR10}) achieving:
$$
n \le O\left( \frac{\sqrt{\log|\calX| \log(1/\delta)} \log Q}{\eps \alpha^2} \right).
$$
A natural question is whether the additional $\log \frac 1 \delta$ factor is inherent. Given stronger lower bounds against pure-DP algorithms, the sample complexity must somehow depend on $\delta$. However, it was not known whether one can improve the sample complexity by ``decoupling'' the dependence on $\delta$ from every other factor.

Via our framework, we bridge this final gap and show that this $\sqrt{\log(1/\delta)}$ multiplicative factor is inherent, for all reasonable range of $\delta$.
\begin{theorem}[Informal version of Theorem~\ref{theo:composition-lower-bound}]\label{theo:intro-composition-lb}
There is a workload matrix $A\in \{\pm 1\}^{Q\times \mathcal{X}}$ such that, for all reasonable choices of $0<\eps,\alpha,\delta<1$, any $(\eps,\delta)$-DP query releasing algorithm for $A$ requires 
$$
n\ge \Omega\left( \frac{\sqrt{\log(|\mathcal{X}|) \log(1/\delta)} \log(Q)}{\eps\alpha^2}\right)
$$
samples to answer $Q$ queries to within $\ell_\infty$ error of $\alpha$.
\end{theorem}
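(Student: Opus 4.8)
The plan is to prove the lower bound by composing two ingredients: (i) a ``base'' fingerprinting-type lower bound for a small workload, obtained from the new exponential-tilt framework, which already carries the correct $\sqrt{\log(1/\delta)}$ dependence on a constant-size universe; and (ii) a composition/amplification step that tensors the base workload against itself enough times to produce the $\sqrt{\log|\mathcal{X}|}$ and $\log Q$ factors, in the spirit of the product construction of~\cite{BunUV14}. First I would fix the base problem: answering $d$ counting queries on a universe of size $O(1)$ (or a constant-dimensional workload $A_0 \in \{\pm 1\}^{d\times \mathcal{X}_0}$) under $(\eps,\delta)$-DP, and invoke the framework's direct correlation analysis to show that any such algorithm needs $n_0 = \Omega(\sqrt{d\log(1/\delta)}/(\eps\alpha))$ samples for $\ell_\infty$ error $\alpha$ — this is exactly the place where the exponential-tilt distribution family plus the Stokes-theorem bound on the score let one avoid the loss that the hyper-rectangle parametrization of~\cite{BunUV14} incurred. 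I would set the correlation/fingerprinting statistic to be $\sum_i \langle M(D) - q(D), x_i - \mu\rangle$ for the tilted mean $\mu$, bound its expectation from below by group privacy/accuracy, and bound its variance (or its conditional increments) from above using the score control; the ratio then forces $n_0$ large, with the $\log(1/\delta)$ appearing through the $(\eps,\delta)$ group-privacy blow-up over $\approx \log(1/\delta)$ steps exactly as in the standard fingerprinting argument.

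Next I would perform the composition. Take $T = \Theta(\log|\mathcal{X}|)$ independent copies of the base universe so that the product universe has size $|\mathcal{X}_0|^T = \mathrm{poly}(|\mathcal{X}|)$, and form the workload that, on the product universe, asks for each coordinate block the $d$ base queries restricted to that block (together with whatever ``selector'' queries the composition lemma needs). An algorithm answering the product workload to $\ell_\infty$ error $\alpha$ yields, by a hybrid/averaging argument, an algorithm answering the base workload on one of the $T$ blocks to error $\alpha$ while the data on the other blocks acts as independent ``noise''; a standard reduction shows that the per-block algorithm is still $(\eps,\delta)$-DP and still accurate, hence needs $n \ge n_0$ samples on that block. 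Iterating the recursion $\log Q$ times (doubling the number of queries at each level, which is where the $\log Q$ factor is gained, again following the recursive structure in~\cite{BunUV14, SteinkeU15}) multiplies the sample lower bound by $\log Q$, yielding $n = \Omega(\sqrt{\log|\mathcal{X}|\,\log(1/\delta)}\,\log Q/(\eps\alpha^2))$; one of the two $1/\alpha$ factors comes from the base bound's $1/\alpha$ and the other from the accuracy-budget split across recursion levels.

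The main obstacle I anticipate is \emph{propagating the $\sqrt{\log(1/\delta)}$ factor cleanly through the composition without it degrading to $\log(1/\delta)$ or being swallowed by the other logarithmic factors}. In~\cite{BunUV14} the $\delta$ dependence enters only in the final base step, and the recursion is agnostic to it; here I need the base bound itself to be $\sqrt{\log(1/\delta)}$-tight for a \emph{constant-size} universe, which is precisely what the standard fingerprinting-code construction (whose codeword length is governed by $\log(1/\delta)$ and the number of queries simultaneously) does \emph{not} give, and what the new framework is designed to fix — so the crux is verifying that the exponential-tilt family on a constant-dimensional polytope admits a ``fingerprinting lemma'' whose quantitative trade-off is $d$-by-$\log(1/\delta)$ separable. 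A secondary technical point is ensuring the composition reduction preserves $(\eps,\delta)$-DP with the \emph{same} $\delta$ (not $\delta/T$), which should follow from the fact that the reduction only post-processes and restricts the input rather than composing mechanisms; and finally one must check that the ``reasonable range of $\delta$'' hypothesis (say $\delta \le 1/n^{1+\Omega(1)}$ or $\delta$ not exponentially small) is exactly the range in which the base bound is valid, so that the composed statement holds for all such $\delta$. I would carry out the base bound in full detail first, since everything else is a faithful adaptation of known composition machinery, and then state the composition lemma and verify the privacy/accuracy bookkeeping.
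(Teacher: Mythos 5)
There is a genuine gap, and it sits exactly where you flagged your ``main obstacle'': the paper does \emph{not} prove this theorem by an iterative black-box composition of a base lower bound, precisely because propagating the $\sqrt{\log(1/\delta)}$ factor through such a composition is the step nobody knows how to do. Instead, the paper builds a single tensor-product instance $K=\{e^i\otimes u^j\otimes v : i\in[m],\, j\in[k],\, v\in\{\pm1\}^d\}$ with $m=\Theta(\log Q)$, $k=\Theta(1/\alpha^2)$, $d=\Theta(\log|\calX|)$, puts a \emph{type-conditioned} exponential tilt on it (uniform over $(e^i,u^j)$, tilted only in $v$), and runs one correlation analysis on the whole object: a single divergence-theorem computation over an $\ell_1$-ball of parameters gives a total score lower bound of $\Omega(kd)$, and a single per-sample score upper bound of $O(\sqrt{d\log(1/\delta)}/m)$ is obtained by applying the group-privacy reduction of~\cite{SteinkeU16} to the \emph{entire} composed problem (reducing to ruling out a $(\log(1/\delta),\sqrt{\delta})$-DP algorithm) together with a subgaussian tail bound on the score of a fresh point. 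The $\log Q$ and $1/\alpha^2$ factors are not gained by recursion levels; they come from a reconstruction/VC argument over the $2^m$ predicates $h$ and from the $k$ orthogonal vectors $u^j$, folded into a single ``surgery'' (reconstruction plus $\ell_1$-projection) that converts $\ell_\infty$-accurate query answers into a good estimate of $\mu_\theta$. Your plan defers the crux to ``verify the base fingerprinting lemma is $\sqrt{\log(1/\delta)}$-tight and hope the recursion is agnostic to it,'' but the recursion is not agnostic: the group-privacy trick divides the sample size by $\log(1/\delta)/\eps$, and the reduced instance must still have enough samples that sample-accuracy implies generalization (this is Remark~\ref{remark:low-sample-no-generalization}, and it is what pins down the admissible range of $\delta$). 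A level-by-level composition would have to re-win this factor at every level or lose it entirely.

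A second, more concrete error: your proposed base bound --- $n_0=\Omega(\sqrt{d\log(1/\delta)}/(\eps\alpha))$ for $d$ queries over a universe of size $O(1)$ --- cannot hold. Over a constant-size universe the data distribution is a point in a constant-dimensional simplex; one can privately estimate the full histogram with $O_{\eps,\delta}(1/\alpha^2)$ samples and then answer \emph{any} number of counting queries exactly from it, so no lower bound growing with $d$ is possible. The $\sqrt{d}$ hardness in all fingerprinting arguments (including this paper's) requires the universe to contain a $d$-dimensional hypercube, i.e.\ $|\calX_0|\ge 2^d$; this is exactly why the final bound carries $\sqrt{\log|\calX|}$ rather than $\sqrt{Q}$. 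If you want to salvage your outline, the right base object is the one-way-marginal instance over $\{\pm1\}^d$ with $d=\Theta(\log|\calX|)$, and the ``composition'' must be done as a single joint correlation analysis on the tensor-product set rather than as a reduction chain.
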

We can take $\delta \approx \frac{1}{\mathcal{X}}$ in Theorem~\ref{theo:intro-composition-lb}, and recover the best-known lower bounds against pure-DP algorithms \cite{Hardt11-thesis}.

A similar improvement of $\sqrt{\log(1/\delta)}$ can be shown for the class of two-way marginal queries. Like previous works, an average-error ($\ell_2^2$-metric) lower bound can be proved in this case.
\begin{theorem}[Informal version of Theorem~\ref{theo:two-way-marginal-lb}] \label{theo:intro-two-way-lb}
Let $A\in \{\pm 1\}^{d^2\times 2^d}$ be the two-way marginal query matrix on $d$ attributes. For all reasonable choices of $0 < \eps,\alpha,\delta < 1$, any $(\eps,\delta)$-DP query releasing algorithm for $A$ requires 
$$
n \ge \Omega\left( \frac{\sqrt{d \log(1/\delta)}}{\alpha^2 \eps} \right)
$$
samples to answer $d^2$ queries to within $\ell_2^2$ error of $\alpha^2 d^2$.
\end{theorem}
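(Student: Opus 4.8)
The plan is to run the $(\eps,\delta)$ fingerprinting/correlation machinery behind \cref{theo:intro-composition-lb}, but instantiated on the two-way marginal workload, where answering the queries on a product distribution reduces to estimating a rank-structured second-moment matrix. Take the universe to be $\calX=\{\pm 1\}^d$ and, for a parameter vector $p$, let $\calD_p$ be the product distribution on $\calX$ with $\mathbb{E}_{x\sim\calD_p}[x_j]=p_j$. For $j\ne k$ the true answer to the marginal query $(j,k)$ is exactly $p_jp_k$, while the $j=k$ entries are deterministic and may be discarded; hence an algorithm $\calA$ achieving $\ell_2^2$ error at most $\alpha^2 d^2$ must, on $n$ i.i.d.\ draws from $\calD_p$, output a matrix $M$ with $\sum_{j\ne k}(M_{jk}-p_jp_k)^2\le\alpha^2 d^2$, i.e.\ a Frobenius-close estimate of $pp^{\top}$. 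I would draw $p$ from a product prior supported on $[-\beta,\beta]^d$ for a suitable $\beta$; this makes $\{\calD_p\}_p$ an exponential family, so the exponential-tilt/score analysis of the framework applies directly, once one checks the score is uniformly bounded for the chosen prior.

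The core of the argument is the usual fingerprinting dichotomy, applied coordinatewise. For each $i\in[d]$ define a correlation score $Z_i$ measuring the alignment of the $i$-th row $M_{i,\cdot}$ of the released matrix with the secret vector $p$, reweighted by the centered empirical mean of coordinate $i$ over the sample, together with the resampling correction term --- precisely the score used for \cref{theo:intro-composition-lb}. Integration by parts over the prior on $p_i$ (Stokes' theorem, exactly as in the general framework), combined with the accuracy guarantee --- which forces $M_{i,\cdot}\approx p_i p$ and hence $\langle M_{i,\cdot},p\rangle\approx p_i\|p\|_2^2$ for a typical coordinate $i$ --- lower bounds $\sum_{i=1}^d \mathbb{E}[Z_i]$, after an averaging step passing from the $\ell_2^2$ (average) accuracy bound to the statement that $\Omega(d)$ rows of $M$ are genuinely aligned with $p$. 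On the other side, $(\eps,\delta)$-differential privacy bounds each $\mathbb{E}[Z_i]$, and the key point is that the refined potential-function argument of the framework gives a bound carrying a $1/\sqrt{\log(1/\delta)}$ factor rather than the naive loss of $\delta n$. Comparing the two estimates, summing over the $d$ coordinates, and tracking how the per-entry accuracy $\alpha$ (which enters the $\ell_2^2$ budget quadratically) and $\beta$ propagate through the comparison yields $n=\Omega\!\big(\sqrt{d\log(1/\delta)}/(\eps\alpha^2)\big)$, the claimed bound.

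I expect the main obstacles to be twofold. First, converting the $\ell_2^2$ accuracy promise into a per-coordinate correlation lower bound: $\sum_{j\ne k}(M_{jk}-p_jp_k)^2\le\alpha^2 d^2$ only controls the average error, so one must argue via Markov that a constant fraction of the rows $M_{i,\cdot}$ are well aligned with $p$ and that the poorly estimated rows cannot cancel the aggregate score --- this is the step that distinguishes the average-error statement from an $\ell_\infty$ one and is the reason no $\log Q$ factor appears. Second, and more seriously, the sharpened $(\eps,\delta)$ bound on the per-coordinate score that extracts the extra $\sqrt{\log(1/\delta)}$ is the \emph{genuinely novel} ingredient; it relies on the framework's use of Stokes' theorem over a non-rectangular parameter domain together with a careful truncation/concentration argument on the score, and must be imported essentially verbatim from the proof of \cref{theo:intro-composition-lb} rather than re-derived by the classical accounting. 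A minor remaining point is handling the deterministic diagonal of the marginal matrix and confirming that the exponential tilt with the chosen prior meets the boundedness hypotheses of the framework.
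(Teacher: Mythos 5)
There is a genuine gap, and it is in the choice of hard instance rather than in the execution of the fingerprinting machinery. You take the universe to be $\{\pm 1\}^d$, put a product prior $p \in [-\beta,\beta]^d$ on the coordinate means, and ask the algorithm to estimate the rank-one matrix $(p_jp_k)_{j\ne k}$. This instance is information-theoretically too easy to yield the claimed $1/\alpha^2$ dependence: an adversary-free algorithm can privately estimate the one-way marginals $p$ to $\ell_2$ error $\gamma = \alpha\sqrt{d}$ using the Gaussian mechanism with $n = O\bigl(\sqrt{d\log(1/\delta)}/(\eps\alpha)\bigr)$ samples, and then output the outer product $\hat p\hat p^{\top}$, which satisfies $\|\hat p\hat p^{\top}-pp^{\top}\|_F \le (\|\hat p\|_2+\|p\|_2)\gamma = O(\alpha d)$, i.e.\ $\ell_2^2$ error $O(\alpha^2d^2)$. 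Since your distribution family admits an upper bound of $O\bigl(\sqrt{d\log(1/\delta)}/(\eps\alpha)\bigr)$, no correlation analysis on it can prove the target lower bound of $\Omega\bigl(\sqrt{d\log(1/\delta)}/(\eps\alpha^2)\bigr)$; you are short by a factor of $1/\alpha$. The same deficit shows up inside your own outline: with $\beta=\Theta(1)$ the divergence lower bound is $\Theta(d\|p\|_2^2)=\Theta(d^2)$ while the fresh-sample score has standard deviation $\Theta(d^{3/2})$, giving $n\gtrsim \sqrt{d}/\eps$ with no gain in $\alpha$ beyond normalization, and no choice of $\beta$ repairs this (shrinking $\beta$ only degrades the signal $p_jp_k\approx\beta^2$ relative to the per-entry error budget $\alpha$).

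The paper's proof (Section~\ref{sec:composition-two-way}) gets the extra $1/\alpha^2$ by \emph{composing} a reconstruction instance into the second factor rather than using a smooth $d$-dimensional parameter twice. Concretely, it fixes $k=\Theta(1/\alpha^2)$ pairwise orthogonal Boolean vectors $u^1,\dots,u^k\in\mathbb{R}^k$ and takes data points $u^j\otimes v$ with $v\in\{\pm1\}^d$; the distribution $D_\theta$ picks the ``type'' $u^j$ uniformly and applies the exponential tilt only to $v$, and the score is defined conditionally on the type (Proposition~\ref{prop:type-dependent-score-1}). The total score lower bound then carries the factor $k=\Theta(1/\alpha^2)$ (it is $\Omega(kd)$, not $\Omega(d)$), which is exactly the factor your instance cannot produce. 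A secondary point: the $\sqrt{\log(1/\delta)}$ gain is not a ``refined potential-function argument'' but the group-privacy reduction of Lemma~\ref{lemma:group-privacy-reduce-sample} (reduce to ruling out $(\log(1/\delta),\sqrt{\delta})$-DP algorithms on $n\eps/\log(1/\delta)$ samples) combined with the subgaussian score bound of Proposition~\ref{prop:privacy-to-small-score-subgaussian}; importing it requires checking that the reduced sample size still exceeds the generalization threshold, which constrains the admissible range of $\delta$. To fix your proof you would need to replace the product prior on one half of the attributes by a reconstruction-hard mixture of $\Theta(1/\alpha^2)$ orthogonal types, at which point you are essentially reproducing the paper's argument.
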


Theorem~\ref{theo:intro-two-way-lb} is also tight up to constants. For a matching upper bound, see, e.g.~\cite{DworkNT15}.

\paragraph*{The sample complexity of privately releasing random linear queries.}
Suppose there are $N$ types of users (i.e., the universe is of size $N$). A workload of $d$ linear queries can be described by a matrix $A\in [-1,1]^{d\times N}$ where $A_{i,j}$ is the contribution of a type-$j$ user to the $i$-th query. The private query release problem is to publish the counting queries defined by the rows of $A$, up to a small $\ell_2$ error. It is easy to see that this problem is equivalent to outputting the mean of $n$ points in $\Re^d$, where each point is constrained to be one of the columns of $A$.

Nikolov, Talwar and Zhang~\cite{NikolovTZ13} studied this problem and showed that for every $A$, a carefully chosen Gaussian mechanism is within $\polylog(d)$ of the optimal when $n\ge d$. For smaller $n$, they showed that projecting the output of a Gaussian mechanism to the convex hull of the columns of $A$ is within $\polylog(d,N)$ of the optimal. Removing the $\polylog(N)$ dependence here was open, and it would have been reasonable to conjecture that the lower bound should be improved. Using our framework, we show that for random $A \in \{-1,1\}^{d\times N}$, the lower bound can indeed be improved for a range of small-error settings.

\begin{theorem}[Informal version of Theorem~\ref{theo:random-query-lb}]\label{theo:intro-random-lb}
Let $N,d\ge 0$ be two integers such that $d\ll N\ll 2^d$. With probability $1-o(1)$ over a random matrix $A\in\{\pm 1\}^{d\times N}$, the following is true: for all $\alpha < o\left( \frac{\sqrt{\log(N)}}{\sqrt{d}} \right)$, any approximate-DP algorithm $\calA$ for query releasing on workload matrix $A$ needs at least $\Omega\left( \frac{\sqrt{d}}{\alpha} \right)$ samples to achieve $\ell_2^2$-error of $\alpha^2 d$.
\end{theorem}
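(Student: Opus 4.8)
The plan is to instantiate the paper's exponential-tilt/Stokes framework on the mean-estimation reformulation of the problem. As already observed, releasing the workload $A$ to $\ell_2^2$-error $\alpha^2 d$ is equivalent to privately outputting $\hat x\in\mathbb{R}^d$ with $\mathbb{E}\|\hat x-\tfrac1n\sum_i v_{d_i}\|_2^2\le\alpha^2 d$, where each record $v_{d_i}$ is one of the columns $v_1,\dots,v_N\in\{\pm1\}^d$. For the hard instance I would take the exponential family $p_\theta(j)\propto e^{\langle\theta,v_j\rangle}$ on $[N]$, draw $\theta$ from the uniform prior $\pi$ on the Euclidean ball of radius $R$, and let the dataset be $n$ i.i.d.\ draws from $p_\theta$. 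Writing $\mu(\theta)=\mathbb{E}_{j\sim p_\theta}[v_j]=\nabla\log Z(\theta)$ and $\Sigma_\theta=\mathrm{cov}_{j\sim p_\theta}(v_j)=\nabla^2\log Z(\theta)$, the score of this family is $v_j-\mu(\theta)$, which makes $Z_i=\langle\hat x-\mu(\theta),\,v_{d_i}-\mu(\theta)\rangle$ the natural ``tracing'' statistic. For random $A$ with $d\ll N$ one has $\tfrac1N AA^\top=(1\pm o(1))I_d$, so on the support of $\pi$ the model behaves like the product $\pm1$ family: $\mu(\theta)\approx\theta$ and $\det\nabla\mu(\theta)=\det\Sigma_\theta\approx1$, so the reparametrization is benign.

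The completeness side uses the score identity $\mathbb{E}_{D\mid\theta}\big[f(D)\sum_i(v_{d_i}-\mu(\theta))\big]=\nabla_\theta\,\mathbb{E}_{D\mid\theta}[f(D)]$, which gives $\mathbb{E}_{D\mid\theta}\big[\sum_i Z_i\big]=\operatorname{div}_\theta\bar{\hat x}(\theta)$ with $\bar{\hat x}(\theta)=\mathbb{E}_{D\mid\theta}[\hat x]$. Population accuracy forces $\|\bar{\hat x}(\theta)-\mu(\theta)\|_2\le\alpha\sqrt d$ pointwise by Jensen, so $\operatorname{div}_\theta\bar{\hat x}=\operatorname{tr}\Sigma_\theta+\operatorname{div}_\theta(\bar{\hat x}-\mu)$; integrating against $\pi$ and applying Stokes' theorem to the second term gives $\mathbb{E}_{\theta\sim\pi}\big[\sum_i Z_i\big]\ge\mathbb{E}_\theta[\operatorname{tr}\Sigma_\theta]-\alpha\sqrt d\cdot(\text{boundary mass of }\pi)$. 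With $\pi$ uniform on a radius-$R$ ball the correction is $O(\alpha\sqrt d\cdot d/R)$, so $R\gg\alpha\sqrt d$ makes it $o(d)$; and for random $A$ the weighted Gram matrix $\Sigma_\theta$ stays $(1\pm o(1))I_d$ throughout the ball, yielding $\mathbb{E}_{\theta\sim\pi}[\sum_i Z_i]\ge\Omega(d)$. Crucially, the isotropy of $\Sigma_\theta$ holds only while the tilt is mild, i.e.\ $R\lesssim\sqrt{\log N}$ (the columns satisfy $\max_j|\langle\theta,v_j\rangle|\approx R\sqrt{\log N}$, and $N\ll2^d$ keeps the columns generic); reconciling $R\gg\alpha\sqrt d$ with $R\lesssim\sqrt{\log N}$ is exactly what forces $\alpha<o(\sqrt{\log N}/\sqrt d)$.

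For the soundness side, replace $Z_i$ by its truncation $\hat Z_i=\operatorname{clip}(Z_i,[-B,B])$ at level $B=\Theta(d/n)$ (up to lower-order $\operatorname{polylog}$ factors). Treating $\eps,\delta$ as small constants, comparing the dataset with the one in which record $i$ is resampled — where the resampled statistic has mean exactly $0$ and, crucially, no ``self-term'' $\tfrac1n\|v_{d_i}-\mu\|^2$ — gives $|\mathbb{E}[\hat Z_i]|\le O((\eps+\delta)B)+(\text{tiny truncation slack})$, hence $\big|\mathbb{E}[\sum_i\hat Z_i]\big|\le O((\eps+\delta)d)$. Against this we need $\sum_i\mathbb{E}[|Z_i-\hat Z_i|]=o(d)$. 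Now $Z_i=\langle\bar v-\mu,v_{d_i}-\mu\rangle+\langle\hat x-\bar v,v_{d_i}-\mu\rangle$: the first term equals $\tfrac1n\|v_{d_i}-\mu\|^2+(\text{cross terms})=d/n\cdot(1\pm o(1))+O(\sqrt{d/n})$ by random-matrix concentration ($\|v_{d_i}-\mu\|^2=d\pm O(\log N)$ very tightly), while the second has typical size $\approx\|\hat x-\bar v\|_2\le\alpha\sqrt d$ because the columns are ``spread'' (sub-Gaussian inner products). When $n\ll\sqrt d/\alpha$ both fluctuation scales $\sqrt{d/n}$ and $\alpha\sqrt d$ are $\ll d/n=\Theta(B)$, so $|Z_i|$ concentrates tightly around $B$, $\hat Z_i\approx Z_i$, and $\sum_i\mathbb{E}[|Z_i-\hat Z_i|]=o(d)$. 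Combining with the completeness bound $\Omega(d)$ contradicts $|\mathbb{E}[\sum_i\hat Z_i]|\le O((\eps+\delta)d)$; hence no such $n$ exists, i.e.\ $n=\Omega(\sqrt d/\alpha)$. (All the required properties of $A$ — the Gram bound, column norms, and anti-concentration — hold with probability $1-o(1)$ over $A$.)

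The main obstacle is the truncation-error control in the soundness step. The self-term forces $B=\Theta(d/n)$ (clipping below the signal destroys the completeness bound), and one must then argue that $|Z_i|$ really does concentrate around $d/n$. The dangerous term is the algorithm part $\langle\hat x-\bar v,v_{d_i}-\mu\rangle$: since $\hat x=\mathcal{A}(D)$ depends on record $i$, one cannot simply fix $\hat x-\bar v$ and average over $v_{d_i}$, and an adversarial DP algorithm might align $\hat x-\bar v$ with $v_{d_i}-\mu$, blowing this term up from $\alpha\sqrt d$ to $\alpha d\gg B$. The fix is to invoke $(\eps,\delta)$-DP a second time — passing to the record-$i$-resampled output decouples $\hat x$ from $v_{d_i}$ and restores a genuine sub-Gaussian$(\|\hat x-\bar v\|_2)$ tail — but making this quantitative and uniform over all valid outputs (and controlling the rare event that $\mathcal{A}$ is inaccurate, which requires upgrading ``accurate in expectation'' to ``accurate with probability $1-o(1/n)$'') is the delicate technical core. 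Establishing $\Sigma_\theta=(1\pm o(1))I_d$ for all $\|\theta\|_2\le R$ with $R$ pushed up to $\approx\sqrt{\log N}/\operatorname{polylog}$ — which is what makes the claimed $\alpha$-range sharp — is a secondary but nontrivial random-matrix computation.
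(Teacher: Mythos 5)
Your setup (exponential tilt on the columns, uniform prior on a ball of radius $R\approx\sqrt{\log N}$, re-centered tracing statistic $\langle \calA(x)-\mu_\theta,\,v-\mu_\theta\rangle$, Stokes for the completeness side, random-matrix control of the tilted covariance) is the paper's framework, and your completeness argument is essentially the paper's: writing $\diver\,\bar{\hat x}=\operatorname{tr}\Sigma_\theta+\diver(\bar{\hat x}-\mu)$ and pushing only the error term to the boundary is a legitimate variant of the paper's surface-integral-plus-expanding-property route (the paper instead lower-bounds $\E_{\theta\sim S}\langle\mu_\theta,\theta/\|\theta\|_2\rangle$ via anti-concentration of Rademacher sums), and both isolate the constraint $\alpha=o(\sqrt{\log N}/\sqrt d)$ at the same place.

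The gap is in your soundness step, and it is exactly the part you flag as the ``delicate technical core.'' You truncate $Z_i$ at $B=\Theta(d/n)$ and then use only the \emph{range} of $\hat Z_i$ in the privacy comparison, getting $|\E[\hat Z_i]|\le O((\eps+\delta)B)$ per record and $O((\eps+\delta)d)$ in total. This has two consequences. First, the contradiction with the $\Omega(d)$ completeness bound is then independent of $n$ and only goes through when $\eps$ is below an unspecified small constant; the entire $n$-dependence of your conclusion is forced to enter through the truncation-error condition $n\ll\sqrt d/\alpha$, which is why you are then stuck proving that $|Z_i|$ concentrates near $d/n$ for \emph{in-sample} records --- a statement that requires decoupling $\calA(x)$ from $v_{d_i}$ and that you correctly identify as unresolved. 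Second, you lose the $1/\eps$ factor that the formal theorem carries. The missing ingredient is the paper's Proposition~\ref{prop:privacy-imply-small-score-difference}: if the fresh-sample statistic $X=\langle v'-\mu_\theta,\calA(x)-\mu_\theta\rangle$ has mean zero and variance $b^2$, then any $(\eps,\delta)$-indistinguishable $X'$ satisfies $\E[X']\le O((e^\eps-1)b+B\delta)$ --- i.e., the privacy comparison should be charged against the \emph{standard deviation} of the fresh score, not its range. Since $\Var[X]\le\lambda_{\max}(\COV(D_\theta))\cdot\E\|\calA(x)-\mu_\theta\|_2^2=O(\alpha^2 d)$ (regularity of random $A$ plus accuracy, plus the padding reduction that lets one assume $n\ge 1/\alpha^2$ so the sampling error $\E\|\mu_\theta-\bar v\|_2^2$ is also $O(\alpha^2 d)$), this gives $\E[Z_i]\le O(\eps\alpha\sqrt d+\delta d)$ per record with no truncation at all; summing over $n$ records and comparing with $\Omega(d)$ yields $n\ge\Omega(\sqrt d/(\eps\alpha))$ directly. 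In short: the re-centering by $\mu_\theta$ is not just a cosmetic shift --- its purpose is to shrink the fresh-score variance to $O(\alpha^2 d)$ so that the variance-based indistinguishability bound does all the work, and once you use it the clipping, the self-term bookkeeping, and the second invocation of DP all become unnecessary.
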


Theorem~\ref{theo:intro-random-lb} discusses the complexity of query releasing from a sample-complexity perspective. Equivalently, we can approach the problem from an error-complexity perspective and write Theorem~\ref{theo:intro-random-lb} in an equivalently way: any private algorithm with $n\ge \frac{d}{\sqrt{\log N}}$ samples incurs an error of $\alpha \ge \Omega(\frac{\sqrt{d}}{n})$, implying that the Gaussian mechanism is optimal when $n\ge \frac{d}{\sqrt{\log N}}$. Previously, a lower bound of the form $\frac{\sqrt{d}}{n}$ was only known for $n \ge d$ via hereditary discrepancy \cite{MuthukrishnanN12, NikolovTZ13}.

For a much smaller $n$, we do not expect $\frac{\sqrt{d}}{n}$ to be a lower bound. Indeed, the projection mechanism of \cite{NikolovTZ13} gives an upper bound of $\tilde{O}(\frac{(\log N)^{1/4}}{\sqrt{n}})$ for the whole range of $n$, which outperforms the Gaussian mechanism (whose error is $\frac{\sqrt{d}}{n}$) when $n \le \frac{d}{\sqrt{\log N}}$. In light of Theorem~\ref{theo:intro-random-lb}, the threshold $\frac{d}{\sqrt{\log N}}$ does not appear to be coincidental: it is reasonable to conjecture that $\Omega(\frac{(\log N)^{1/4}}{\sqrt{n}})$ serves as a lower bound for $n\le \frac{d}{\sqrt{\log N}}$, meeting with the $\Omega(\frac{\sqrt{d}}{n})$ lower bound at $\frac{d}{\sqrt{\log N}}$. For some classes of structured query sets (e.g., two-way marginals), the conjecture has been affirmatively confirmed \cite{BunUV14}. See also Theorem~\ref{theo:intro-two-way-lb} and the right part of Figure~\ref{figure:compare-random-with-worst}.

Surprisingly, though, we show that it is the \emph{upper bound} that can be improved for smaller $n$ (corresponding to larger $\alpha)$. In particular, for all $n < \frac{d}{\log N}$, there is an error upper bound of $O(\frac{1}{\sqrt{n}})$.

\begin{theorem}[Informal version of Theorem~\ref{theo:query-release-via-histogram}]\label{theo:intro-random-ub}
Let $N,d\ge 0$ be two integers such that $d\ll N\ll 2^d$. With probability $1-o(1)$ over a random matrix $A\in\{\pm 1\}^{d\times N}$ the following is true: for all $\alpha \in \left(\omega(\frac{\sqrt{\log(N)}}{\sqrt{d}}), 1\right)$, there is an approximate-DP algorithm for query releasing on workload $A$. The algorithm uses $O(\frac{1}{\alpha^2})$ samples and achieves $\ell_2^2$ error of $\alpha^2 d$.
\end{theorem}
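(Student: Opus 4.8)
The plan is to exploit the reformulation already noted in the text: a dataset $D$ of $n$ users over $[N]$ is the same object as an empirical type-histogram $h\in\mathbb{Z}_{\ge 0}^N$ with $\|h\|_1=n$ and, crucially, $\|h\|_0\le n$, and the vector of true answers to the workload is $A\widehat p$ with $\widehat p=h/n$. So it suffices to release, under $(\eps,\delta)$-DP, some $y$ with $\|y-A\widehat p\|_2^2\le\alpha^2 d$. The algorithm I would use is the standard stability-based sparse-histogram mechanism (a staple of the private-heavy-hitters literature) on the multiset of types: for each type $x$ occurring in $D$ form a noisy count $\widetilde h(x)=h(x)+\mathrm{Lap}(1/\eps)$, zero out every $x$ whose noisy count falls below a threshold $\tau=\Theta(\log(1/\delta)/\eps)$, and output $y:=A\widetilde h/n$. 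This is $(\eps,\delta)$-DP because adding or removing one user changes one bin's count by $1$ or introduces a fresh count-$1$ bin that survives the threshold only with probability $O(\delta)$, and $y=A\widetilde h/n$ is post-processing. Note $\operatorname{supp}(\widetilde h)\subseteq\operatorname{supp}(h)=:S$, so $|S|\le n$.

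For the error I would decompose $e:=\widetilde h/n-\widehat p=e_{\mathrm{noise}}+e_{\mathrm{thr}}$, where $e_{\mathrm{noise}}$ is supported on the released bins and equals the Laplace noise divided by $n$ there, and $e_{\mathrm{thr}}$ is supported on the thresholded-out bins and equals $-h(x)/n$ there. Since $\operatorname{supp}(e)\subseteq S$ we get $\|y-A\widehat p\|_2=\|A_S e\|_2\le\|A_S\|_{\mathrm{op}}\,(\|e_{\mathrm{noise}}\|_2+\|e_{\mathrm{thr}}\|_2)$, where $A_S$ is $A$ restricted to the columns in $S$. For $e_{\mathrm{noise}}$: dominating the restricted noise vector by the full $|S|$-dimensional i.i.d.\ Laplace vector and using that a sum of $|S|\le n$ squared Laplaces has expectation $O(n/\eps^2)$ gives $\|e_{\mathrm{noise}}\|_2=O(1/(\eps\sqrt n))$ except with small constant probability. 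For $e_{\mathrm{thr}}$: every thresholded-out bin has count at most $2\tau$ and the thresholded counts sum to at most $n$, so $\|e_{\mathrm{thr}}\|_2^2=\frac{1}{n^2}\sum_{x\ \mathrm{thr}}h(x)^2\le\frac{2\tau}{n}$, hence $\|e_{\mathrm{thr}}\|_2=O(\sqrt{\log(1/\delta)/(\eps n)})$. Thus $\|y-A\widehat p\|_2\le\|A_S\|_{\mathrm{op}}\cdot O\!\big(\tfrac{1}{\eps\sqrt n}+\sqrt{\tfrac{\log(1/\delta)}{\eps n}}\big)$.

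The last ingredient is a uniform bound $\|A_S\|_{\mathrm{op}}\le C\sqrt d$ over all small column subsets $S$. Let $n_{\max}=\Theta(\log(1/\delta)/(\eps^2\alpha_{\min}^2))$ be the largest sample size used over the stated range of $\alpha$, where $\alpha_{\min}$ is the infimum of that range; since $\alpha_{\min}=\omega(\sqrt{\log N/d})$ (treating $\eps,\delta$ as constants), $n_{\max}=o(d/\log N)$, i.e.\ $n_{\max}\log N=o(d)$. For fixed $S$ with $|S|\le n_{\max}\le d$, $A_S$ is a $d\times|S|$ random sign matrix, so $\mathbb{E}\|A_S\|_{\mathrm{op}}\le C_0(\sqrt d+\sqrt{|S|})\le 2C_0\sqrt d$ with subgaussian concentration, giving $\Pr[\|A_S\|_{\mathrm{op}}\ge 3C_0\sqrt d]\le e^{-\Omega(d)}$. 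A union bound over the at most $e^{O(n_{\max}\log N)}=e^{o(d)}$ such subsets shows that with probability $1-o(1)$ over $A$, every $S$ with $|S|\le n_{\max}$ satisfies $\|A_S\|_{\mathrm{op}}\le C\sqrt d$ for an absolute constant $C$. On this event, for each $\alpha$ in range I take $n=\Theta(\log(1/\delta)/(\eps^2\alpha^2))=O(1/\alpha^2)$, and the displayed bound becomes $\le\alpha\sqrt d$, which is the claim.

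I expect the main obstacle to be twofold. The data-dependent support of the released histogram at first seems to obstruct a clean second-moment estimate of $A\,e_{\mathrm{noise}}$; I would get around this by dominating the restricted-support noise vector by the full i.i.d.\ vector before passing to operator norms. The more essential point is the uniform operator-norm bound: one can afford a union bound over all $n_{\max}$-element column subsets of a random sign matrix precisely because $n_{\max}\log N=o(d)$, and this is exactly where the hypothesis $\alpha=\omega(\sqrt{\log N}/\sqrt d)$ enters — it is what confines the improvement to that range of $\alpha$.
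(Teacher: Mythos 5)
Your proposal is correct and follows the same overall strategy as the paper's proof of Theorem~\ref{theo:query-release-via-histogram}: privatize the type histogram with a sparse-histogram mechanism, observe that the error vector lives on a small set of columns, and use a structural property of random sign matrices (provable by a union bound over all small column subsets precisely because $n\log N = o(d)$) to convert the histogram error into an $\ell_2$ query error of $\alpha\sqrt d$. The details differ in two places. First, the paper uses a truncated-Laplace histogram with a projection step (its Lemma~\ref{lemma:private-sparse-histogram}), which yields $\|\hat x - x\|_\infty \le O(\log(1/\delta)/\eps)$ \emph{with probability one} and hence a deterministic error guarantee, whereas your stability-based thresholding mechanism gives the error bound only in expectation or with high probability; this is a cosmetic difference for the informal statement, though the formal theorem claims a probability-one guarantee. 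Second, the error-transfer step is genuinely different: the paper decomposes $\hat x - x$ (which has $\ell_\infty$ norm $\le v$ and $\ell_1$ norm $\le 2n$) into a convex combination of $\pm v\cdot e_J$ with $|J|\le 2n/v$ and invokes Lemma~\ref{lemma:random-query-orthogonal} (column sums of $k$-subsets have norm $O(\sqrt{kd})$, via Hanson--Wright), while you bound $\|A_S\|_{\mathrm{op}}\le C\sqrt d$ uniformly over all subsets $S$ of size $\le n_{\max}$ and pair it with an $\ell_2$ bound on the error vector. Your route needs a (slightly stronger, but equally standard) operator-norm bound on $d\times k$ random sign submatrices in place of the column-sum bound, and in exchange it cleanly handles arbitrary sign patterns of the error, a point the paper's convex-decomposition step glosses over (the extreme points of the $\ell_1$--$\ell_\infty$ polytope have mixed signs, so the paper implicitly uses the sign-symmetry of the random columns).

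One step you should tighten: the claim that ``every thresholded-out bin has count at most $2\tau$'' is false with unbounded Laplace noise, since a heavy bin can be thresholded out when its noise is very negative. You need to either condition on the event that all $|S|\le n$ noise variables have magnitude at most $\tau$ (which fails with probability at most $n e^{-\eps\tau}=n\,\delta^{\Theta(1)}$, and on the failure event the trivial error bound $O(d)$ contributes negligibly to the expected squared error), or replace the noise by truncated Laplace as the paper does. Either fix is routine.
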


\begin{figure}[t]
  \begin{subfigure}{0.4\textwidth}
    \centering
\begin{tikzpicture}
%
  \draw[->] (0,0)--(5,0);
  \node[below] at (2.5,-1) {Inverse Accuracy  $\alpha^{-1}$};
  \draw[->] (0,0)--(0,4);
  \node[above, rotate=90] at (-0.5,2) {Sample Complexity $n$};
%
  \draw[-,very thin] (0.05,-0.05)--(0.05,0.05);
  \node[left] at (0.5,-0.5) {\small{$\alpha=1$}};
  \draw[-,very thin] (2,-0.05)--(2,0.05);
  \node[left] at (3.5,-0.5) {\small{$\alpha = \frac{\sqrt{\log |\calX|}}{\sqrt{d}}$}};
%
  \draw[black] (0,0)--(2,2);
  \draw[black] (2,3)--(4,4);
 \draw[->, blue] (3.5,1.5) -- (1.6, 1.5);
   \node[right] at (3.5, 1.5)  {\small{$n = \tilde{\Theta}(\frac{1}{\alpha^2})$}};
 \draw[->, blue] (3.5,3) -- (3,3.4);
  \node[right] at (3.5,3) {\small{$n = \tilde{\Theta}(\frac{\sqrt{d}}{\alpha})$}};
  \end{tikzpicture}
\end{subfigure}
 \hspace{0.15\textwidth}
\begin{subfigure}{0.4\textwidth}
   \centering
\begin{tikzpicture}
  \draw[->] (0,0)--(5,0);
  \node[below] at (2.5,-1) {Inverse Accuracy  $\alpha^{-1}$};
  \draw[->] (0,0)--(0,4);
  \node[above, rotate=90] at (-0.5,2) {Sample Complexity $n$};
%
  \draw[-,very thin] (0.05,-0.05)--(0.05,0.05);
  \node[left] at (0.5,-0.5) {\small{$\alpha=1$}};
  \draw[-,very thin] (2,-0.05)--(2,0.05);
  \node[left] at (3.5,-0.5) {\small{$\alpha = \frac{\sqrt{\log |\calX|}}{\sqrt{d}}$}};
%
  \draw[black] (0,1)--(2,3);
  \draw[black] (2,3)--(4,4);
 \draw[->, blue] (3.5,1.5) -- (1.6, 2.5);
   \node[right] at (3.5, 1.5)  {\small{$n = \tilde{\Theta}(\frac{\sqrt{\log |\calX|}}{\alpha^2})$}};
 \draw[->, blue] (3.5,3) -- (3,3.4);
  \node[right] at (3.5,3) {\small{$n = \tilde{\Theta}(\frac{\sqrt{d}}{\alpha})$}};
    \end{tikzpicture}
\end{subfigure}
    \caption{Behavior of sample complexity vs.~error trade-off for $d$ random linear queries (left) and worst-case queries (right) over a universe $\calX$ ($\log$-$\log$ scale). The sample complexity for random queries is discontinuous at $\alpha \approx \frac{\sqrt{\log |\calX|}}{\sqrt{d}}$. The dependence on the privacy parameters and $\log d$ terms are suppressed for clarity.}
    \label{figure:compare-random-with-worst}
\end{figure}
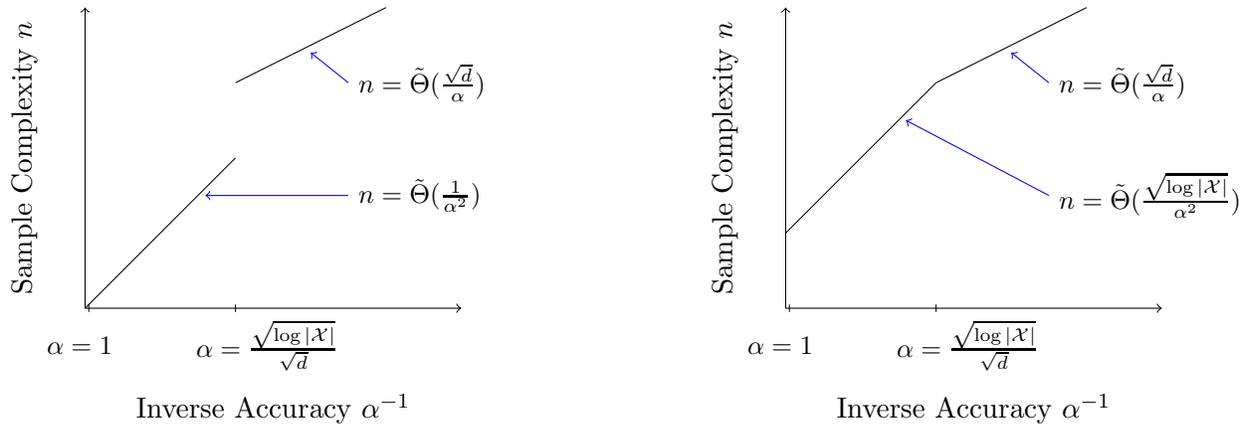

For $n < d$, the reconstruction argument of \cite{DinurN03,MuthukrishnanN12} gives an error lower bound of $\frac{1}{\sqrt{n}}$. Combining Theorems~\ref{theo:intro-random-lb}, \ref{theo:intro-random-ub} and known results, we have thus completely determined the sample-error trade-off for random queries (see left part of Figure~\ref{figure:compare-random-with-worst}). This shows a surprising discontinuity in the sample complexity as a function of $\alpha$. Notably, for the closely related notion of concentrated DP, existing bounds~\cite{BlasiokBNS19} show that a dependency on $\log |\calX|$ is necessary even for large $\alpha$. This also shows that (once again in contrast to other privacy notions), random queries are not the worst case: as mentioned above, for worst-case queries the $\sqrt{\log |\calX|}\log Q / \alpha^2$ dependence holds for essentially the whole range of $\alpha$.
%
%
%
%

\paragraph*{Future directions.} Our work leaves several open research directions. We have demonstrated that this general framework can help prove new lower bounds in differential privacy and adaptive data analysis. 
Our work shows that fingerprinting tools can allow us to extend instance-specific lower bounds to a wider range of parameters for random queries. However, the low-accuracy regime can sometimes admit new algorithms, and we leave to future work a better understanding of the geometric properties that determine the sample complexity of a given query set. Adaptivity is a challenge beyond data analysis, e.g. in streaming algorithms~\cite{BenEliezerJWY22,HasidimKMMS20}, sampling~\cite{BenEliezerY20} and dynamic data structures~\cite{BeimelKMNSS22} and some of the tools developed in our work may help prove lower bounds for adaptivity in those settings as well.

\subsection{Related Work}
\label{sec:related}

Fingerprinting codes were proposed in cryptography by Boneh and Shaw~\cite{BonehS98}. Tardos~\cite{Tardos03} gave an optimal construction of these objects. Dwork, Naor, Reingold, Rothblum and Vadhan~\cite{DworkNRRV09} first used cryptographic traitor tracing schemes to prove lower bounds in DP, and Bun, Ullman and Vadhan~\cite{BunUV14} showed that information-theoretic fingerprinting codes constructions and their extensions imply strong lower bounds for query release under differential privacy. The ``fingerprinting lemma'' from~\cite{BunUV14} has been used in several lower bounds for other problems~\cite{DworkTTZ14, BassilyST14, SteinkeU16,SteinkeU17,NarayananME22,KamathMS22,Narayanan23, CaiWZ23,  PortellaH24, PeterTU24}. Dwork, Smith, Steinke, Ullman and Vadhan~\cite{DworkSSUV15} simplified the analysis of the fingerprinting attacks and gave a first-principles proof. Recent work by Cai, Wang and Zhang~\cite{CaiWZ23} generalized and formalized these ``score attacks'' for parameter estimation for a family of distributions, but their approach is still restricted to hyperrectangles (see~\cite{PortellaH24}). As discussed above, Kamath, Mouzakis and Singhal~\cite{KamathMS22} extended fingerprinting attacks to exponential families over hyper-rectangles, and Portella and Harvey~\cite{PortellaH24} first showed how to go beyond i.i.d. parameter distributions by using Stokes' theorem variants. Narayanan~\cite{Narayanan23} used the score attack framework to improve covariance lower bounds, and Peters, Tzafidia and Ullman~\cite{PeterTU24} extend the fingerprinting lemma to prove lower bounds in the weak accuracy regime.

Other general approaches to proving lower bounds include reconstruction~\cite{DinurN03, DworkMT07, DworkY08} and discrepancy approaches~\cite{MuthukrishnanN12, NikolovTZ13}; and information theoretic techniques proposed by Acharya, Sun and Zhang~\cite{AcharyaSZ21} that have been used in some recent works~\cite{KamathMS22, FeldmanMST24}.

The query release problem has been studied in many works, starting with Blum, Dwork, McSherry and Nissim~\cite{BlumDMN05}. This problem was first studied from a geometric instance-optimality viewpoint by Hardt and Talwar~\cite{HardtT10} for pure DP; by Nikolov, Talwar and Zhang~\cite{NikolovTZ13} for approximate DP; and by Blasiok, Bun, Nikolov and Steinke~\cite{BlasiokBNS19}  for Concentrated DP~\cite{DworkR16, BunS16CDP}. While these results are largely for $\ell_2$ error, bounds for $\ell_2$ can often be converted to those for $\ell_\infty$ by private boosting~\cite{DworkRV10}. In some cases~\cite{BlasiokBNS19}, instance-dependent bounds for $\ell_\infty$ can be proven by more direct means. The work of Blum, Ligget and Roth~\cite{BlumLR13} first showed that the sample complexity in the low-accuracy regime can behave differently from that in the high-accuracy regime, and subsequent work made these results more efficient and extended them to approximate differential privacy~\cite{RothR10, HardtR10}.

The use of differentially private algorithms for Adaptive Data Analysis was initiated in the seminal work of Dwork, Feldman, Hardt, Pitassi, Reingold, and Roth~\cite{DworkFHPRR15} and the aforementioned bounds for private multiplicative weights are from Bassily, Nissim, Stemmer, Steinke and Ullman~\cite{BassilyNSSSU16}. This has triggered a long line of research on adaptive data analysis~\cite{DworkFHPRR15b, DworkFHPRR15c, RogersRST16, RussoZ16,Smith17, FeldmanS17, FeldmanS18, NissimSSSU18, ShenfeldL19, JungLNRSS21, FishRR20, DaganK22, KontorovichSS22, DinurSWZ23, Cohen0NSS23, Blanc24}. There has been a beautiful line of work on lower bounds for adaptive data analysis by Hardt and Ullman~\cite{HardtU14}, Steinke and Ullman~\cite{SteinkeU15}. In addition to the aforementioned information-theoretic results, this line of work shows that for computationally bounded algorithms in the large $\calX$ regime, the sample complexity is $\Omega(\sqrt{Q})$. These results have been extended to apply to hold against a more restricted class of adversaries in~\cite{Elder16, NissimST24}. There have also been many works that have aimed to replace the strong stability notion in differential privacy by weaker notions, and our lower bounds imply that these cannot asymptotically improve on the results from private multiplicative weights in the general case.

\section{Fingerprinting Codes Meet Geometry}\label{sec:fingerprinting-meets-geometry}

\paragraph*{Notation.} We will work in the $d$-dimensional Euclidean space $\mathbb{R}^d$. For a vector $v\in \mathbb{R}^{d}$, we use $v_i$ to denote the $i$-th coordinate of $v$. We will let $v_{\le r}$ denote the first $r$ coordinates of $v$, and let $v_{-i}$ denote all but the $i$-th coordinate. Similarly we use $v_{>r}, v_{< r}$ etc. For a list of vectors $v^1,\dots, v^m$, we use superscript (e.g., $v^j$) to index individual vectors. We write $a\circ b$ to denote the concatenation of two objects in a natural manner (be it lists, vectors, or Boolean strings, etc.). For a distribution $\calD$ over $\mathbb{R}^d$, the \emph{covariance matrix} of $\calD$ is defined as
$$
\COV(\calD) \coloneqq \E_{v\sim \calD}[ (v - \E[\calD]) (v - \E[\calD])^\top ].
$$
A zero-mean random variable $A$ is \emph{$K$-subgaussian} if $\|A\|_p \le K \sqrt{p}$ for every $p\ge 1$. Equivalently, $A$ is $O(K)$-subgaussian if $\Pr[|A|>Kt]\le 2 \exp(-t^2)$ for every $t \ge 1$. We say two random variables $X,Y$ are \emph{$(\eps,\delta)$-indistinguishable}, if $\Pr[X\in S] \in [e^{-\eps}(\Pr[Y\in S] - \delta), e^\eps \Pr[Y\in S] + \delta]$ for every measurable set $S$.

We use the following convention to ease our asymptotic analysis: We frequently write $c$ (resp.~$C$) to denote a small (resp.~large) but absolute constant. Their appearance in different contexts might have different values. Generally, when we write a statement like ``for $n \ge c\cdot m$, something is true'', what we mean is that there exists an absolute constant $c > 0$, such that with $n \ge c\cdot m$, the said statement is true.

\subsection{Geometry Preliminaries}

\newcommand{\diver}{\mathrm{div}}
\newcommand{\Var}{\mathrm{Var}}

Let $S$ be a closed surface in the $d$-dimensional Euclidean space, and $V$ be the region enclosed by $S$. Let $f:\mathbb{R}^d\to \mathbb{R}^d$ be a vector field. The {\em divergence} of $f$ at a point $x$ is defined as
$$
\diver f(x) = \sum_{i=1}^{d} \frac{\partial}{\partial x_i}f(x)_i.
$$
The divergence theorem then says that
$$
\iiint_{V} \diver f(x) dV = \oiint_S \langle f(x), \vec{n}\rangle dS.
$$
Here, $\vec{n}$ denotes the normal vector of $S$ at a point $\theta\in S$. It might be helpful to review some typical examples of the divergence theorem:
\begin{itemize}

    \item A simple example is where $V$ is the $\ell_{\infty}$ ball of coordinate-wise bound $R/2$ (namely, a hypercube centered at the origin with side length $R$), and $S$ its boundary (union of $2d$ faces). Since $\frac{\Vol(V)}{\Area(S)} = \frac{R^d}{2d R^{d-1}} = \frac{R}{2d}$, we obtain
    $$
    \text{Average divergence inside $V$} = \frac{2d}{R} \cdot \text{Average of $\langle f, \vec{n}\rangle$ on the surface}.
    $$
    The divergence theorem is intuitive here: since $V$ is a product region, the ``contribution'' to the divergence integral from the $d$ coordinates is independent, allowing us to integrate each $i\in [d]$ separately. Each $i\in [d]$ is naturally associated with a pair of faces of $S$, namely $\{x:\|x\|_\infty \le R/2, x_i = \pm R/2\}$. To compare the integrals on two sides, we can condition on $x_{-i}$ and apply the fundamental theorem of calculus on $x_i$. In this proof, no knowledge of multi-variate calculus is required.
    \item A more interesting case is that of $V$ being an $\ell_2$-ball of radius $R$ and $S$ its enclosing sphere. In this case, one can verify that $\frac{\Vol(V)}{\Area(S)} = \frac{R}{d}$. Hence,
    $$
    \text{Average divergence inside $V$} = \frac{d}{R} \cdot \text{Average of $\langle f,\vec{n}\rangle$ on the surface}.
    $$
\end{itemize}

\subsection{The Exponential Family}

The first step of the fingerprinting argument is designing a suitable family of distributions. We will work with a family of distributions known as ``exponential family'', defined as follows.

Let $K\subseteq \mathbb{R}^d$ be a set of vectors. We consider $K$ as the possible inputs to the private algorithm and define our ``base distribution'' as the uniform distribution over $K$. In this work, we will largely be concerned with the case of finite $K$ though the approach extends easily to the case of $K$ being a bounded convex set, say, or even more broadly to an arbitrary base distribution. For every $\theta \in \mathbb{R}^d$, we define an $\theta$-tilt distribution $D_{\theta}(K)$ over $K$ as:
$$
\Pr_{x\sim D_{\theta}}[x = v] \propto \exp(\theta^\top x).
$$
Intuitively, compared with the uniform distribution over $K$, we put more ``favor'' on points with a large inner product with $\theta$. When the base set $K$ is clear from context, we will use $D_{\theta}$ to denote $D_{\theta}(K)$ for brevity.

Now, let $\calA:K^n\to \mathbb{R}^{d}$ be an algorithm that receives $n$ samples from $K$ and returns approximately their average. We would like to understand the privacy-utility tradeoff of such algorithms. For every $\theta$, we may define
$$
g(\theta) \coloneqq \mathbb{E}_{v^1,\dots, v^n\sim D_{\theta}^n}[\calA(v)]
$$
to be the average output of $\calA$ on a dataset drawn from $D_{\theta}^n$. We also define
$$
\mu_\theta = \mu(\theta) \coloneqq \mathbb{E}_{v\sim D_{\theta}^n} [v]
$$
as the ``true mean'' of the distribution $D_{\theta}$. Note that both $g(\theta)$ and $\mu(\theta)$ are understood as vector fields over $\mathbb{R}^d$. For any not-too-small $n$, a generalization argument implies that $\mu(\theta)$ and $g(\theta)$ are close, assuming the algorithm $\calA$ is accurate w.r.t.~samples.

\paragraph*{Defining the ``score''.} The reason we work with exponential families is crystallized in the following proposition, which says that the derivative of $g(\theta)$ with respect to $\theta$ is related to how well $\calA(v)$ correlates with $v$. A version of this proposition appeared in \cite{KamathMS22}.

\begin{proposition}\label{prop:differentiate-to-score}
    For every $i\in [d]$, we have
    $$
    \frac{\partial}{\partial \theta_i} g(\theta)_i = \mathbb{E}_{v^1,\dots, v^n \sim D_{\theta}^n } \left[ \calA(v)_i \cdot \left[ \sum_{j=1}^{n} v^j_i - (\mu_\theta)_i \right]  \right].
    $$
    Summing up all $i\in [d]$, we obtain
    $$
    \diver g(\theta) = \mathbb{E}_{v^1,\dots, v^n \sim D_{\theta}^n}\left[ \left\langle \calA(v), \sum_{j=1}^{n} v^j - \mu_\theta \right\rangle \right].
    $$
\end{proposition}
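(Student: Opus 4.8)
The plan is to prove the per-coordinate identity by a direct ``log-derivative'' (score function) computation, and then get the divergence identity for free by summing over $i\in[d]$. Write $Z(\theta)=\sum_{v\in K}\exp(\theta^\top v)$ for the partition function, so that the joint density of an i.i.d.\ sample $v^1,\dots,v^n\sim D_\theta$ is
$$
p_\theta(v^1,\dots,v^n)\;=\;\frac{\exp\!\big(\theta^\top\textstyle\sum_{j=1}^n v^j\big)}{Z(\theta)^n}.
$$
First I would record the standard exponential-family fact $\nabla_\theta\log Z(\theta)=\mathbb{E}_{w\sim D_\theta}[w]$, obtained by differentiating $Z$ directly. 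Differentiating $\log p_\theta$ in the coordinate $\theta_i$ then yields the ``score''
$$
\frac{\partial}{\partial\theta_i}\log p_\theta(v^1,\dots,v^n)\;=\;\sum_{j=1}^n v^j_i\;-\;n\,\mathbb{E}_{w\sim D_\theta}[w_i]\;=\;\sum_{j=1}^n v^j_i-(\mu_\theta)_i,
$$
where the last equality is exactly the statement that $\mu_\theta$ is the mean of the sufficient statistic $\sum_{j=1}^n v^j$ under $D_\theta^n$ (equivalently, $n$ times the single-sample mean); keeping this normalization straight is the one piece of bookkeeping to watch.

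The core step is the interchange of differentiation and expectation. Since $g(\theta)_i=\sum_{\vec v\in K^n}\calA(\vec v)_i\,p_\theta(\vec v)$ is, for finite $K$, a finite sum of functions that are smooth in $\theta$ (each $p_\theta(\vec v)$ being a ratio of entire functions with nonvanishing denominator), we may differentiate term by term:
$$
\frac{\partial}{\partial\theta_i}g(\theta)_i\;=\;\sum_{\vec v}\calA(\vec v)_i\,\frac{\partial p_\theta(\vec v)}{\partial\theta_i}\;=\;\sum_{\vec v}\calA(\vec v)_i\,p_\theta(\vec v)\,\frac{\partial\log p_\theta(\vec v)}{\partial\theta_i}.
$$
Substituting the score identity from the first paragraph rewrites the right-hand side as $\mathbb{E}_{v^1,\dots,v^n\sim D_\theta^n}\big[\calA(v)_i\big(\sum_{j=1}^n v^j_i-(\mu_\theta)_i\big)\big]$, which is precisely the first claim. (If one wants to allow $K$ infinite, or a continuous/arbitrary base measure, this interchange is the only place that needs care; it is justified by dominated convergence, and is immediate whenever $\theta^\top v$ is uniformly bounded on $K$ or $\calA$ has bounded range, which covers the settings of interest here.)

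Finally I would sum the per-coordinate identity over $i=1,\dots,d$. On the left, $\sum_i\frac{\partial}{\partial\theta_i}g(\theta)_i=\diver g(\theta)$ by definition of the divergence; on the right, $\sum_i\calA(v)_i\big(\sum_{j=1}^n v^j_i-(\mu_\theta)_i\big)=\big\langle\calA(v),\ \sum_{j=1}^n v^j-\mu_\theta\big\rangle$, giving the divergence identity. I do not expect a genuine obstacle here: the argument is a one-line exponential-family computation wrapped around a differentiation-under-the-expectation step, and the only subtleties are (a) justifying that interchange outside the finite-$K$ case and (b) keeping the normalization of $\mu_\theta$ consistent so that the score is genuinely centered.
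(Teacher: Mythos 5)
Your proposal is correct and is essentially the paper's argument: the paper also differentiates the sampling density and identifies the derivative with the centered sufficient statistic, merely organizing the computation sample-by-sample via the chain rule (computing $\partial_{\theta_i}\Pr[D_\theta=u]=\Pr[D_\theta=u](u_i-(\mu_\theta)_i)$ and summing over $j$) rather than through the joint log-density as you do. Your flag about the normalization of $\mu_\theta$ is well taken — the paper elsewhere uses $\mu_\theta$ for the single-sample mean, so the displayed $\sum_{j}v^j_i-(\mu_\theta)_i$ is really shorthand for $\sum_j\big(v^j_i-(\mu_\theta)_i\big)$ — but this is a notational wrinkle in the statement, not a gap in your proof.
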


\begin{proof}
We use $v^{-j}$ to denote a list of $(v^1,\dots, v^{j-1}, v^{j+1},\dots, v^n)$. Using the chain rule of calculus, we consider the ``impact'' of differentiating $\theta$ to each $v^j\sim D_\theta$. Namely, we have

\begin{align}
    \frac{\partial}{\partial \theta_i} g(\theta)_i
    &= \sum_{j=1}^{n} \E_{v^{-j} \sim D_{\theta}^{n-1}} \left( \frac{\partial}{\partial \theta_i} \left( \E_{v^{j}\sim D_{\theta}} [\calA(v)_i] \right) \right) \notag \\
    &= \sum_{j=1}^{n} \E_{v^{-j} \sim D_{\theta}^{n-1}} \left( \sum_{u\in K} \frac{\partial \Pr[D_{\theta} = u]}{\partial \theta_i} \calA(v^{-j} \circ u)_i\right) \label{equ:before-differentiate-Pr}
\end{align}
We calculate
$$
\begin{aligned}
    \frac{\partial \Pr[D_{\theta} = u]}{\partial \theta_i}
    &= \frac{\partial }{\partial \theta_i}\left( \frac{\exp(\langle u, \theta\rangle)}{\sum_{u'} \exp(\langle u', \theta\rangle)} \right) \\
    &= \frac{\exp(\langle u, \theta\rangle)\cdot u_i}{\sum_{u'} \exp(\langle u', \theta\rangle)} - \frac{\exp(\langle u, \theta\rangle) \sum_{u'} \exp(\langle u', \theta\rangle) \cdot u'_i}{\left( \sum_{u'} \exp(\langle u', \theta\rangle) \right)^2} \\
    &= \Pr[D_\theta = u]\cdot u_i - \Pr[D_{\theta} = u] \cdot \E_{u'\sim D_{\theta}}[ u'_i ] \\
    &= \Pr[D_{\theta} = u] \left( u_i - (\mu_{\theta})_i \right).
\end{aligned}
$$
Back to the derivation before, we proceed as
$$
\begin{aligned}
    \eqref{equ:before-differentiate-Pr}
    &= \sum_{j=1}^{n} \E_{v^{-j} \sim D_{\theta}^{n-1}} \left( \sum_{u\in K} \Pr[D_{\theta} = u] \calA(v^{-j} \circ u)_i \cdot (u_i - (\mu_{\theta}))_i \right)  \\
    &= \sum_{j=1}^{n} \E_{v^{-j}} \E_{v^j} \left[ \calA(v)_i \left( v^j_i - (\mu_\theta)_i \right) \right] \\
    &= \E_{v} \left[ \calA(v)_i \left( \sum_{j=1}^{n} v^j_i - (\mu_\theta)_i \right) \right].
\end{aligned}
$$
This completes the proof of the first equation. For the second equation, we simply sum up all $i$'s and appeal to the definition of divergence.
\end{proof}
A more general formulation of this exists, that holds beyond exponential families. Indeed the term $(v^j - \mu_\theta)$ can in general be replaced by the $S_{\theta}(v^j) = \nabla_{\theta} \log p_{D_{\theta}}[v^j]$ (see~\cite{CaiWZ23}). Since this generality will not be needed in our work, we restrict our attention to exponential families.

Looking ahead, we will define $\langle v^j - \mu_\theta, \calA(v)\rangle$ as the {\em score} of $v^j$ with respect to the algorithm's output $\calA(v)$. We prove our desired lower bounds by deriving contradicting upper and lower bounds on the score by utilizing the privacy and accuracy guarantee of the algorithm, respectively.

\subsection{Template Overview}\label{sec:fingerprinting-overview}

In this subsection, we briefly introduce how to derive upper and lower bounds on the {\em score} of inputs.

\paragraph*{On upper bounding the score.} Given the output of the algorithm $\calA(v)$, it is unlikely to correlate well with a freshly sampled $v'\sim D_{\theta}$ (since the algorithm has never seen $v'$ before). On the other hand, due to the privacy property of $\calA$, its output is unlikely to change too much when we replace one input $v^j$ with $v'$. These two facts combined would give an upper bound on the score of in-sample data points.

We establish the following proposition, which would imply that the ``score'' of an in-sample point $v^j$ cannot be significantly larger than that of an independent sample.

\begin{proposition}\label{prop:privacy-imply-small-score-difference}
    Suppose $X, X'$ are a pair of $(\varepsilon, \delta)$-indistinguishable random variables supported on $[-B, B]$ such that $\Var[X] = b^2$ and $\mathbb{E}[X] = 0$. Then,
    $$
    \mathbb{E}[X'] \le O((e^\eps - 1) b + B \delta).
    $$
\end{proposition}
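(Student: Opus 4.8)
The plan is to pass to tail‑probability (layer‑cake) representations of $\E[X']$ and $\E[X]$, apply $(\eps,\delta)$‑indistinguishability one tail at a time, and then use $\E[X]=0$ together with a second‑moment estimate to control the result.

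First I would write, using that $X'$ is supported on $[-B,B]$ and the layer‑cake formula,
$$
\E[X'] = \int_0^B \Pr[X'>t]\,dt \;-\; \int_0^B \Pr[X'<-t]\,dt ,
$$
and similarly $0 = \E[X] = \int_0^B \Pr[X>t]\,dt - \int_0^B \Pr[X<-t]\,dt$, so the two one‑sided tail integrals of $X$ coincide; call their common value $m$, and note $m = \tfrac12 \E|X|$. Next I apply indistinguishability to each tail of $X'$: for every $t$, $\Pr[X'>t]\le e^{\eps}\Pr[X>t]+\delta$ and $\Pr[X'<-t]\ge e^{-\eps}\Pr[X<-t]-\delta$. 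Substituting into the identity above and integrating over $t\in[0,B]$ yields
$$
\E[X'] \le e^{\eps} m - e^{-\eps} m + 2B\delta = (e^{\eps}-e^{-\eps})\,m + 2B\delta .
$$
Finally, since $e^{\eps}-e^{-\eps} = (e^{\eps}-1)(1+e^{-\eps}) \le 2(e^{\eps}-1)$ and, by Cauchy--Schwarz together with $\E[X]=0$, $m = \tfrac12\E|X| \le \tfrac12\sqrt{\E[X^2]} = \tfrac12\sqrt{\Var[X]} = b/2$, I conclude $\E[X'] \le (e^{\eps}-1)b + 2B\delta = O\!\big((e^{\eps}-1)b + B\delta\big)$.

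The one place that requires care --- and the reason the bound has the factor $(e^{\eps}-1)$ rather than $e^{\eps}$ --- is that one must not simply discard the negative‑tail term $\int_0^B \Pr[X'<-t]\,dt$: doing so would only give the weaker $e^{\eps}b/2 + B\delta$. Retaining that term and lower‑bounding it through the $e^{-\eps}$, $\delta$‑shifted side of indistinguishability, and then invoking $\E[X]=0$ to equate the positive‑ and negative‑tail masses of $X$, is exactly what produces the $e^{\eps}-e^{-\eps}$ (hence $O(e^{\eps}-1)$) factor; the rest is routine.
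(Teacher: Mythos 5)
Your proof is correct, and it takes a genuinely different route from the paper's. The paper writes $\E[X'] = \E[X'-X]$ (after first paying $B\delta$ to reduce to the $\delta=0$ case), bounds the pointwise difference of tails by $\Pr[X'\ge t]-\Pr[X\ge t]\le (e^\eps-1)\Pr[X\ge t]$, and then splits the integration range at $|t|=b$, using Chebyshev's inequality to control $\int_{|t|>b}\Pr[|X|\ge t]\,dt \le O(b)$; the variance hypothesis enters through Chebyshev. You instead exploit $\E[X]=0$ to equate the positive- and negative-tail masses of $X$ (each equal to $m=\tfrac12\E|X|$), apply the multiplicative bound in opposite directions to the two tails of $X'$ so that the leading $e^{\eps}m$ and $e^{-\eps}m$ terms cancel down to $(e^\eps-e^{-\eps})m\le 2(e^\eps-1)m$, and finish with $\E|X|\le\sqrt{\E[X^2]}=b$; the variance enters through Cauchy--Schwarz rather than Chebyshev, and you track $\delta$ additively throughout rather than via a preliminary reduction. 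Your argument is arguably tighter and cleaner for this particular statement (no case split, explicit constants). The paper's Chebyshev-based decomposition has the advantage that it adapts directly when the variance bound is replaced by a stronger tail hypothesis --- this is exactly how the paper later proves the subgaussian analogue (Proposition~\ref{prop:privacy-to-small-score-subgaussian}) --- whereas your cancellation argument is tied to the first absolute moment. Both are valid proofs of the stated proposition.
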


\begin{proof}
     We pay a price of $B \delta$ to ``change'' $X'$ into a random variable that is $(\varepsilon, 0)$-indistinguishable with $X$. For this new $X'$, using integration by parts we get
     $$
     \begin{aligned}
     \mathbb{E}[X']
     & = \mathbb{E}[X' - X] \\
     & = \int_{0}^B \left( \Pr[X'\ge t] - \Pr[X \ge t] \right) dt + \int_{-B}^0 \left(\Pr[X\le  t] - \Pr[X' \le t]  \right)dt \\
     & \preceq (e^\eps - 1) + \int_{-B}^{- b}  \left(\Pr[X\le  t] - \Pr[X' \le t]  \right)dt  + \int_{ b}^B \left( \Pr[X'\ge t] - \Pr[X \ge t] \right) dt  \\
     &\preceq (e^\eps - 1) + \int_{-B}^{- b}  \frac{\varepsilon b^2}{t^2}dt + \int_{ b}^B \frac{\varepsilon  b^2}{t^2} dt \\
     &\le O((e^\eps - 1) b).
     \end{aligned}
     $$
     To justify the derivation, the first inequality used the fact $\Pr[X'\ge t] - \Pr[X\ge t] \le (e^\eps - 1)$ for every $t$, and the second inequality used Chebyshev's inequality. This completes the proof.
\end{proof}

We will use Proposition~\ref{prop:privacy-imply-small-score-difference} to compare $\langle v^j - \mu_\theta, \calA(v)\rangle$ with $\langle v' - \mu_\theta, \calA(v)\rangle$ where $v'\sim D_\theta$ is independently drawn. These two random variables are $(\eps,\delta)$-indistinguishable by the privacy property of $\calA$. Also, it is easily seen that the latter random variable has zero mean. Therefore, to use Proposition~\ref{prop:privacy-imply-small-score-difference}, it remains to upper bound the variance (equivalently, the second moment) of $\langle v' - \mu_\theta, \calA(v)\rangle$. One way to proceed is the following: Conditioning on $\calA(v)$, we have
$$
\E[\langle v' - \mu_\theta, \calA(v) \rangle^2] = \calA(v)^\top \COV(D_{\theta}) \calA(v).
$$
Therefore, it suffices to upper bound the spectral norm of the covariance of $D_\theta$. The way we accomplish this will depend on the geometry of $K$ and how it interplays with the exponential tilt. The details are deferred to future application sections.

\paragraph*{On lower bounding the score.} Roughly, we will lower bound $\langle v^j-\mu_\theta, \calA(v)\rangle$ by utilizing Proposition~\ref{prop:differentiate-to-score} together with the accuracy property of $\calA$. The first step is to consider a randomly chosen $\theta$ from a region $V$ and relate the score with an integral of divergence (by Proposition~\ref{prop:differentiate-to-score}). The latter is further related to a surface integral (on the surface $S$ that encloses $V$) by the divergence theorem. Namely, for a random $\theta$ drawn uniformly from a closed region $V$, we have
\begin{align}
\E_{\theta \sim V} \left( \E_{v\sim D_\theta^n} \left[ \sum_{j} \langle \calA(v), v^j - \mu_\theta \rangle \right] \right)
& = \frac{1}{\Vol(V)} \iiint_V \diver g(\theta) d\theta & \text{(Proposition~\ref{prop:differentiate-to-score})} \notag \\
&= \frac{1}{\Vol(V)} \oiint_S \langle g(\theta), \vec{n}_\theta\rangle d\theta & \text{(the divergence theorem)} \notag \\
&= \frac{\Area(S)}{\Vol(V)} \E_{\theta \sim S}[ \langle g(\theta), \vec{n}_\theta \rangle ]. \label{equ:IP-to-be-lbed}
\end{align}
Here, we use $\vec{n}_\theta$ to denote the normal vector of $S$ at the point $\theta$.

We further lower bound \eqref{equ:IP-to-be-lbed} in an application-specific way. We will always replace $g(\theta)$ in \eqref{equ:IP-to-be-lbed} with $\mu(\theta)$ (this is possible because the algorithm is assumed to be accurate, which means the difference between $g(\theta)$ and $\mu(\theta)$ is minor). Finally, by utilizing the definition of $D_{\theta}$ and choosing an appropriate $S$, we can give a desired lower bound on $\langle \mu(\theta), \vec{n}_\theta \rangle$. For example, by choosing $V$ to be an $\ell_2$ ball and $S$ be its boundary (i.e.,~a sphere), we get that $\vec{n}_\theta = \frac{\theta}{\|\theta_2\|}$. Since $D_{\theta}$ is defined by favoring points that have a large inner product with $\theta$, we conclude that $\langle \mu(\theta), \vec{n}_\theta\rangle = \frac{1}{\|\theta\|_2} \langle \E_{v\sim D_\theta}[v], \theta\rangle$ is large as well.

Thus the framework needs only a few application-specific ingredients. The set $K$ and the body $V$ that $\theta$ lies in will depend on the application. We will need to prove for each application an upper bound on the spectral norm of the covariance $\COV(D_{\theta})$, and prove the lower bound on the $\E_{\theta \in S}[\langle \mu(\theta), \vec{n}_\theta \rangle]$. In some of our applications, it will be convenient to deviate slightly from this general framework, in which case we may need to redo the proofs of slight variants of some of the steps in this general recipe.






\subsection{Proof for the Hypercube} \label{sec:fingerprinting-hypercube}

In this subsection, we (re-)prove the lower bound for answering one-way marginal queries using our framework. We hope it serves as a warm-up to the more complicated applications later on.

We work with an equivalent formulation of the problem, which is the task of releasing the mean of $n$ vectors from a $d$-dimensional Boolean hypercube. In this case, we have the set $K = \{ \pm 1 \}^d$. Let $\eps < \frac{1}{10}$. We aim to prove that there does not exist an $(\eps,\delta)$-DP algorithm which, on input $n\le c\cdot \frac{\sqrt{d}}{\eps}$ (for some small absolute $c > 0$) Boolean vectors $x^1,\dots, x^n\in K$, with probability $1$ returns a vector $\tilde{x}$ such that $\| \tilde{x} - \frac{1}{n} \sum_{j} x^j \|_2 \le \frac{1}{100} \sqrt{d}$. Namely, the algorithm makes constant error per query on average.

Assume for contradiction that such an algorithm $\calA$ exists. We give contradicting upper and lower bounds on the ``score'' of the inputs to $\calA$. Let $V$ be an $\ell_2$-ball of radius $R = \Theta(\sqrt{d})$. We consider the exponential family $\{D_\theta\}$ on $K$ parameterized by $\theta\in V$.

\paragraph*{Upper bound the score.} Let $\theta \in V$ be arbitrary. Conditioning on $x^1,\dots, x^n\sim D_\theta^n$ and the output $\calA(x)$, it is easy to see that
$$
\Var_{v'\sim D_{\theta}}\left[ \langle v' - \mu_\theta , \calA(x) \rangle \right] = \sum_{i=1}^{n} \Var_{v'} [ (v'_i - (\mu_\theta)_i ) \cdot \calA(x)_i].
$$
This is because, due to the structure of $K$ and the definition of $D_{\theta}$, different coordinates of $v'$ are independent (basically, we have $\Pr[v'_i = +1] = \frac{\exp(\theta_i)}{\exp(\theta_i) + \exp(-\theta_i)}$ independently for every $i$).

Note that we can assume $\calA(x)_i \in [-1, 1]$ (if not, we can truncate it into this range without increasing error). Now, $(v'_i - (\mu_\theta)_i ) \cdot \calA(x)_i$ is a bounded random variable, which has variance at most $1$. Hence, we obtain
$$
\Var_{v'\sim D_{\theta}}\left[ \langle v' - \mu_\theta , \calA(x) \rangle \right] \le d.
$$
By Proposition~\ref{prop:privacy-imply-small-score-difference}, this implies that for every $v^j$, it holds
\begin{align}
\E_{v\sim D_\theta^n,\calA(v)}[\langle \calA(v), v^j - \mu_\theta \rangle] \le O((e^\eps - 1)\sqrt{d} + \delta d) = O(\eps \sqrt{d} + \delta d). \label{equ:hypercube-ub-finish}
\end{align}
This bound holds for every fixed $\theta$.

\paragraph*{Lower bound.} To establish a lower bound on the score, we appeal to \eqref{equ:IP-to-be-lbed}, which tells us that
\begin{align}
\E_{\theta\sim V}\left( \E_{v,\calA(v)}\left[ \sum_{j=1}^n \langle \calA(v), v^j - \mu_\theta \rangle \right] \right) = \frac{\Area(S)}{\Vol(V)} \E_{\theta\sim S}[\langle g(\theta), \vec{n}_\theta\rangle]. \label{equ:hypercube-divergence-applied}
\end{align}
If $S$ is a sphere, we have that $\vec{n}_\theta = \frac{\theta}{\|\theta\|_2}$. Assuming $\calA$ is accurate, we have
$$
\| g(\theta) - \mu_\theta \|_2 \le \frac{1}{100}\sqrt{d}.
$$
Hence,
$$
\E_{\theta\sim S}[\langle g(\theta), \vec{n}_\theta\rangle] \ge \E_{\theta}[\langle \mu_\theta, \vec{n}_\theta\rangle] - \frac{1}{100}\sqrt{d}.
$$
Let us study the term $\langle \mu_\theta, \vec{n}_\theta\rangle$ closely. By definition of $\mu_\theta$, we have
$$
\langle \mu_\theta, \vec{n}_\theta\rangle = \sum_{i=1}^{d} \E_{v'\sim D_\theta}\left[v'_i\cdot \frac{\theta_i}{\|\theta\|_2}\right].
$$
Recall that we have $\Pr[v'_i = +1] = \frac{\exp(\theta_i)}{\exp(\theta_i)+\exp(-\theta_i)}$, meaning that the sign of $v'_i$ agrees with $\theta_i$ more often than not, which means that $\E_{v'\sim D_\theta}\left[v'_i\cdot \frac{\theta_i}{\|\theta\|_2}\right] \ge 0$ is always true. Next, for every $\theta_i$ with $|\theta_i| > 1$, we have $\Pr[v'_i = \sign(\theta_i)] > \frac{e}{e+e^{-1}} > \frac{2}{3}$, which implies $\E[v'_i \cdot \frac{\theta_i}{\|\theta\|_2}] > \Omega\left( \frac{1}{\|\theta\|_2}\right)$. Lastly, recall that $\theta$ is a random point on a sphere of radius $R = 5\sqrt{d}$. Hence, on average, there will be $\Omega(d)$ coordinates $i$ with $|\theta_i| > 1$. All in all, we conclude that
$$
\sum_{i=1}^{d} \E_{v'\sim D_\theta}\left[v'_i\cdot \frac{\theta_i}{\|\theta_i\|}\right] > \Omega(d)\cdot \Omega\left( \frac{1}{\|\theta\|_2 } \right) \ge \Omega\left( \frac{d}{R} \right).
$$
Hence, we obtain
\begin{align}
\eqref{equ:hypercube-divergence-applied}
&= \frac{\Area(S)}{\Vol(V)} \cdot \left( \E_\theta[\langle \mu_\theta, \vec{n}_\theta\rangle] - \frac{1}{100}\sqrt{d} \right).\notag \\
&\ge \frac{d}{R} \cdot \Omega\left(\frac{d}{R} \right) \notag \\
& \ge \Omega(d). \label{equ:hypercube-lb-finish}
\end{align}

\paragraph*{Wrap-up.} We now compare \eqref{equ:hypercube-lb-finish} with \eqref{equ:hypercube-ub-finish}. Observe that \eqref{equ:hypercube-lb-finish} says that the total score over all inputs must be at least $\Omega(d)$ for the algorithm to be accurate, while \eqref{equ:hypercube-ub-finish} says that each input contributes at most $O(\eps\sqrt{d})$ to the score (assuming $\delta$ is negligible). This immediately yields that there has to be at least $n\ge \Omega\left(\frac{\sqrt{d}}{\eps}\right)$ samples as desired.

\section{Composition of Query Lower Bounds}\label{sec:composition}

In this section, we extend the framework introduced in Section~\ref{sec:fingerprinting-meets-geometry} to prove our results on the ``composition'' of query releasing lower bounds. Compared with \cite{BunUV14}, the main innovation in our proof is to write the whole lower bound as a correlation analysis through the geometric fingerprinting framework. This section also serves as a warm-up to later applications of adaptive data analysis.

The main results covered in this section are as follows.

\begin{theorem}\label{theo:composition-lower-bound}
    Let $M,N$ be such that $M\gg \log(N)$ and $N\gg \log(M)$. There is a workload matrix $A\in \{\pm 1\}^{M\times N}$ such that, for every $\eps\in (0,1), \alpha \in (2^{-\log^{1/9}(N)}, 1)$, $\delta \in (\frac{1}{N}, \frac{1}{\log(N)\log(M)})$, any $(\eps,\delta)$-DP query releasing algorithm for $A$ requires $$
    n\ge \Omega\left( \frac{\sqrt{\log(N) \log(1/\delta)} \log(M)}{\eps\alpha^2}\right)
    $$
    samples to answer all $M$ queries within $\ell_{\infty}$-error at most $\alpha$.
\end{theorem}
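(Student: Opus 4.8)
The plan is to instantiate the geometric fingerprinting framework of Section~\ref{sec:fingerprinting-meets-geometry} on a ``product'' construction that combines two ingredients: a base hard instance living on a Boolean hypercube (which supplies the $\sqrt{\log N}$ and $\sqrt{\log(1/\delta)}$ factors, essentially the refined hypercube argument of Section~\ref{sec:fingerprinting-hypercube} run with $\delta>0$), and an amplification across $\approx \log M$ ``blocks'' that boosts a constant-accuracy obstruction into an $\alpha$-accuracy one and introduces the $\log M$ factor. Concretely, I would take $N = 2^d$ and let the universe be $\calX = \{\pm1\}^d$ (so $\log|\calX| = d = \log N$), and build the workload $A$ as a carefully chosen set of $\pm1$ queries so that answering all rows of $A$ to $\ell_\infty$-error $\alpha$ forces the algorithm to recover the mean of its $n$ input hypercube points to $\ell_2$-error roughly $\alpha\sqrt{d}$ — the same reduction sketched before Theorem~\ref{theo:intro-random-lb}, but now arranged in $\Theta(\log M)$ nested scales of granularity so that the relevant ``radius'' parameter $R$ in the ball argument can be pushed up to $\Theta(\sqrt d/\alpha^2)$ worth of effective signal while $M$ queries suffice to read it off.

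The core correlation analysis would follow the template in Section~\ref{sec:fingerprinting-overview} verbatim. For the upper bound on the score: pick $\theta$ uniformly from an $\ell_2$-ball $V$ of radius $R$, condition on the output $\calA(v)$ (truncated to $[-1,1]^d$), note that $D_\theta$ on $\{\pm1\}^d$ has independent coordinates so $\Var_{v'\sim D_\theta}[\langle v'-\mu_\theta,\calA(v)\rangle]\le d$, and apply Proposition~\ref{prop:privacy-imply-small-score-difference} to get that each in-sample point contributes at most $O((e^\eps-1)\sqrt d + \delta d)$ to the score. The key refinement over the pure warm-up is to not throw away the $\delta d$ term: choosing $\delta$ near the stated upper end and tracking the interplay between the $(e^\eps-1)\sqrt d$ term and the $\delta d$ term is what yields the $\sqrt{\log(1/\delta)}$ improvement — I expect one sets things up so that the effective per-point bound is $O(\eps\sqrt d \cdot \text{(something involving }\log(1/\delta)))$ rather than $O(\eps\sqrt d)$, by exploiting that the relevant coordinates of $\calA(v)$ cannot simultaneously be large (a sparsity/packing argument on which coordinates carry signal). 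For the lower bound on the score: use \eqref{equ:IP-to-be-lbed} with $S$ the sphere bounding $V$, so $\vec n_\theta = \theta/\|\theta\|_2$, replace $g(\theta)$ by $\mu(\theta)$ up to the accuracy error $\alpha\sqrt d$, and lower bound $\langle\mu_\theta,\vec n_\theta\rangle = \frac1{\|\theta\|_2}\sum_i \E_{v'\sim D_\theta}[v'_i]\theta_i \ge \Omega(d/R)$ exactly as in the hypercube case (each coordinate with $|\theta_i|>1$ contributes $\Omega(1/R)$, and a random point on a radius-$R$ sphere has $\Omega(d)$ such coordinates provided $R = O(\sqrt d)$; to push $R$ larger one instead counts coordinates with $|\theta_i| > c$ for small $c$ and gets $\Omega(d\cdot c/R)$). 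Equating the total lower bound $\Omega(d)$ (or $\Omega(d/\alpha^2)$ after the multi-scale boosting) against $n$ times the per-point upper bound gives $n \ge \Omega(\sqrt{\log N}\log(1/\delta)^{1/2}\log M/(\eps\alpha^2))$.

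The honest way to get the $\log M$ factor — and this is where I expect the real work to be — is a composition/packing argument in the spirit of \cite{BunUV14}: run $\Theta(\log M)$ independent copies of the hypercube instance on disjoint blocks of coordinates, or equivalently chain together $\log M$ levels at geometrically decreasing scales, and argue that an algorithm accurate on all $M$ queries must be accurate on every block; then the per-block lower bound of $\sqrt{\log N}\sqrt{\log(1/\delta)}/(\eps\alpha^2)$ composes additively under group privacy into the claimed bound. The delicate points are (i) making the ``accurate on all $M$ queries $\Rightarrow$ accurate per block'' reduction work with only $M$ queries total (rather than $M$ per block), which forces the blocks to share the query budget via a balanced encoding of the $\log M$ scales into $M$ query rows; (ii) ensuring the group-privacy loss across blocks doesn't eat the $\log(1/\delta)$ gain, which is exactly why the theorem restricts $\delta \in (\frac1N, \frac1{\log N\log M})$; and (iii) verifying the covariance bound $\|\COV(D_\theta)\|\le 1$ survives when $\theta$ ranges over the larger ball needed for the $1/\alpha^2$ scaling. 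I would expect step (i) — the simultaneous, query-efficient composition — to be the main obstacle, and the restriction on the range of $\delta$ and $\alpha$ in the theorem statement to be precisely the slack needed to make it go through.
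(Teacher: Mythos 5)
Your high-level instinct (a hypercube base instance, a $\log M$-fold amplification, and a $1/\alpha^2$ boosting) matches the shape of the paper's construction, but all three quantitative ingredients are obtained by mechanisms different from the ones you propose, and the routes you sketch would not deliver the claimed bound. First, the $\sqrt{\log(1/\delta)}$ factor does \emph{not} come from ``not throwing away the $\delta d$ term'' in Proposition~\ref{prop:privacy-imply-small-score-difference}, nor from a sparsity/packing argument on which output coordinates carry signal; that route caps out at the $\delta$-free bound of \cite{BunUV14}. The paper instead applies the group-privacy reduction (Lemma~\ref{lemma:group-privacy-reduce-sample}) with $p=\log(1/\delta)/(4\eps)$, so that it suffices to rule out $(\log(1/\delta),\sqrt{\delta})$-DP algorithms with a $p$-fold smaller sample complexity, and then replaces the variance-based Proposition~\ref{prop:privacy-imply-small-score-difference} (whose $(e^\eps-1)b$ term is useless once $\eps=\log(1/\delta)$) with the sub-Gaussian tail version, Proposition~\ref{prop:privacy-to-small-score-subgaussian}, which gives a per-point score of $O(C\sqrt{\eps})=O(C\sqrt{\log(1/\delta)})$. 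This is the step your proposal is missing, and it is also what explains the constraint $\delta>\frac{1}{N}$: taking $\delta$ smaller pushes the post-reduction target sample complexity below the generalization threshold, where the projection step fails (Remark~\ref{remark:low-sample-no-generalization}).

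Second, the $1/\alpha^2$ gain cannot be obtained by enlarging the radius $R$ of the $\ell_2$-ball: as you yourself worry in your point (iii), once the coordinates of $\theta$ exceed constant size the coordinates of $D_\theta$ saturate, $\langle\mu_\theta,\vec{n}_\theta\rangle$ stops growing, and the score lower bound stays at $\Omega(d)$. The paper instead changes the geometry: it tensors the hypercube with $k=\Theta(1/\alpha^2)$ pairwise-orthogonal Boolean vectors $u^j$ and $m=\Theta(\log M)$ standard basis vectors $e^i$, uses a \emph{type-conditioned} exponential family (Proposition~\ref{prop:type-dependent-score-1}) over an $\ell_1$-ball of radius $mkd/\sqrt{k}$, and obtains a score lower bound of $\Omega(kd)=\Omega(d/\alpha^2)$. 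Third, the $\log M$ factor does not come from composing $\log M$ block lower bounds under group privacy (that is exactly the \cite{BunUV14} route, which is why that work loses the $\sqrt{\log(1/\delta)}$); it falls out of the same single correlation analysis, because the reconstruction argument over the $2^m$ predicates $h:[m]\to\{\pm1\}$, followed by the projection step, forces every coordinate of the post-processed output to have magnitude at most $1/m$, so the fresh-point score is only $O(\sqrt{d}/m)$-sub-Gaussian and the per-point upper bound acquires a factor $1/m = 1/\Theta(\log M)$. Your proposal correctly identifies the query-efficient simultaneity of the three factors as the crux, but does not supply the mechanism that achieves it.
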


We define two-way marginal queries. Construct a matrix $A_{TW}\in \{\pm 1\}^{d^2\times 2^{2d}}$ where the rows are indexed by pairs $(i,j)\in [d\times d]$ and columns indexed by $(x,y)\in \{\pm 1\}^{d+d}$. Then, set $(A_{TW})_{(i,j),(x,y)} = x_i\cdot y_j$. This is slightly different than the standard two-way marginals: here, every data point $(x,y)$ has $2d$ attributes, but we only query all the correlations crossing $x$ and $y$. However, since these queries are roughly half of all two-way marginal queries over $(x,y)$, lower bounds against $A_{TW}$ lift to lower bounds against the standard two-way marginal query family up to a constant factor.

\begin{theorem}\label{theo:two-way-marginal-lb}
    Let $A_{TW}\in \{\pm 1\}^{d^2\times 2^{2d}}$ be the two-way marginal query matrix. For every $\eps\in (0,1), \alpha \in (d^{-0.49},1/100)$, $\delta \in (2^{-\frac{d}{\log^2(d)}}, \frac{1}{d^2})$, any $(\eps,\delta)$-DP query releasing algorithm for $A$ requires
    $$
    n\ge \Omega\left( \frac{\sqrt{d\log(1/\delta)}}{\eps \alpha^2} \right)
    $$
    samples to achieve $\ell_2^2$-error of $\alpha^2 d^2$.
\end{theorem}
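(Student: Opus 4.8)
First, I would recast the problem into the shape the framework of Section~\ref{sec:fingerprinting-overview} wants. Releasing the $d^2$ correlations $(x,y)\mapsto x_i y_j$ to $\ell_2^2$ error $\alpha^2 d^2$ on a dataset is the same as releasing, to $\ell_2$ error $\alpha d$, the average of $n$ points drawn from
$$
K \;=\; \bigl\{\, x y^{\top} \;:\; x,y\in\{\pm1\}^{d} \,\bigr\}\ \subseteq\ \{\pm1\}^{d\times d}\ \cong\ \{\pm1\}^{d^{2}},
$$
the set of rank-one sign matrices. So I will instantiate the exponential family with $\Pr_{D_\Theta}[x y^{\top}] \propto \exp(\langle\Theta, x y^{\top}\rangle) = \exp(x^{\top}\Theta y)$, let the tilt matrix $\Theta$ range over a body $V\subseteq\Re^{d\times d}$ chosen to ``meet the geometry'' of $K$, and run the contradicting-bounds-on-the-score argument of the hypercube warm-up. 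Two ingredients must change, both dictated by the geometry of $K$: the covariance bound used to upper bound the score must exploit the tensor structure $xy^{\top}=x\otimes y$, and the surface term used to lower bound the score must exploit the bilinear (rank-one) structure of the points; on top of that I will run the attack over $\Theta(\log(1/\delta))$ near-independent sub-instances to harvest the extra $\sqrt{\log(1/\delta)}$, exactly as in the proof of Theorem~\ref{theo:composition-lower-bound}. Finally, since the $A_{TW}$ correlations are a constant fraction of all two-way marginals, a lower bound for $A_{TW}$ lifts to the standard two-way-marginal family.

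For the upper bound on the score: writing a point of $K$ as $v=x\otimes y$ gives $\COV(D_\Theta)\preceq\E_{D_\Theta}[(x\otimes y)(x\otimes y)^{\top}]=\E_{D_\Theta}[(x x^{\top})\otimes(y y^{\top})]$, which is controlled in spectral norm whenever $V$ is chosen so that under $D_\Theta$ the ``$x$-part'' and ``$y$-part'' essentially decouple and neither collapses onto a single direction (e.g.\ $\Theta$ block-diagonal across the attribute blocks, or a bounded perturbation of a scaled rank-one sign matrix); I will aim for $\|\COV(D_\Theta)\|_{\mathrm{op}}=O(1)$. Truncating $\calA$'s output into $[-1,1]^{d\times d}$ and using the accuracy hypothesis to bound $\|\calA(v)\|_2$ in terms of $\|\mu_\Theta\|_2$ and $\alpha d$, the variance of $\langle v'-\mu_\Theta,\calA(v)\rangle$ for a fresh $v'\sim D_\Theta$ is then $O(\|\calA(v)\|_2^2)$, and Proposition~\ref{prop:privacy-imply-small-score-difference} bounds the per-user score by $O(\eps\,\|\calA(v)\|_2 + \delta\, d^{2})$. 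To turn $\sqrt d$ into $\sqrt{d\log(1/\delta)}$ I will not swap a single user but a block of $k=\Theta(\log(1/\delta))$ users, one in each sub-instance, and apply a group-privacy form of Proposition~\ref{prop:privacy-imply-small-score-difference}; this amortizes the $\delta d^2$ term and shrinks the effective accuracy scale seen by each sub-instance by $\sqrt k$.

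For the lower bound on the score: take $V$ to be a ball of radius $R$ inside the subspace $W\subseteq\Re^{d\times d}$ that the tilts are confined to, and $S=\partial V$; after an orthonormal reparametrization of $W$ the analog of Proposition~\ref{prop:differentiate-to-score} holds, so by \eqref{equ:IP-to-be-lbed} the expected total score equals $\tfrac{\Area(S)}{\Vol(V)}\,\E_{\Theta\sim S}\langle g(\Theta),\vec n_\Theta\rangle$. Replacing $g$ by $\mu$ at cost $O(\alpha d)$ using accuracy, it remains to lower bound $\E_{\Theta\sim S}\langle\mu_\Theta,\vec n_\Theta\rangle$. The key geometric fact is that a $\Theta$ aligned with a rank-one sign matrix $u v^{\top}$ has $\max_{v'\in K}\langle v',\Theta\rangle$ as large as $\|\Theta\|_F\cdot d$ --- a factor $\sqrt d$ beyond a ``generic'' direction --- and, for an appropriate $R$, the soft-max distribution $D_\Theta$ inherits a correlation of that order, which is what keeps the numerator at $\sqrt d$ rather than $d$. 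The other half of the story is that the $\ell_2^2$-accuracy $\alpha^2 d^2$ only needs to be beaten coordinate-wise in $\ell_2$ while, by the covariance bound, each user's score is already small relative to the \emph{same} $\ell_2$ scale; the two scales combine quadratically, producing a $1/\alpha^2$ rather than a $1/\alpha$, and matching, as it must, the projection-mechanism upper bound of~\cite{DworkNT15}. Comparing the surface term $\Omega\bigl(\tfrac{\dim W}{R}\cdot\alpha d\bigr)$ with the $k$-fold per-user upper bound gives $n=\Omega\bigl(\sqrt{d\log(1/\delta)}/(\eps\alpha^{2})\bigr)$, with the stated windows on $\alpha$ and $\delta$ being exactly what is needed for the $\Theta(\log(1/\delta))$ sub-instances to still have super-constant attribute size and for $R$ to land in the regime where the soft-max correlation is comparable to $\max_{v'\in K}\langle v',\Theta\rangle$.

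The step I expect to be the real obstacle is the geometric lower bound on $\E_{\Theta\sim S}\langle\mu_\Theta,\vec n_\Theta\rangle$: one must show that, uniformly as $\Theta$ sweeps the sphere $S$ inside $W$, the \emph{tilt} distribution $D_\Theta$ --- not merely the $\langle\Theta,\cdot\rangle$-maximizer over $K$ --- already captures the $\|\Theta\|_F\cdot d$-sized correlation coming from the rank-one structure, \emph{while} $\COV(D_\Theta)$ stays spectrally bounded; that is, one must choose $W$, the shape of $V$, and $R$ so that from the vantage point of $V$ the finite set $K$ behaves like a well-spread biased product distribution. Getting these two requirements to hold simultaneously is the crux of ``meeting the geometry''; everything else (the divergence theorem on a non-rectangular, possibly lower-dimensional $V$; the $\delta$-bookkeeping; lifting $A_{TW}$ to genuine two-way marginals) is routine given the machinery already set up in Sections~\ref{sec:fingerprinting-meets-geometry} and~\ref{sec:composition}.
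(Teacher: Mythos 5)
Your high-level plan matches the paper's (contradicting score bounds, divergence theorem on an $\ell_2$ ball, a projection/post-processing step, and a group-privacy trick for the $\sqrt{\log(1/\delta)}$), but the step you yourself flag as ``the real obstacle'' is precisely the idea the proof cannot do without, and your instantiation of $K$ makes it unresolvable as stated. You take $K$ to be the full set of rank-one sign matrices $\{xy^{\top}\}$ with the tilt $\propto\exp(x^{\top}\Theta y)$; under this tilt $x$ and $y$ are coupled in a way that gives you no handle on either $\lambda_{\max}(\COV(D_\Theta))$ or $\E_{\Theta\sim S}\langle\mu_\Theta,\vec n_\Theta\rangle$, and there is no subspace $W$ of $\Re^{d\times d}$ on which generic directions align with rank-one sign matrices (the set of scaled rank-one matrices is not a subspace). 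The paper avoids this entirely: it fixes a family $\{u^j\}_{j=1}^{k}$ of $k=\Theta(1/\alpha^2)$ pairwise orthogonal Boolean vectors, takes $K=\{u^j\otimes v: v\in\{\pm1\}^d\}$ (a submatrix of $A_{TW}$), and uses a \emph{type-conditioned} exponential family: $u^j$ is chosen uniformly, and only $v$ is tilted. Conditioned on the type, $D_\theta$ is an explicit product measure on $\{\pm1\}^d$, which is what makes the fresh-sample score provably $O(\sqrt d)$-subgaussian (after projecting the algorithm's output onto the span of the $u^j$'s to bound each coordinate of $\calA(x)\cdot u^j$) and makes the surface term $\langle\mu_\theta,\theta/\|\theta\|_2\rangle=\Omega(\sqrt d)$ a direct computation. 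This requires the modified divergence-to-score identity (Proposition~\ref{prop:type-dependent-score-1}) rather than Proposition~\ref{prop:differentiate-to-score}, which your reparametrization does not supply.

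A second concrete gap: you assert the $1/\alpha^2$ arises because ``the two scales combine quadratically,'' but you give no mechanism for it. In the paper the $1/\alpha^2$ is not produced by the tilt at all; it comes from composing the $\sqrt d$ fingerprinting bound with a size-$k=\Theta(1/\alpha^2)$ reconstruction structure carried by the orthogonal family (the total score lower bound is $\Omega(kd)$ while the per-sample upper bound stays $O(\sqrt{d\log(1/\delta)})$). Without that composition, a direct score comparison on your $K$ would at best yield a $1/\alpha$ dependence. Your $\sqrt{\log(1/\delta)}$ mechanism (swapping one user per sub-instance) is also only gestured at; the paper's route is the Steinke--Ullman group-privacy reduction (Lemma~\ref{lemma:group-privacy-reduce-sample}) to $(\log(1/\delta),\sqrt\delta)$-DP combined with the subgaussian tail version of the score upper bound (Proposition~\ref{prop:privacy-to-small-score-subgaussian}), and the stated window on $\delta$ is exactly what keeps the reduced target above the generalization threshold. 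So the proposal is not a proof: the construction that simultaneously delivers the covariance bound, the surface lower bound, and the $1/\alpha^2$ is missing.
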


The proofs of both theorems are largely similar, with Theorem~\ref{theo:composition-lower-bound} being slightly more complicated. We prove Theorem~\ref{theo:composition-lower-bound} in Sections~\ref{sec:composition-setup} to \ref{sec:composition-finish}. Next, we explain the necessary modifications to prove Theorem~\ref{theo:two-way-marginal-lb} in Section~\ref{sec:composition-two-way}.

\subsection{Setup for the Fingerprinting Argument}\label{sec:composition-setup}

Given $N,M,\alpha, \eps,\delta$ as in Theorem~\ref{theo:composition-lower-bound}, we choose some $m = \Theta(\log(M))$ and $d=\Theta(\log(N))$ (constants to be specified). We also set $k = \frac{c}{\alpha^2}$ for some small constant $c > 0$. Let $\{e^i\}_{i=1}^{m}\subseteq \mathbb{R}^m$ be the standard basis and $\{u^j\}_{j=1}^{k}\subseteq \mathbb{R}^k$ be a collection of $k$ pairwise orthogonal Boolean vectors (e.g., Hadamard basis properly scaled). Consider now the following ensemble of vectors:
$$
K = \left\{ e^i \otimes u^j \otimes v: i\in [m], j\in [k],  v\in \{\pm 1\}^d \right\}.
$$
We choose $m,d$ properly so that $N = 2^d\times k\times m$ and $M = 2^{m}\times d\times k$.

\paragraph*{Construction of the query matrix.} We construct the query matrix $A\in \{\pm 1\}^{M\times N}$ now.
\begin{itemize}
\item Every vector $e^i\otimes u^j\otimes v$ identifies a column of $A$.
\item The rows of $A$ are indexed by tuples $(h,p,q)$ where $h:[m]\to \{\pm 1\}$ is a Boolean predicate, $p\in [k]$ and $q\in [d]$ are two indices. Note that there are $2^m\times d\times k$ such tuples.
\item Finally, for an entry of $A$ indexed by row $(h, p, q)$ and column $(e^i\otimes u^j\otimes v)$, set the entry to be $h(i)\cdot u^j_p \cdot v_q$.
\end{itemize}
We aim to prove Theorem~\ref{theo:composition-lower-bound} for the constructed $A$, via the geometric fingerprinting framework.

\paragraph*{Setup of the fingerprinting argument.} We define a family of distributions, which is slightly different than the standard exponential family. In particular, we choose the space of $\theta$ to be
$$
V = \left\{\theta\in \mathbb{R}^{m\times k\times d}: \sum_{i} |\theta_i|\le R \right\}.
$$
We will choose $R$ to be $\frac{mkd}{\sqrt{k}}$. In this way, typically every entry of $\theta$ is roughly $\frac{1}{\sqrt{k}}$. For every $\theta$, we define a \emph{type-conditioned} exponential distribution $D_{\theta}$ as follows:

\begin{itemize}
\item To sample from $D_{\theta}$, first choose a random $e^i$ and a random $u^j$.
\item Then select $v\in \{\pm 1\}^d$ with probability proportional to $\exp(\langle \theta, e^i\otimes u^j\otimes v\rangle)$.
\end{itemize}

In defining $D_{\theta}$, we insist that each $e^i$ and $u^j$ have equal probability of being chosen. For this reason, the standard divergence-to-score lemma (Proposition~\ref{prop:differentiate-to-score}) does not apply as is. Still, for every $x\in \mathrm{supp}(D_{\theta})$, define the \emph{type} of $x$ as the pair $(i,j)$ if $x = e^i\otimes u^j\otimes v$ for some $v$. Then, let $\calA$ be an algorithm operating on $n$ iid samples from $D_{\theta}$ and outputting an $(mkd)$-dimensional vector. We can prove the following proposition.

\begin{proposition}\label{prop:type-dependent-score-1}
For every $i\in [mkd]$, it holds that
$$
\frac{\partial \E_{x\sim D_{\theta}^n}[\calA(x)_i]}{\partial \theta_i} = \sum_{j=1}^n \E_{x\sim D_{\theta}^n}\left[ \calA(x)_i \cdot \left(x^j_i - \E_{x'\sim D_{\theta} \mid type(x') = type(x^j)}[x'_i] \right) \right].
$$
Consequently, it holds that
$$
\mathrm{div}~\mathbb{E}_{x\sim D_{\theta}^{n}}[\calA(x)] = \mathbb{E}\left[\sum_{j\in [n]}\langle \calA(x), x^j - \mathbb{E}_{x':type(x') = type(x^j)}[x']\rangle\right]
$$
\end{proposition}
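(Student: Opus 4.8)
\textit{Plan.} The statement is a minor variant of Proposition~\ref{prop:differentiate-to-score}, and the plan is to mimic that proof, carrying the extra randomness over the "type" $(i,j)$ through the computation. Recall that a sample $x \sim D_\theta$ is drawn by first picking a uniformly random type $t = (i,j)$ — with a distribution that does \emph{not} depend on $\theta$ — and then drawing $v \in \{\pm 1\}^d$ from the exponential tilt $\Pr[v] \propto \exp(\langle \theta, e^i \otimes u^j \otimes v\rangle)$ conditioned on that type. The key point is that the type-selection step contributes nothing to the $\theta$-derivative, so differentiation only "hits" the conditional law of $v$ given the type.

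\textit{Key steps.} First I would write $g(\theta)_i = \E_{x \sim D_\theta^n}[\calA(x)_i]$ and, exactly as in Proposition~\ref{prop:differentiate-to-score}, apply the chain/product rule over the $n$ independent coordinates $x^1, \dots, x^n$, so that
\[
\frac{\partial}{\partial \theta_i} g(\theta)_i \;=\; \sum_{j=1}^{n} \E_{x^{-j} \sim D_\theta^{n-1}}\left[\frac{\partial}{\partial \theta_i} \E_{x^j \sim D_\theta}[\calA(x^{-j}\circ x^j)_i]\right].
\]
Then for a single factor I would condition on the type $t^j$ of $x^j$: $\E_{x^j \sim D_\theta}[\,\cdot\,] = \E_{t^j}\,\E_{x^j \mid \mathrm{type} = t^j}[\,\cdot\,]$, and since the law of $t^j$ is $\theta$-independent, the derivative passes inside the outer expectation over $t^j$. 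Inside, the conditional law of $x^j$ given $\mathrm{type}(x^j) = (a,b)$ is precisely an exponential family over $\{e^a \otimes u^b \otimes v : v \in \{\pm1\}^d\}$, so the same computation as in Proposition~\ref{prop:differentiate-to-score} — differentiating $\Pr[x^j = u]$ and recognizing the score term — gives
\[
\frac{\partial \Pr[x^j = u \mid \mathrm{type} = t]}{\partial \theta_i} \;=\; \Pr[x^j = u \mid \mathrm{type}=t]\cdot\bigl(u_i - \E_{x' \mid \mathrm{type}(x')=t}[x'_i]\bigr),
\]
where I should note that if $u_i$ is a coordinate not "belonging" to type $t$ then $u_i = 0$ identically on that conditional support and the factor $\exp(\langle\theta, u\rangle)$ does not involve $\theta_i$, so the identity still holds with both sides zero. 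Re-assembling: undoing the conditioning on $t^j$ recombines the conditional expectation into $\E_{x' \sim D_\theta \mid \mathrm{type}(x') = \mathrm{type}(x^j)}[x'_i]$ (as a random variable depending on $x^j$ through its type), and summing over $j$ and then over $i \in [mkd]$ and invoking the definition of divergence yields the two displayed equations.

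\textit{Main obstacle.} The routine calculus is identical to Proposition~\ref{prop:differentiate-to-score}; the one place that needs care is the bookkeeping of coordinates that are "inactive" for a given type — i.e., making sure that when we differentiate with respect to $\theta_i$ for an index $i$ in the $(a',b')$-block, the contribution from samples of type $(a,b) \neq (a',b')$ vanishes correctly, so that the final sum over $j$ only picks up, for each coordinate $i$, the samples whose type is compatible with $i$, and that this is exactly what $x^j_i - \E_{x' \mid \mathrm{type}(x')=\mathrm{type}(x^j)}[x'_i]$ encodes (being $0-0$ otherwise). Once that indexing is pinned down, everything else is the same argument as before.
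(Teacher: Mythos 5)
Your proposal is correct and follows essentially the same route as the paper: both reduce to the proof of Proposition~\ref{prop:differentiate-to-score}, observing that the $\theta$-independent type-selection step contributes nothing to the derivative and that the normalizing constant is type-conditional, which turns the score into the type-conditioned score $x_p - \E_{x' \mid type(x')=type(x)}[x'_p]$. Your explicit check that inactive coordinates contribute $0-0$ is a detail the paper leaves implicit, but it is the same argument.
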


\begin{proof}[Proof sketch.]
The proof is nearly identical to that of Proposition~\ref{prop:differentiate-to-score}, except for one detail: in the new setup, for every $x = e^i\otimes u^j\otimes v$, we have
$$
\begin{aligned}
\frac{\partial \Pr[D_\theta = x]}{\partial{\theta_p}}
&= \frac{\partial}{\partial \theta_p} \left( \frac{1}{mk} \cdot \frac{\exp(\langle e^i\otimes u^j\otimes v,\theta\rangle)}{\sum_{v'} \exp(\langle e^i\otimes u^j\otimes v,\theta\rangle)}\right) \\
&= \frac{1}{mk} \left( \frac{\exp(\langle e^i\otimes u^j\otimes v,\theta\rangle) \cdot (e^i \otimes u^j\otimes v)_p }{\sum_{v'} \exp(\langle e^i\otimes u^j\otimes v,\theta\rangle)} \right. -\\
&~~~~~~~~ \left. \frac{\exp(\langle e^i\otimes u^j\otimes v,\theta\rangle) \sum_{v'} \exp(\langle e^i\otimes u^j\otimes v', \theta\rangle) \cdot (e^i \otimes u^j\otimes v')_p }{ \left( \sum_{v'} \exp(\langle e^i\otimes u^j\otimes v,\theta\rangle) \right)^2}\right) \\
&= \Pr[D_\theta=x] \cdot \left( x_p - \E_{x':type(x') = type(x)}[ x'_p ] \right).
\end{aligned}
$$
To prove Proposition~\ref{prop:type-dependent-score-1}, we repeat the lines of the proof for Proposition~\ref{prop:differentiate-to-score} and use the new derivative formula to replace the appearance of ``$\frac{\partial \Pr[D_\theta = x]}{\partial \theta_i}$'' there as appropriate.
\end{proof}

Proposition~\ref{prop:type-dependent-score-1} suggests defining the score of data points in a \emph{type-dependent} way: given a vector $q\in \mathbb{R}^{m\times k\times d}$ and the distribution $D_{\theta}$, for every $x\in \mathrm{supp}(D_\theta)$, we define the score of $x$ with respect to $q$ as
$$
score(x) \coloneqq \mathbb{E}_{x'\sim D_{\theta}:type(x')=type(x)}[\langle q, x - x'\rangle].
$$
This new score definition enjoys all the properties we need: the score of a fresh sample is zero on average and well concentrated, while the average score of in-sample data points is significantly larger, as we will show in a moment.

\subsection{``Surgery'' on the Query-Releasing Algorithm} \label{sec:composition-surgery}

To prove Theorem~\ref{theo:composition-lower-bound}, we assume there is a sample-efficient private algorithm $\calA$ and try to derive a contradiction. However, the output of $\calA$ is an $M$-dimensional vector (since there are $M$ queries in total), while the fingerprinting argument requires the dimension of $\theta$ and that of the algorithm to match. To reconcile the mismatch, we post-process the output of $\calA$, and get an estimate to the mean of $D_{\theta}$.

In this subsection, we interpret $K = \mathrm{supp}(D_{\theta})$ as both a discrete set with support size $mk2^{d}$ (the discrete perspective) and an ensemble of vectors living in $\mathbb{R}^{m\times k\times d}$ (the geometric perspective).

\paragraph*{Understanding the output of $\calA$.} Taking the first interpretation, by inspecting the design of $A\in \{\pm 1\}^{M\times N}$, we see that $\calA$ is answering the following ensemble of statistical queries:
$$
\left\{ h(e^i\otimes u^j\otimes v) \coloneqq g(i)\cdot (u^j)_\ell \cdot v_r \mid g:[m]\to \{\pm 1\}, \ell\in [k], r\in [d]  \right\}.
$$
Note that there are $2^{m}\cdot k\cdot d$ queries in total. Say we have received $\alpha$-accurate answers to these queries. Let $\{\hat{a}_{g,\ell, r}\}$ be the collection of answers.

Now, we switch from the discrete perspective on $\mathrm{supp}(D_{\theta})$ to a geometric perspective. Before we start to post-process $\{\hat{a}_{g,\ell,r}\}$, we denote $\mu_{\theta}=\mathbb{E}[D_{\theta}]$ to be the mean of $D_{\theta}$ and observe the following useful properties about it:

\begin{itemize}
\item For every $i\in [m], \ell\in[k], r\in[d]$, we have $|(\mu_{\theta})_{i,j,k}| \in 0 \pm \frac{1}{m}$. This is simply because each $e_i$ is sampled by $D_{\theta}$ with probability $\frac{1}{m}$.
\item For every $i\in [m]$ and $r\in [d]$, we have $|\langle (\mu_{\theta})_{i,*,r}, u^j\rangle| \le \frac{1}{m}\cdot \frac{\langle u^j, u^j\rangle}{k} \le \frac{1}{m}$. This follows because $\{ u^j \}$ are mutually orthogonal, and each $u^j$ gets sampled with probability $\frac{1}{k}$.
\end{itemize}

We apply the following two post-processing on $\{\hat{a}_{g,\ell, r}\}$ to find a ``nice'' $\hat{\mu}$ that approximates $\mu_{\theta}$ well.

\paragraph*{The reconstruction argument.} First, for every $\ell\in [k]$ and $r\in [d]$, we can find a vector $\tilde{\mu}_{*,\ell, r}\in [\pm 1/m]^{m}$ such that
$$
\|\tilde{\mu}_{*,\ell, r} - (\mu_{\theta})_{*,\ell, r}\|_1 \le 2\alpha.
$$
To achieve this, note that for every $g:[m]\to \{\pm 1\}$, the output $\hat{a}_{g,\ell,r}$ places an inner-product restriction of $\mu_{\theta}$ by requiring that $\sum_{i=1}^m (\mu_\theta)_{i,\ell,r} \cdot g(i) \in \hat{a}_{g,\ell,r} \pm \alpha$. In light of this observation, all we need to do is to find a vector $\tilde{\mu}_{*,\ell, r}$ that is consistent with all the restrictions from $\{\hat{a}_{*,\ell,r}\}_{g}$ up to additive error $\alpha$. Such a vector exists because $(\mu_{\theta})_{*,\ell, r}$ is one. Moreover, since every statistical test $g:[m]\to \{\pm 1\}$ cannot tell $(\mu_{\theta})_{*,\ell, r}$ and $\tilde{\mu}_{*,\ell,r}$ apart, we know the two vectors are close in $\ell_1$ distance\footnote{We note that this is essentially the reconstruction argument for query classes with large VC dimension: c.f.~\cite[Section 5.1.1]{BunUV14}}.

\paragraph*{A projection step.} To bound the ``variance'' of the score later, we want to apply one more surgery on $\tilde{\mu}$. Specifically, for every $i\in [m]$ and $r\in [d]$, we consider the vector $\tilde{\mu}_{i,*,r}$ and project it to the following space by minimizing $\ell_1$-movement:
$$
H\coloneqq \left\{\frac{1}{km}\sum_{j=1}^{k} \lambda_j u^j:\lambda \in [-1,1]^k\right\}.
$$
Denote the resulting vector to be $\hat{\mu}_{i,*,r}$. To analyze the error, note that $(\mu_{\theta})_{i,*,r}$ lies in the set $H$ and it is close to $\tilde{\mu}_{i,*,r}$ in $\ell_1$-distance. Hence, an application of triangle inequality shows that
$$
\|\hat{\mu} - \mu_{\theta}\|_1 \le 2 \|\tilde{\mu} - \mu_{\theta}\|_1 \le 4\alpha kd.
$$

We remark here that this argument uses the assumption that the algorithm $\calA$ is accurate for the population, rather than just for the sample.

\paragraph*{Summary.} We compose the post-processing procedure above with the algorithm $\calA$. This gives us an algorithm $\calB$ which, on input $n$ iid samples from $D_{\theta}$, returns an approximate average $\hat{u}\in \mathbb{R}^{m\times k\times d}$ with the following guarantees:

\begin{itemize}
\item For every $i\in [m]$, every $r\in [d]$ and $u^j$, we have $|\langle \hat{\mu}_{i,*,r}, u^j\rangle|\le \frac{1}{m}$.
\item Letting $\mu_\theta =\mathbb{E}[D_\theta]$ be the true mean, we have $\| \mu_\theta - \hat{\mu} \|_1\le 4\alpha k d$.
\end{itemize}

For convenience, we define $\hat{\mu}(\theta) \coloneqq \E_{x\sim D_{\theta}^n}[\calB(x)]$ to be the average output of $\calB$ on a data drawn from $D_{\theta}$. We understand $\hat{\mu}$ as a vector field over $\mathbb{R}^{mkd}$.

\subsection{Proof of Theorem~\ref{theo:composition-lower-bound}}\label{sec:composition-finish}

We are ready to run the fingerprinting argument and finish the proof of Theorem~\ref{theo:composition-lower-bound}.

\paragraph*{Score Lower Bound.} First, let us lower bound the surface integral and, consequently, the average divergence. Recall the space of $\theta$ is given by
$$
V = \left\{ \theta\in \mathbb{R}^{m\times k\times d}: \|\theta\|_1 \le \frac{mkd}{\sqrt{k}} \right\}.
$$
Let $S = \partial V$ be the surface of $V$. We claim the following:

\begin{itemize}
\item For $V$ being a $d$-dimensional $\ell_1$-ball of radius $R$ and $S$ its boundary, one has $\frac{\Area(S)}{\Vol(V)} = \frac{d\sqrt{d}}{R}$.
\item Back to our example, $V$ is a $(mkd)$-dimensional $\ell_1$-ball of radius $\frac{mkd}{\sqrt{k}}$. Hence, it holds that $\frac{\Area(S)}{\Vol(V)} = \frac{mkd\sqrt{mkd}\cdot \sqrt{k}}{mkd}=k\sqrt{md}$.
\item For every $\theta\in S$, the unit normal vector at $\theta$ against $S$ is $\frac{1}{\sqrt{mkd}}\sign(\theta)$.
\end{itemize}

\newcommand{\score}{\mathrm{score}}

Fix a $\theta$ and consequently $D_{\theta}$. For every $x\in K$ and $q\in \mathbb{R}^{mdk}$, denote
$$
\score(x;q) \coloneqq \left\langle x - \E_{x'\sim D_{\theta}\mid type(x) = type(x')}[x'], q \right\rangle.
$$
Using Proposition~\ref{prop:type-dependent-score-1} and the divergence theorem, we obtain: 
\begin{align}
&~~~~ \E_{\theta\sim V} \E_{x^1,\dots, x^n\sim D_{\theta}, \calB(x)} \left[ \sum_{j=1}^{n} \score(x^j, \calB(x)) \right] \notag \\
& = \E_{\theta\sim V} \left[ \diver \E_{x\sim D_\theta^n,\calB(x)}[\calB(x)] \right] \notag \\
&= \frac{\Area(S)}{\Vol(V)} \E_{\theta\sim S}\left[ \left\langle \frac{\sign(\theta)}{\sqrt{mkd}}, \E_{x}[\calB(x)] \right\rangle \right] \notag \\
&\ge \frac{\Area(S)}{\Vol(V)} \left( \E_{\theta\sim S}\left[ \left\langle \frac{\sign(\theta)}{\sqrt{mkd}}, \mu_{\theta} \right\rangle \right] - \frac{4\alpha kd}{\sqrt{mkd}} \right). \label{equ:composition-surface-integral}
\end{align}
Here, the last inequality is true because we have shown that $\| \hat{\mu}(\theta) - \mu(\theta) \|_{1} \le 4\alpha kd$ in Section~\ref{sec:composition-surgery}.

Now let us study the term $\E_{\theta\sim S}\left[\langle \sign(\theta), \E[D_{\theta}]\rangle\right]$. By definition of $D_{\theta}$, we can first sample and condition on $e^i$ and $u^j$. Then, we would like to understand the average of the following inner product (the ``dot'' below denotes tensor multiplication):
\begin{align}
\E_{\theta} \left\langle \sign(\theta)\cdot (e^i \otimes u^j), \E_{(e^i\otimes u^j\otimes v)\sim D_{\theta} \mid e^i\otimes u^j }[ v ] \right\rangle. \label{equ:IP-after-conditioning-ei-ej}
\end{align}
Over a random $\theta\sim S$, we see that $\sign(\theta) \cdot (e^i\otimes u^j)$ is a $d$-dimensional vector that is entry-wise $\Theta(\sqrt{k})$.\footnote{To see this, interpret $\sign(\theta)$ as stacking of $m$ matrices each of dimension $k\times d$. $\sign(\theta)\cdot e^i$ restricts onto one of the matrices, and $\sign(\theta)\cdot (e^i\otimes u^j)$ further takes a signed summation of the $k$ rows, where the signs are given by $u^j$. The claim that $\sign(\theta) \cdot (e^i\otimes u^j)$ is entry-wise $\Theta(\sqrt{k})$ follows from the randomness of $\theta$.} Also, recall that
$$
\Pr_{(e^i\otimes u^j\otimes v\sim D_{\theta})\mid e^i\otimes u^j} [v] \propto \exp(\langle \theta \cdot (e^i\otimes u^j), v\rangle).
$$
Over a random $\theta\sim S$, the vector $\theta \cdot (e^i\otimes u^j)$ is entrywise $\Theta(1)$, implying that each entry of $v$ has a constant bias toward $\sign(\theta \cdot (e^i\otimes u^j))$. Since $\sign(\theta \cdot (e^i\otimes u^j))$ and $\sign(\theta) \cdot (e^i\otimes u^j)$ are highly correlated, we conclude that
\begin{align}
\eqref{equ:IP-after-conditioning-ei-ej} \ge \Omega( d) \cdot \Omega(\sqrt{k}) \ge \Omega(d\sqrt{k}). \label{equ:composition-IP-lb}
\end{align}
Recall that we set $k = \frac{c}{\alpha^2}$ for a small constant $c$. Hence, from~\ref{equ:composition-surface-integral} and \eqref{equ:composition-IP-lb}, we see that
$$
\eqref{equ:composition-surface-integral} \ge \frac{\Area(S)}{\Vol(V)} \left( \frac{d\sqrt{k}}{\sqrt{mkd}} - \frac{4\sqrt{c}\cdot \sqrt{k}d}{\sqrt{mkd}} \right) \ge \Omega\left( k\sqrt{md} \cdot \frac{d\sqrt{k}}{\sqrt{mkd}} \right) \ge \Omega\left( kd \right).
$$

\paragraph*{Score upper bound.} We now derive an upper bound on the score. We could easily repeat the argument in Section~\ref{sec:fingerprinting-hypercube}, but then we would end up with a lower bound of the same order as \cite{BunUV14} without achieving the additional $\sqrt{\log(1/\delta)}$ factor.

To unleash the full power of our framework, we borrow a trick from \cite{SteinkeU16}. In particular, we use the following connection established via group privacy.
\begin{lemma}[\cite{SteinkeU16}]\label{lemma:group-privacy-reduce-sample}
    Let $\eps \in (0,1)$ and $\delta < 1$. Suppose there is a $(\eps,\delta)$-DP algorithm for private query releasing on $A\in \{\pm 1\}^{N\times M}$ with sample complexity $n$. Then, for every $p\ge 1$, there is a $(p\eps, \frac{e^{p\eps}-1}{e^\eps - 1}\delta)$-DP algorithm for the same task with sample complexity $\frac{n}{p}$.
\end{lemma}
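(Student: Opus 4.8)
The plan is to establish this by the classical \emph{group privacy} fact together with a simple data-duplication reduction, so that answering queries on $\lfloor n/p\rfloor$ points is reduced to answering them on $n$ points. Assume $p$ is a positive integer (this is all we need in the application; the general real case follows by rounding $p$ up). Let $\mathcal{M}$ be the promised $(\eps,\delta)$-DP algorithm that takes $n$ samples from $\calX$ and releases the queries of $A$ to the desired accuracy. I would define a new algorithm $\mathcal{M}'$ that takes $n' = \lfloor n/p\rfloor$ samples $e^1,\dots,e^{n'}$, forms the size-$n$ dataset $D$ consisting of $p$ copies of each $e^i$ (padding with an arbitrary fixed element of $\calX$ if $p n' < n$), and outputs $\mathcal{M}(D)$.

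For utility, I would observe that duplication preserves the answer to every counting query up to a negligible amount: for a query given by $\rho:\calX\to[-1,1]$,
$$
\frac1n\sum_{x\in D}\rho(x) \;=\; \frac{p\,n'}{n}\cdot\frac1{n'}\sum_{i=1}^{n'}\rho(e^i) \;+\; O\!\left(\frac{p}{n}\right),
$$
where the error term (exactly zero when $p\mid n$) is the contribution of the at most $p$ padding points and is negligible in every parameter regime of interest. Hence $\mathcal{M}'$ inherits the $\ell_\infty$ (or $\ell_2^2$) accuracy guarantee of $\mathcal{M}$ on its own $n'$-sample input, and $n' \le n/p$.

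For privacy, the key point is that two neighbouring inputs of $\mathcal{M}'$ (differing in one coordinate) are mapped by the deterministic duplication step to two size-$n$ datasets that differ in exactly $p$ coordinates (the padding block is unaffected). So it suffices to show that any $(\eps,\delta)$-DP mechanism $\mathcal{M}$ is $\bigl(p\eps,\ \tfrac{e^{p\eps}-1}{e^\eps-1}\delta\bigr)$-indistinguishable on datasets at Hamming distance $p$, and then invoke closure of this property under the (pre-processing) duplication map. To prove the distance-$p$ statement I would pick a path $D = D_0, D_1,\dots,D_p = D'$ in which consecutive datasets are neighbours, and telescope: for any measurable $S$,
$$
\Pr[\mathcal{M}(D_0)\in S] \le e^{\eps}\Pr[\mathcal{M}(D_1)\in S] + \delta \le \cdots \le e^{p\eps}\Pr[\mathcal{M}(D_p)\in S] + \Bigl(\sum_{t=0}^{p-1} e^{t\eps}\Bigr)\delta,
$$
and note $\sum_{t=0}^{p-1} e^{t\eps} = \frac{e^{p\eps}-1}{e^\eps-1}$. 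Combining this with the reduction yields that $\mathcal{M}'$ is $\bigl(p\eps,\ \tfrac{e^{p\eps}-1}{e^\eps-1}\delta\bigr)$-DP with sample complexity $\lfloor n/p\rfloor\le n/p$, as claimed.

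There is no genuine obstacle here; the only steps that need a little care are (i) verifying that the $\delta$ terms accumulate \emph{geometrically} rather than linearly in the telescoping above — this is precisely the origin of the factor $\tfrac{e^{p\eps}-1}{e^\eps-1}$ in place of the naive $p$ — and (ii) the harmless handling of non-divisibility of $n$ by $p$, which the padding step absorbs.
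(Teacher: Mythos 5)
Your proposal is correct and is exactly the standard duplication-plus-group-privacy argument from \cite{SteinkeU16} that the paper cites without reproducing: the telescoping computation giving the geometric sum $\sum_{t=0}^{p-1}e^{t\eps}=\frac{e^{p\eps}-1}{e^\eps-1}$ and the observation that replace-one adjacency on the small dataset becomes Hamming distance $p$ after duplication are precisely the intended steps. The only loose end is that for non-integer $p$ (as in the paper's application, where $p=\frac{\log(1/\delta)}{4\eps}$) rounding up yields slightly weaker constants than the literal statement, but this is immaterial for the asymptotic lower bounds and is the standard reading.
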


Back to our example, we pick $p = \frac{\log(1/\delta)}{4 \eps}$. In order to prove Theorem~\ref{theo:composition-lower-bound}, by Lemma~\ref{lemma:group-privacy-reduce-sample}, it suffices to show that there is \emph{no} $(\log(1/\delta),\sqrt{\delta})$-DP algorithm with sample complexity $n\le o\left(  \frac{\sqrt{\log(N)} \log(M)}{\sqrt{\log(1/\delta)} \alpha^2} \right)$, as this would ``lift'' to a $\Omega(\frac{\sqrt{\log(N) \log(1/\delta)}\log(M)}{\eps\alpha^2})$ lower bound for the original parameter setting.

Suppose for contradiction that such an algorithm $\calB$ exists. The lower bound part tells us the total score must be $\Omega(kd)$. To establish the score upper bound, we need a stronger proposition than Proposition~\ref{prop:privacy-imply-small-score-difference}, utilizing the concentration property on the score of an independent point. Namely, we claim:
\begin{proposition}\label{prop:privacy-to-small-score-subgaussian}
    Suppose $X$ and $X'$ are a pair of $(\eps,\delta)$-indistinguishable random variables supported on $[-B, B]$, such that $\E[X] = 0$ and $\Pr[X > t] \le e^{-\frac{t^2}{C^2}}$ for every $t\ge 4 C\sqrt{\eps}$. Then,
    $$
    \E[X'] \le O(\delta B + C \sqrt{\eps}).
    $$
\end{proposition}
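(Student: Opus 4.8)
The plan is to reuse the layer-cake decomposition from the proof of Proposition~\ref{prop:privacy-imply-small-score-difference}, replacing the Chebyshev step by the one-sided sub-Gaussian tail of $X$. The point to get right is that we are given control on the \emph{upper} tail of $X$ only (and only beyond the threshold $4C\sqrt{\eps}$); there is no bound on its lower tail, so the negative part of $\E[X']$ cannot be estimated directly and must instead be tied to the positive part via the hypothesis $\E[X]=0$.

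Concretely, first I would write, using $|X'|\le B$,
\[
\E[X'] \;=\; \int_0^B \bigl(\Pr[X'\ge t] - \Pr[X'\le -t]\bigr)\,dt ,
\]
and set $T:=4C\sqrt{\eps}$ (if $T\ge B$ the claim is immediate, as $\E[X']\le B\le 4C\sqrt{\eps}$). I split this integral at $T$. On $[0,T]$ I bound $\Pr[X'\ge t]\le 1$ and discard the nonnegative term $-\Pr[X'\le -t]$, which costs at most $T=4C\sqrt{\eps}$. On $[T,B]$ I apply $(\eps,\delta)$-indistinguishability in both directions, $\Pr[X'\ge t]\le e^{\eps}\Pr[X\ge t]+\delta$ and $\Pr[X'\le -t]\ge e^{-\eps}(\Pr[X\le -t]-\delta)$, so this piece is at most $e^{\eps}a - e^{-\eps}b + 2B\delta$, where $a:=\int_T^B\Pr[X\ge t]\,dt$ and $b:=\int_T^B\Pr[X\le -t]\,dt$. (One could equivalently first pay $O(B\delta)$ to pass to $(\eps,0)$-indistinguishability as in Proposition~\ref{prop:privacy-imply-small-score-difference}; either way the $\delta$-loss is $O(B\delta)$.)

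Next I would invoke $\E[X]=0$, which forces $\int_0^B\Pr[X\ge t]\,dt=\int_0^B\Pr[X\le -t]\,dt$ and hence $|a-b|\le T$ (the two truncated tails below $T$ each lie in $[0,T]$). Then $e^{\eps}a - e^{-\eps}b = (e^{\eps}-e^{-\eps})a + e^{-\eps}(a-b)\le (e^{\eps}-e^{-\eps})\,a + T$, and the sub-Gaussian hypothesis (valid since $t\ge T=4C\sqrt{\eps}$) gives $a\le\int_T^{\infty}e^{-t^2/C^2}\,dt$. Collecting the pieces, the whole bound reduces to the single estimate
\[
\bigl(e^{\eps}-e^{-\eps}\bigr)\int_{4C\sqrt{\eps}}^{\infty}e^{-t^2/C^2}\,dt \;=\; O\!\left(C\sqrt{\eps}\right),
\]
after which $\E[X']\le 2T + (e^{\eps}-e^{-\eps})a + 2B\delta = O(C\sqrt{\eps}+B\delta)$, as claimed.

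The one genuine computation, and the step I expect to need the most care, is this last display, because the proposition is applied (via Lemma~\ref{lemma:group-privacy-reduce-sample}) with $\eps$ as large as $\log(1/\delta)$, so the $e^{\eps}$ factor is not harmless. I would split into two regimes. For $\eps\le 1$: bound the integral crudely by $\int_0^{\infty}e^{-t^2/C^2}\,dt=\tfrac{\sqrt{\pi}}{2}C$ and use $e^{\eps}-e^{-\eps}=O(\eps)$, giving $O(\eps C)\le O(\sqrt{\eps}\,C)$. For $\eps>1$: use the standard Gaussian tail bound $\int_a^{\infty}e^{-t^2/C^2}\,dt\le\tfrac{C^2}{2a}e^{-a^2/C^2}$ with $a=4C\sqrt{\eps}$, so the integral is at most $\tfrac{C}{8\sqrt{\eps}}e^{-16\eps}$ and $(e^{\eps}-e^{-\eps})\cdot\tfrac{C}{8\sqrt{\eps}}e^{-16\eps}\le\tfrac{C}{8\sqrt{\eps}}e^{-15\eps}=O(C)\le O(\sqrt{\eps}\,C)$. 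This is exactly why the cutoff sits at $4C\sqrt{\eps}$: the factor $e^{-16\eps}$ comfortably absorbs the $e^{\eps}$ blow-up introduced by the group-privacy reduction, which is the whole reason this proposition strengthens Proposition~\ref{prop:privacy-imply-small-score-difference}.
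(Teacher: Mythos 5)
Your proof is correct, and its skeleton is the same as the paper's: pay $O(B\delta)$ to pass to $(\eps,0)$-indistinguishability, split the layer-cake integral at $4C\sqrt{\eps}$, and kill the tail beyond the threshold with the sub-Gaussian hypothesis. The genuine difference is in how the negative part of $X'$ is handled. The paper simply discards it, bounding $\E[X']\le\int_0^B\Pr[X'\ge t]\,dt$, and then asserts $\int_{4C\sqrt{\eps}}^{B}\exp(-t^2/C^2+\eps)\,dt\le O(C\sqrt{\eps})$ in one line; that inequality is only valid when $\eps=\Omega(1)$ (as $\eps\to0$ the left side tends to $\Theta(C)$, not $O(C\sqrt{\eps})$). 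This is harmless where the proposition is actually invoked, since the group-privacy reduction feeds it $\eps=\log(1/\delta)\ge 1$, but it means the paper's proof, read literally, does not cover small $\eps$. Your extra bookkeeping --- retaining the $-\Pr[X'\le -t]$ term, using $\E[X]=0$ to force $|a-b|\le 4C\sqrt{\eps}$, and extracting the factor $e^{\eps}-e^{-\eps}=O(\eps)$ in the small-$\eps$ regime --- is exactly what is needed to make the stated bound hold uniformly in $\eps$, and your two-regime evaluation of the final integral (crude bound for $\eps\le 1$, Gaussian tail bound absorbing $e^{\eps}$ via $e^{-16\eps}$ for $\eps>1$) is right. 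In short: same approach, but your version is strictly more careful and proves the proposition as stated rather than only in the regime where it is applied.
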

\begin{proof}
    Again, we pay a price of $B\delta$ to consider a random variable $X'$ that is $(\eps,0)$-indistinguishable from $X$. For this new $X'$, we have
    $$
    \begin{aligned}
    \E[X']
    &\le \int_{0}^B \Pr[X'\ge t] dt \\
    &\le 4 C\sqrt{\eps} + \int_{4 C\sqrt{\eps}}^{B} \Pr[X'\ge t] dt \\
    &\le 4 C\sqrt{\eps} + \int_{4 C\sqrt{\eps}}^{B} \exp(-\frac{t^2}{C^2} + \eps) dt \\
    &\le O(C\sqrt{\eps}).
    \end{aligned}
    $$
    This completes the proof.
\end{proof}

Consider now the random variable $\score(x';\calB(x))$, where the randomness is over $x',x$ and $\calB(x)$. We first condition on $x,\calB(x)$ and $type(x')$. Suppose $type(x') = e^i\otimes u^j$. It remains to sample the ``$v$'' part and calculate the score of $x' = e^i\otimes u^j\otimes v$. After conditioning on $e^i\otimes u^j$, the score can be equivalently written as
$$
\langle v - \E[v'], \calB\cdot (e^i\otimes u^j)\rangle.
$$
As $v$ is sampled from a Boolean cube according to an exponential tilt, it is clearly seen that each coordinate of $v$ contributes independently to the score. By the argument of Section~\ref{sec:composition-surgery}, each coordinate of $\calB\cdot (e^i\otimes u^j)$ is bounded by $\frac{1}{m}$. Therefore, we conclude that the score of a fresh point is $T$-subgaussian with $T = \Theta(\frac{\sqrt{d}}{m})$. Using Proposition~\ref{prop:privacy-to-small-score-subgaussian}, this implies that, for every in-sample data point $x^j$, it holds that
$$
\E_{x,\calB(x)}[\score(x^j;\calB(x)] \le O(\frac{\delta d}{m} + \frac{\sqrt{\log(1/\delta)} \sqrt{d}}{m}) \le O(\frac{\sqrt{d \log(1/\delta)}}{m}).
$$
But remember, the total score must be at least $\Omega(kd)$ for the algorithm to be accurate. As such, we conclude that the number of samples $\cal$ uses is at least
$$
\Omega\left(  \left(\frac{\sqrt{d \log(1/\delta)}}{m} \right)^{-1} \cdot kd  \right) \ge \Omega\left( \frac{\sqrt{d} m k}{\sqrt{\log(1/\delta)}} \right),
$$
which is what we desired, because we had set $d = \Theta(\log(N))$, $m = \Theta(\log(M))$ and $k = \Theta(\frac{1}{\alpha^2})$.

\begin{remark}\label{remark:low-sample-no-generalization}
    One might wonder whether one can take $\delta$ to be $2^{-d/\alpha^2}$, and prove a lower bound of $\frac{\sqrt{\log N}\log M}{\alpha^3 \eps}$ (this would imply a new lower bound against pure-DP algorithms). This is not possible under the reduction of Lemma~\ref{lemma:group-privacy-reduce-sample}. Because by taking $p = \frac{\log(1/\delta)}{\eps}$ there, the sample complexity lower bound we aim for is of order $\frac{\log M}{\alpha}$. However, this number is so small that a sample-accurate algorithm does not necessarily generalize, and the projection step in our argument would fail.
    The lowest $\delta$ we can ``afford'' is $\delta \approx 2^{-d}$, which asks us to prove a lower bound of order $\frac{\log(M)}{\alpha^2}$. In this regime, any sample-accurate algorithm still generalizes, and our proof technique applies.
\end{remark}

\subsection{Lower Bounds for Two-Way Marginals}\label{sec:composition-two-way}

In this section, we prove Theorem~\ref{theo:two-way-marginal-lb}. The proof structure is largely similar to that of Theorem~\ref{theo:composition-lower-bound}, with a couple of key differences we highlight below:
\begin{itemize}
    \item Theorem~\ref{theo:composition-lower-bound} requires composing three lower bounds (the $\Omega(m)$ and $\Omega(1/\alpha^2)$ reconstruction lower bounds and the $\Omega(\sqrt{d})$ one-way marginal lower bounds). For two-way marginals, we only compose an $\Omega(1/\alpha^2)$ reconstruction lower bound with a $\Omega(\sqrt{d})$ one-way marginal lower bound.
    \item Since we are interested in $\ell_2^2$ error, we will set the space of $\theta$ be an $\ell_2$-ball of appropriate radius and carry out the analysis.
\end{itemize}

We detail the argument below.

\subsubsection{Setup for Fingerprinting Argument}

Recall that $\alpha \gg \frac{1}{\sqrt{d}}$. We set $k = \frac{c}{\alpha^2}$ for some small $c > 0$. Let $\{u^j\}_{j=1}^{m}\subseteq \mathbb{R}^k$ be a collection of $k$ pairwise orthogonal Boolean vectors. We work with the following ensemble of vectors:
$$
K = \{u^j\otimes v : v\in \{\pm 1\}^d\}.
$$
We define a query matrix $A:\{\pm 1\}^{kd\times k 2^d}$ from $K$, by naturally concatenating all vectors in $K$ as column vectors. When $k\approx d$, the matrix $A$ is a sub-matrix of the two-way marginal query matrix $A_{TW} \in \{\pm 1\}^{d^2\times (2^{2d})}$. Namely, $A$ restricts to a subset of all possible attribute profiles. For this reason, lower bounds against $A$ lift to that against $A_{TW}$ naturally.

For the case $k < d$, we may duplicate each entry of $u^j$ by $\frac{d}{k}$ times, and get a query matrix $A'\in \{\pm 1\}^{d^2\times (k\times 2^d)}$ that is a sub-matrix of $A_{TW}$. This does not significantly change the privacy-utility trade-off: namely, a ($\ell_2^2$) error-sample lower bound of $(\alpha^2 dk, n)$ for $A$ translates directly to a lower bound of $(\alpha^2 d^2,n)$ for $A'$, which would imply the same lower bound for $A_{TW}$. However, for us it will be more convenient to work with the $kd$-by-$k2^d$ matrix.

Continuing, we choose the space of $\theta$ as $V = B_{\ell_2}(0, R) 
\subseteq \mathbb{R}^{kd}$ with radius $R = \sqrt{d}$. Note that for a typical $\theta\sim V$, each entry of $\theta$ is roughly $\frac{1}{\sqrt{k}}$. For every $\theta\in V$, again define the type-conditioned exponential tilt $D_{\theta}$ as
\begin{itemize}
    \item To sample from $D_{\theta}$, first select a random $u^j$.
    \item Then, select $v\in \{\pm 1\}^d$ with probability proportional to $\exp(\langle \theta, u^j\otimes v\rangle)$.
\end{itemize}
A version of Proposition~\ref{prop:type-dependent-score-1} holds in this case. Namely, define the type of a vector $x=u^j\otimes v$ as $u^j$. We define the score of $x$ w.r.t.~$q\in \mathbb{R}^{kd}$ as $\score(x;q) \coloneqq \left\langle x - \E_{x':type(x')=type(x)}[x'], q \right\rangle$. Similarly to Proposition~\ref{prop:type-dependent-score-1}, we have
$$
\diver \E_{x\sim D_\theta^n}[\calA(x)] = \E\left[ \sum_{j\in [n]}\langle \calA(x), x^j - \E_{x':type(x')=type(x^j)}[x]  \rangle\right].
$$
We define the type-conditioned score function $\score(x';\calA(x))$ accordingly.

\subsubsection{Post-Processing the Algorithm}\label{sec:composition-two-way-surgery}

Suppose there is an algorithm $\calA$ for query release with workload matrix $A$, and $\calA$ has $\ell_2^2$-error at most $\alpha^2 kd$. Let $\tilde{\mu}$ be the output of $\calA(x)$ where $x\sim D_{\theta}^n$. Let $\mu_\theta$ denote $\E[D_{\theta}]$. We know that $\|\tilde{\mu} - \mu_\theta\|_2 \le \alpha \sqrt{kd}$.  Furthermore, we have that for every $r\in [d]$, the vector $(\mu_{\theta})_{*,r}\in\mathbb{R}^k$ satisfies that $\langle (\mu_\theta)_{*,r},u^j\rangle \le \frac{1}{k}\langle u^j, u^j\rangle \le 1$.

Similarly as in Section~\ref{sec:composition-surgery}, for every $r\in [d]$, we project $\tilde{\mu}_{*,r}$ to the set
$$
H = \left \{ \frac{1}{k} \sum_{j=1}^{k} \lambda_j u^j :  \lambda \in [-1,1]^k  \right\}
$$
Denote the result to be $\hat{\mu}$. Note that $(\mu_\theta)_{*,r}$ is in the set $H$ and it is close to $\tilde{\mu}$ in $\ell_2$ distance. An application of triangle inequality shows that $\|\hat{\mu} - \mu_\theta\|_2 \le 2\alpha\sqrt{kd}$.

\subsubsection{Correlation Analysis}

Now, let us give contradicting upper and lower bounds on the score of the algorithm.

\paragraph*{Score lower bound.} In terms of lower bound, we have
\begin{align}
\E_{\theta\sim V} \E_{x\sim D_\theta^n,\calA(x)} \left[\sum_{j=1}^{n} \score(x^j;\calA(x)) \right]
&= \frac{\Area(S)}{\Vol(V)} \E_{\theta\sim S} \left[\langle \E[\calA(x)], \frac{\theta}{\|\theta\|_2} \rangle \right] \notag \\
&\ge \frac{\Area(S)}{\Vol(V)} \E_{\theta\sim S} \left[\left\langle \E[D_\theta], \frac{\theta}{\|\theta\|_2} \right\rangle - \alpha \sqrt{kd} \right]. \label{equ:two-way-score-lb}
\end{align}
We use a similar argument as in Section~\ref{sec:composition-finish}: to understand $\langle \E[D_\theta], \theta\rangle$, we first sample and condition on $u^j$ (independent of $\theta$). Then, we would like to lower bound
$$
\E_{\theta} [\langle \E_{u^j\otimes v\sim D_{\theta}\mid u^j}[v], \theta \cdot u^j \rangle].
$$
Recall that $\theta \sim \partial B_2(0,\sqrt{d})$. Hence, each entry of $\theta$ is typically $\Theta(1/\sqrt{k})$. Consequently, the vector $\theta u^j$ is typically entry-wise $\Theta(1)$. Hence, the inner product above is $\Omega(d)$. Since $\|\theta\|_2 = \sqrt{d}$, we conclude that
$$
\eqref{equ:two-way-score-lb} \ge \frac{\Area(S)}{\Vol(V)}[ \Omega(\sqrt{d}) - \alpha \sqrt{kd} ] \ge \Omega(\frac{kd}{\sqrt{d}}) \cdot \Omega(\sqrt{d}) \ge \Omega(k d).
$$
Here, the second inequality is valid so long as we choose $k = \frac{c}{\alpha^2}$ with a sufficiently small $c > 0$, so that $\alpha\sqrt{kd}$ is dominated by the first term $\Omega(\sqrt{d})$.

\paragraph*{Score upper bound.} The rest of the argument is largely similar to that in Section~\ref{sec:composition-finish}. Condition on the output of the algorithm $\calA(x)$, we argue that the score of a fresh point $x'\sim D_\theta$ is well concentrated. Note that, conditioning on the type of $x'$ being $u^j$, the score of the  point $x'=u^j\otimes v$ can be equivalently written as
$$
\langle v - \E_{v'}[v'], \calA(x)\cdot u^j\rangle.
$$
Since $v$ is sampled from the Boolean cube according to an exponential tilt, each coordinate of $v$ contributes independently to the score. Since each coordinate of $\calA(x)\cdot u^j$ is bounded by $1$ (see Section~\ref{sec:composition-two-way-surgery}), we conclude that the score of a random $v$ is $K$-subgaussian with $K = O(\sqrt{d})$.

Next, for any desired $\eps \in (0, 1)$ and $\delta \in (2^{-\frac{d}{\log^2(d)}}, 1/d^2)$, we can first prove that, any $(\log(1/\delta), \sqrt{\delta})$-DP algorithm has sample complexity lower bound of
$$
n\ge \Omega(\frac{kd}{\sqrt{d} \cdot \sqrt{\log(1/\delta)}}) \ge \Omega(\frac{\sqrt{d}}{\sqrt{\log(1/\delta)}\alpha^2}).
$$
We emphasize that, thanks to the choice of $\delta$, our target lower bound is above $\frac{\log(d)}{\alpha^2}$, which is the number of samples required for low generalization error. In this regime, the fingerprinting framework applies. However, we cannot set $\delta$ to be much lower: when $\delta < 2^{-\omega(d)}$, the target lower bound becomes $o(1/\alpha^2)$, and we can no longer derive a contradiction via the fingerprinting framework.

Lastly, using the group privacy connection of Lemma~\ref{lemma:group-privacy-reduce-sample}, we obtain a lower bound of
$$
n\ge \Omega(\frac{\sqrt{d \log(1/\delta)}}{\eps\alpha^2})
$$
for $(\eps,\delta)$-DP algorithms, proving Theorem~\ref{theo:two-way-marginal-lb}.

\section{Lower Bounds on Adaptive Data Analysis}\label{sec:lb-ada}

In this section, we ``lift'' the lower bounds proved in Section~\ref{sec:composition} to lower bounds on the task of adaptive data analysis.

\subsection{Overview and Intuition}

\paragraph*{Intuition.} Recall the proof of Theorem~\ref{theo:composition-lower-bound} (Section~\ref{sec:composition-finish}). There, we constructed a query matrix $A\in \{\pm 1\}^{M\times N}$, defined a family of distribution $\{D_\theta\}_{\theta\in V}$ and a type-dependent score accordingly. Then, we established the following: letting $\calA(x)$ be the algorithm's output on input $x = (x^1,\dots, x^n)$, the scores of in-sample data points w.r.t.~$\calA(x)$ behave like
$$
\sum_{j=1}^{n} \score(x^j;\calA(x)) \ge \Omega(dk) = \Omega(\frac{d}{\alpha^2}),
$$
while for an independent $x'\sim D_{\theta}$, with high probability we have $|\score(x';\calA(x))| \le \tilde{O}(\frac{\sqrt{d}}{m})$. Moreover, the average score of an independent $x'$ is zero.

Based on the argument so far, we would easily achieve an ADA lower bound if we could claim the following in addition:
\begin{description}
   \item[Even-ness] Every in-sample point $x^j$ has a score bounded by $\tilde{O}(\frac{\sqrt{d}}{m})$ with high probability, just like a fresh data point.
\end{description}

If this was indeed the case, we can craft a query $q$ by defining
$$
q(x) \coloneqq \max\left( -1, \min\left( 1, \score(x;\calA(x)) \cdot \frac{m}{100 \sqrt{d}} \right)\right).
$$
That is, we simply truncate the score of each $x\in [N]$ into the range $\pm O(\frac{\sqrt{d}}{m})$. Then, we scale down the score and use it to define a bounded linear query. If Even-ness holds, the truncation incurs little error to both the distribution and the data set. Hence, we end up with roughly $\E_{x\sim D_{\theta}}[q(x)]\approx 0$ and $\E_{j}[q(x^j)] \approx \Theta(\frac{1}{n}\cdot \frac{d}{\alpha^2}\cdot \frac{m}{\sqrt{d}}) = \Theta(\alpha)$ (recall that we assume $n\le \frac{c\sqrt{d}m}{\alpha^3}$ toward proving an ADA lower bound). We see that the query $q$ witnesses a mismatch between the distribution and the data set.

\paragraph*{Challenge and its resolution.} Unfortunately, Even-ness does not necessarily hold. As an example, if all the queries were given in a batch, the algorithm could simply use its first $\Theta(\frac{\log(M)}{\alpha^2})$ samples to evaluate the queries and ensure low generalization error. For such algorithms, the scores of the first few samples are huge (their privacy may be completely compromised), but the scores of other samples are small and have zero mean. We cannot exploit the tiny fraction of exposed data, as we must construct a linear query to witness a mismatch between the distribution and the whole data set.

To resolve the issue, we will design an adversary that sends the queries to $A$ in $d$ adaptive rounds. Meanwhile, we design a strategy to ``force'' the algorithm to use its samples evenly (and, consequently, distribute the score evenly).

At a very high level, we adopt a well-known trick, sometimes referred to as a ``one-time pad'' or a ``random mask''. In its most basic form, the idea is to generate a random Boolean string $r\sim \{\pm 1\}^{d}$ and sample a data point $x^j = (e^{i_j}\otimes u^{i'_j}\otimes v^j)$ as usual. However, instead of giving $x^j$ to the algorithm directly, we give algorithm the ``encrypted'' point $(e^{i_j}\otimes u^{i'_j}\otimes (r \text{ xor } v^j))$. As $r$ is random, the algorithm has no clue what the real $v^j$ is. Then, over the $d$ rounds of interaction, the adversary reveals $r$, and thus $v^j$, bit by bit. Throughout the process, the adversary keeps track of the score of $x^j$. Once it finds that the score is too large, it stops revealing future masks on $v^j$. In this way, the algorithm effectively loses access to $x^j$.

This oversimplified intuition hides many details, which can be found in the formal argument, to appear in the rest of the section.

\subsection{Partial Score and Fair Algorithms}

In this section, we consider the same setup as in Section~\ref{sec:composition-setup}, which we quickly review here. We have the set of vectors
$$
K = \left\{ e^i\otimes u^j \otimes v : i\in [m], j\in [k], v\in \{\pm 1\}^d \right\}.
$$
We choose $V = \left\{ \theta \in \mathbb{R}^{m\times k\times d} \right\}$, for every $\theta\in V$, we define the type-conditioned exponential distribution $D_{\theta}$.

The set $K$ induces a query matrix $A\in \{\pm 1\}^{M\times N}$ as in Section~\ref{sec:composition-setup}. Given an algorithm $\calA$ for answering statistical queries from $A$, we can apply the post-processing procedure of Section~\ref{sec:composition-surgery}. In the following, we will always work with mean-estimation algorithms $\calA:K^n\to \mathbb{R}^{mkd}$. It should be clear that such an algorithm is a direct product of any query-releasing algorithm via the reduction in Section~\ref{sec:composition-surgery}.

Given an algorithm $\calA:K^n\to \mathbb{R}^{mkd}$, we can run $\calA$ on an input $x\sim D_{\theta}^n$ and obtain $\calA(x)\in \mathbb{R}^{mkd}$. For every input point $x^j$, we define the score of $x^j$ w.r.t. $\mathcal{A}(x)$ as
$$
\score(x^j;\calA(x)) \coloneqq \left\langle x^j - \E_{x':type(x')=type(x^j)}[x'], \calA(x) \right\rangle.
$$

\newcommand{\pscore}{\mathrm{pscore}}

\paragraph*{Slicing queries and Partial score.} Recall the query ensemble is $\{h(e^i)\cdot u^j_p\cdot v_r\}_{h:[m]\to \{\pm 1\},p\in [k], r\in [d]}$. We split the queries into $d$ slices. For each $r\in [d]$, the $r$-th slice consists of all queries $\{ h(e^i) \cdot u^j_p \cdot v_r \}_{h:[m]\to \{\pm 1\},p\in [k]}$. Similarly, for a mean-estimation algorithm $\calA:K^n\to \mathbb{R}^{mkd}$, we think of the output of $\calA$ as consisting of $d$ slices, each of dimension $mk$. Namely, the $r$-th slice is $\calA(x)_{*,*,r}$. 

We introduce a notion of partial score. For every $1\le r\le d$, define the $r$-partial score of a point $x\in K$ with respect to $q\in \mathbb{R}^{mkd}$ as
$$
\pscore^{(r)}(x; q) \coloneqq \left \langle x_{*,*,\le r} - \E_{x':type(x')=type(x)}[x'_{*,*,\le r}], q_{*,*,\le r}\right \rangle.
$$
That is, the $r$-partial score just sums up the contribution to the score from the first $r$ slices. Note that the partial score only depends on the first $r$ slices of the input and output. For this reason, given $x\in \mathbb{R}^{mk\cdot r}$ and $q\in \mathbb{R}^{mk \cdot r}$, we may slightly abuse notation by writing $\pscore^{(r)}(x;q) \coloneqq \pscore^{(r)}(\overline{x}, \overline{q})$ where $\overline{x}$ and $\overline{q}$ are arbitrary completion of $x$, $q$ into $mkd$-dimensional vectors.

\paragraph*{Fair algorithms.} We now define a class of query-releasing algorithms, that we call fair algorithms.

\begin{definition}\label{def:fair-algorithms}
    Let $\calA:K^n\to \mathbb{R}^{mkd}$ be a vector mean-estimation algorithm. Assume $\calA$ operates in $d$ stages. In the $r$-th stage, the algorithm outputs $\calA(x)_{*,*,r}$. Then, $\calA$ is informed of the $r$-th slice of $\theta$ (i.e.,~$\theta_{*,*,r}$). 
    
    We say that $\calA$ is $\tau$-fair, if the following holds true for every stage $r\in [d]$.
    \begin{enumerate}
        \item Conditioning on $\calA(x)_{*,*,<r}$, the distribution of $\calA(x)_{*,*,r}$ only depends on the first $r$ slices of inputs and $\theta$. Moreover:
        \item For any input $x^j$, conditioning on $\calA(x)_{*,*,\le r} = q$ such that $\pscore^{(r)}(x^j;q) > \tau$, the distribution of $\calA(x)\mid (\calA(x)_{*,*,\le r}=q)$ is independent of the $(r+1)$-th to the $d$-th slice of $x^j$.
    \end{enumerate}
    A $d$-stage query-releasing algorithm $\calA:K^n\to [-1,1]^M$ is $\tau$-fair, if its induced vector mean estimation algorithm $\tilde{\calA}:K^n\to \mathbb{R}^{mkd}$ (via Section~\ref{sec:composition-surgery}) is $\tau$-fair.
\end{definition}

Let us digest Definition~\ref{def:fair-algorithms}. Item $1$ says that a fair algorithm, during any stage $r$, will not access the $(r+1)$-th to the $d$-th slice of its inputs. Item $2$ further asserts that the algorithm will not access $v_{>r}$ of a data point $x^j$ once its $r$-partial score has reached a certain threshold $\tau$. The name ``fair'' intuitively captures the fact that the algorithm uses its data in a somewhat fair manner. We observe an implication of Item 2: for the joint random variable $(x,\calA(x))$, conditioning on $x^j_{*,*,\le r}$ and $\calA(x)$ such that $\pscore^{(r)}(x^j,\calA(x))\ge \tau$, the slices $x^j_{*,*,>r}$ still have maximum uncertainty.

We also note that, since the algorithm is informed of $\theta_{*,*,\le r}$ after the conclusion of the $r$-th stage, it can compute $\pscore^{(r)}(x;\calA(x))$ by itself. Hence, in principle, we can convert any reasonable algorithm to a $\tau$-fair one. Here, having access to $\theta_{*,*,\le r}$ is crucial. Otherwise, the algorithm cannot compute $\pscore^{(r)}$ by itself, and it is not clear how to ask the algorithm to stop accessing an input $x^j$ at the ``right'' moment.

\paragraph*{Remark on the divergence-to-score lemma.} One may notice that the setup of Definition~\ref{def:fair-algorithms} is slightly different than the setup considered in Section~\ref{sec:composition}. Here, the algorithm publishes its output slice by slice. After the algorithm commits to its estimation for a slice, it is informed of the ``correct mean'' of that slice, a piece of information that may be utilized to answer future queries. Hence, it is natural to wonder whether the divergence-to-score connection (Proposition~\ref{prop:type-dependent-score-1}) still holds.

We now verify that Proposition~\ref{prop:type-dependent-score-1} holds in the new setup verbatim. Indeed, for every $i\in [mkd]$, consider an algorithm $\calB$ that has access to every $\theta_{j}$ other than $\theta_i$ and a bunch of input $(x^1,\dots, x^n)$. Suppose $\calB$ outputs a real value. We claim that
\begin{align}
\frac{\partial}{\partial \theta_i} \E_{x\sim D_{\theta}^n}[\calB(x, \theta_{-i})] = \sum_{j=1}^{n} \E_{x,\calB(x,\theta_{-i})}\left[ \calB(x, \theta_{-i}) \cdot \left( x^j_i - \E_{x'\sim D_\theta\mid type(x')=type(x^j)}[x'_i] \right) \right]. \label{equ:allow-algo-access-theta}
\end{align}
To prove the equation, we repeat the proof for Proposition~\ref{prop:type-dependent-score-1}, noting that allowing $\calB$ to access $\theta_{-i}$ has no effect on the derivation (since we do not differentiate $\theta_{-i}$, and the equation holds for every fixed $\theta_{-i}$ anyway).

Back to our context, let $\calA$ be the $d$-stage query-releasing algorithm. During the $r$-th stage, the algorithm can access its inputs and all $\theta_{*,*, < r}$. Since the algorithm cannot access the $r$-th slice of $\theta$, the $r$-th slice of the output is subject to a version of Equation~\eqref{equ:allow-algo-access-theta}. Adding up all slices concludes the proof of Proposition~\ref{prop:type-dependent-score-1} for our new setup.

Having confirmed Proposition~\ref{prop:type-dependent-score-1}, the score lower bound from Section~\ref{sec:composition-finish} applies to the algorithms considered in this section, so long as $\calA$ is accurate.

\subsection{Lower Bound Against Fair Algorithms}

In this section, we prove a lower bound against all fair algorithms.

\begin{theorem}\label{theo:lb-for-fair-algo}
    Let $n = \frac{c \sqrt{d} m}{\alpha^3 \log(1/\alpha)}$. Suppose $\calA:K^n\to [-1,1]^M$ is an algorithm for the workload matrix $A\in \{\pm 1\}^{M\times N}$. Assume $\calA$ is $\alpha$-accurate (w.r.t.~$\ell_{\infty}$-norm) and $\tau$-fair with $\tau \ge C\frac{\sqrt{d \log(1/\alpha)}}{m}$. Then, there is a $\theta\in V$ and an adversary $\calB$ which, upon seeing the output $\calA(x)$, crafts a query $q:K\to [-1,1]$ with the following on-average guarantee:
    $$
    \E_{x\sim D_\theta^n,\calA(x)}~\E_{q\gets \calB(\calA(x),\theta)} \left| \E_{j} [q(x^j)] - \E_{x\sim D_\theta}[q(x)] \right| \ge \Omega(\alpha).
    $$
\end{theorem}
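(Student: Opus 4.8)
The plan is to combine the score lower bound already established in Section~\ref{sec:composition-finish} with the fairness guarantee to argue that, after truncation, a large fraction of the total score survives. First I would fix a random $\theta\sim V$ (an $\ell_1$-ball of radius $\frac{mkd}{\sqrt{k}}$ as in Section~\ref{sec:composition-setup}) and invoke the divergence-to-score computation, which (as verified in the previous subsection for the $d$-stage setup) gives
$$
\E_{\theta\sim V}\,\E_{x\sim D_\theta^n,\calA(x)}\Bigl[\sum_{j=1}^{n}\score(x^j;\calA(x))\Bigr]\;\ge\;\Omega(dk)\;=\;\Omega\!\left(\tfrac{d}{\alpha^2}\right),
$$
using the accuracy of $\calA$ (which, through the post-processing of Section~\ref{sec:composition-surgery}, needs population accuracy, available here). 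So there is a fixed good $\theta$ for which the same lower bound holds. The adversary will be the one from the ``intuition'' subsection: define the query $q(x)=\mathrm{clip}_{[-1,1]}\!\bigl(\score(x;\calA(x))\cdot\frac{m}{C'\sqrt{d\log(1/\alpha)}}\bigr)$ for a suitable constant $C'$, and the goal is to show that for this $q$ the sample-mean and the population-mean differ by $\Omega(\alpha)$ in expectation.

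Next I would control the two sides separately. For the population side, a fresh $x'\sim D_\theta$ has $\E[\score(x';\calA(x))]=0$ and, conditioned on $type(x')$ and $\calA(x)$, the score is a sum of independent bounded coordinates, hence $T$-subgaussian with $T=\Theta(\sqrt{d}/m)$ (exactly as in the score upper bound of Section~\ref{sec:composition-finish}). Therefore clipping at scale $\Theta(\sqrt{d\log(1/\alpha)}/m)\gg T\sqrt{\log(1/\alpha)}$ distorts $\E_{x'\sim D_\theta}[\score]$ by only $O(\alpha/\mathrm{poly})$ (the subgaussian tail integrated past the clip threshold is tiny because the threshold is $\sqrt{\log(1/\alpha)}$ standard deviations out), so $|\E_{x\sim D_\theta}[q(x)]|$ is negligible compared to $\alpha$. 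For the sample side, I would use fairness: the key point is that the \emph{partial} score $\pscore^{(r)}(x^j;\calA(x))$ is, before stage $r$ reveals the sign, a sum of $r$ roughly independent coordinates each bounded by $1/m$, hence $\Theta(\sqrt{r}/m)$-subgaussian; Item~2 of $\tau$-fairness says that once $\pscore^{(r)}(x^j)$ exceeds $\tau=\Theta(\sqrt{d\log(1/\alpha)}/m)$ the algorithm stops touching $x^j_{>r}$, so the remaining $d-r$ coordinates contribute $0$ in expectation and their sum is $\Theta(\sqrt{d-r}/m)$-subgaussian and independent of everything revealed so far. Consequently, on the event $\{x^j$ is ever ``frozen''$\}$, the final score of $x^j$ concentrates within $\tau\pm O(\sqrt{d\log(1/\alpha)}/m)=O(\tau)$ of the clip threshold with high probability, while on the complementary event $\{x^j$ is never frozen$\}$ the partial score stayed below $\tau$ throughout, so in particular the full score is at most $\tau$. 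Either way $|\score(x^j;\calA(x))|\le C''\tau$ with probability $1-\alpha^{10}$, say, so clipping at scale $\Theta(\tau)$ changes $\sum_j \score(x^j;\calA(x))$ by only an $O(\alpha^{10})$-fraction in expectation (the rare large-deviation contribution is itself bounded, since each individual score is at most $\|\calA(x)\|_1\cdot\max_x\|x\|_\infty\le O(kd)$ deterministically after post-processing, which is dominated once multiplied by $\alpha^{10}$).

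Putting the pieces together: the un-clipped $\sum_j\score(x^j)$ has expectation $\Omega(d/\alpha^2)$, clipping preserves a $(1-o(1))$ fraction, so $\E[\sum_j q(x^j)]\ge \Omega(\frac{d}{\alpha^2}\cdot\frac{m}{\sqrt{d\log(1/\alpha)}})=\Omega(\frac{\sqrt{d}\,m}{\alpha^2\sqrt{\log(1/\alpha)}})$, hence $\E[\E_j q(x^j)]\ge \Omega(\frac{\sqrt{d}\,m}{n\alpha^2\sqrt{\log(1/\alpha)}})$, which with $n=\frac{c\sqrt{d}\,m}{\alpha^3\log(1/\alpha)}$ is $\Omega(\alpha\sqrt{\log(1/\alpha)})\ge\Omega(\alpha)$; subtracting the negligible population mean and using $|\E_j q(x^j)-\E_{x\sim D_\theta}q(x)|\ge \E_j q(x^j)-\E_{x\sim D_\theta}q(x)$ inside the expectation gives the claimed $\Omega(\alpha)$ bound (a small subtlety: I want $\E|\,\cdot\,|\ge|\E(\cdot)|$, so I would actually argue on the signed quantity and then note it is $\Omega(\alpha)$ in expectation, which forces the absolute value to be as well). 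The main obstacle I anticipate is the sample-side concentration argument: making rigorous the claim that a $\tau$-fair algorithm forces \emph{every} in-sample score to land near $O(\tau)$ with overwhelming probability requires carefully tracking, for each $j$, the stopped martingale $(\pscore^{(r)}(x^j))_{r}$ and arguing that the increment distribution stays subgaussian even after conditioning on the adaptive transcript and on $\theta_{*,*,\le r}$ — this is where the ``random mask'' idea from the overview has to be turned into an honest filtration argument, and where one must be careful that conditioning on $\calA(x)_{*,*,\le r}$ does not secretly correlate the as-yet-unrevealed coordinates $v^j_{>r}$ (Item~1 of fairness is exactly what rules this out).
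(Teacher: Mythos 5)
Your proposal is correct and follows essentially the same route as the paper: fix a good $\theta$ via the averaged score lower bound, clip the score at scale $\Theta(\sqrt{d\log(1/\alpha)}/m)$ to form the query, use subgaussianity of a fresh sample's score for the population side, and use the stopped-process/stochastic-domination consequence of $\tau$-fairness (score dominated by $\tau+O(\sqrt{d}/m)\cdot N(0,1)$) for the sample side, finishing with the same scaling arithmetic. The only place your bookkeeping is looser than the paper's is the clipping loss on the sample side: bounding it as "probability $\alpha^{10}$ times a deterministic worst-case score" does not suffice when $\alpha$ is a constant, and the paper instead bounds the \emph{conditional expectation} of the score given exceedance by $O(C\sqrt{d\log(1/\alpha)}/m)$ (again from the subgaussian tail), yielding a total loss of $O(n\alpha^3\sqrt{d\log(1/\alpha)}/m)=O(d)\ll d/\alpha^2$.
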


\begin{proof}
    Choose $C>0$ to be a large constant. Let $\tilde{\calA}:K^n\to \mathbb{R}^{mkd}$ be the composition of $\calA$ with the post-processing described in Section~\ref{sec:composition-surgery}. From the argument of Section~\ref{sec:composition-finish} we see that, assuming $\calA$ (and hence $\tilde{\calA}$) is accurate, we obtain
    $$
    \E_{\theta\sim V} ~ \E_{x\sim D_\theta^n,\tilde{\calA}(x)} \left[ \sum_{j=1}^n \score(x^j;\tilde{\calA}(x)) \right] \ge \Omega(\frac{d}{\alpha^2}) .
    $$
    By averaging principle, we fix a $\theta$ for which the score lower bound holds true.

    For a realization of $(x,\tilde{\calA}(x))$, define a query $q:K\to [-1,1]$ by
    $$
    q(x) \coloneqq \max\left( -1, \min\left( 1, \score(x;\tilde{\calA}(x)) \cdot \frac{m}{2 C \sqrt{d \log(1/\alpha)}} \right)\right).
    $$

    \paragraph*{The average of $q$ on an independent point.} For an independent $x'\sim D_{\theta}$, we have that $|\score(x;\tilde{\calA}(x))| \le \frac{2 C\sqrt{d\log(1/\alpha)}}{m}$ with probability $1 - \alpha^{C^2}$. Hence, we have
    \begin{align}
    \left| \E_{x\sim D_\theta^n,\tilde{\calA}(x), x'\sim D_\theta}[q(x')] \right|
    &= \left| \E_{x,\tilde{\calA}(x),x'} \left[\frac{\score(x';\tilde{\calA}(x))\cdot m}{C \sqrt{d \log(1/\alpha)}} \right] \pm 2\cdot \Pr_{x'}\left[ |\score(x';\tilde{\calA}(x))| > \frac{C\sqrt{d\log(1/\alpha)}}{m} \right] \right| \notag \\
    &\le 2 \alpha^{C^2}. \label{equ:avg-q-independent-sample}
    \end{align}

    \paragraph*{The average of $q$ on in-sample points.} For each $j\in [n]$, let us examine the distribution of $\score(x^j;\tilde{\calA}(x))$. As before, we understand $x^j$ and $\tilde{\calA}(x)$ as consisting of $d$ \emph{slices}, each of dimension $m\times k$. Consider observing $x^j$ and $\tilde{\calA}(x)$ slice by slice. Before the partial score of $x^j$ exceeds $\tau$, we do not have good control over the growth of the score. However, once $\pscore^{(r)}(x^j,\tilde{\calA}(x))$ reaches $[\tau, \tau + \frac{1}{m})$\footnote{Since the score from each slice is bounded by $\frac{1}{m}$, there cannot be a sudden jump of partial score from one slice to the next.} for some $r\in [d]$, the future slices of $x^j$ and $\tilde{\calA}(x)$ become independent by the fair property of $\tilde{\calA}$. Write $x^j = (e^i\otimes u^j\otimes v)$. Conditioning on $e^i,u^j$ and $v_{\le r}$, we know that $v_{>r}$ is independent of $\tilde{\calA}(x)$ and each bit of $v_{>r}$ is independently sampled. As such, the contribution from the $(r+1)$-th to the $d$-th slice is $O(\frac{\sqrt{d-r}}{m})$-subgaussian. Overall, we see that $\score(x^j;\tilde{\calA}(x))$ is stochastically dominated by $\tau + O(\frac{\sqrt{d}}{m}) \cdot N(0, 1)$. Consequently, we have
    \begin{align}
    \Pr_{x,\tilde{\calA}(x)}\left[|\score(x^j;\tilde{\calA}(x))| \ge \tau + C\frac{\sqrt{d \log(1/\alpha)}}{m} \right] < \alpha^3 \notag
    \end{align}
    and
    $$
    \E_{x,\tilde{\calA}(x)}\left[|\score(x^j;\tilde{\calA}(x))| \middle| |\score(x^j;\tilde{\calA}(x))| \ge \tau + C\frac{\sqrt{d \log(1/\alpha)}}{m} \right] < O(C\cdot \frac{\sqrt{d\log(1/\alpha)}}{m}).
    $$
    Therefore, we obtain
    $$
    \begin{aligned}
    &~~~~ \E_{x,\tilde{\calA}(x)} \left[ \sum_{j=1}^{n} q(x^j) \cdot \frac{C\sqrt{d\log(1/\alpha)}}{m} \right] \\
    & \in \sum_{j=1}^{n} \E[\score(x^j;\tilde{\calA}(x))] \pm \Pr\left[|\score(x^j;\tilde{\calA}(x^j))| > \tau + \frac{C}{2}\frac{\sqrt{d\log(1/\alpha)}}{m} \right] \cdot O\left(\frac{C\sqrt{d\log(1/\alpha)}}{m}\right) \\
    & \in \sum_{j=1}^{n} \left( \E[\score(x^j;\tilde{\calA}(x))] \pm O\left(\frac{C\alpha^2\sqrt{d\log(1/\alpha)}}{m} \right) \right) \\
    &\ge \Omega\left(\frac{d}{\alpha^2}\right) - O\left(\frac{n\cdot C \alpha^3 \sqrt{d\log(1/\alpha)}}{m}\right) \\
    &\ge \Omega\left(\frac{d}{\alpha^2}\right) - O\left(d\right).
    \end{aligned}
    $$
    By scaling, this inequality is equivalently saying that
    \begin{align}
    \E_{x,\tilde{\calA}(x)} \E_{j\sim [n]} \left[ q(x^j) \right] \ge \Omega(\alpha). \label{equ:avg-q-in-sample}
    \end{align}
    Combining \eqref{equ:avg-q-in-sample} and \eqref{equ:avg-q-independent-sample} completes the proof.
\end{proof}

\subsection{Lower Bounds Against All Algorithms}

In this section, we lift Theorem~\ref{theo:lb-for-fair-algo} to prove a lower bound against all algorithms for answering adaptively generated linear queries. 

\subsubsection{Obfuscating Inputs}\label{sec:obfuscate-inputs}

Let $W = (\frac{\log(N)\log(M)}{\alpha^3})^{O(1)}$. We enlarge the universe $N$ by a factor of $W$. Now, we identify every point $x\in [N\times W]$ in the universe by a tuple $x = (w, e^i, u^{i'}, v)$. Here, $(e^i,u^{i'},v)$ has the same geometric interpretation as before (i.e., it is associated with a vector $e^i\otimes u^{i'}\otimes v\in \mathbb{R}^{mkd}$, and we newly introduce $w\in [W]$ as a ``name'' for $x$. 

For every $\theta\in \mathbb{R}^{mkd}$, we extend the definition to $D_{\theta}$ to the new setting: to sample $x = (w, e^i, u^{i'}, v)$, we sample the tuple $(e^i,u^{i'},v)$ as before, and sample a name $w\in [W]$ uniformly at random. The idea is, with our setting of $W$, we ensure that for an i.i.d.~data set $x\sim D_{\theta}^{n}$, with probability $1-\frac{1}{W^{\Omega(1)}}$, every data point $x^j$ has a distinct name.

Now, consider a set of random mappings of the form:
$$
\Pi_r: [W]\times [m]\times [k]\times \left(\{\pm 1\}^{0}\cup \{\pm 1\}^{1} \cup \dots \cup \{\pm 1\}^{r-1}\right) \to \{\pm 1\}.
$$
Given a data point $x = (w, e^i,u^{i'}, v)$, let its $\Pi$-\emph{obfuscation} be a point $\Pi(x) = (w, e^i, u^{i'}, v')$, defined as follows. As the notation suggests, the name and type of $\Pi(x)$ are the same as $x$. However, for every $r\in [d]$, we define $v'_r = \Pi_r(w, i,i', v_{<r}) \oplus v_r$ where $\oplus$ denotes the XOR of two bits. Since $\Pi$ only changes the $v$-part of $x$, we write $v(\Pi, x)$ to denote the ``$v$''-part of $\Pi(x)$. Note that the obfuscation is invertible. Write $\Pi^{-1}$ to be its inverse. Namely, $\Pi^{-1}$ is such that $\Pi^{-1}(\Pi(x)) = x$ for every $x = (w, e^i,u^{i'}, v)$. Finally, for any distribution $D$ over $[N\times W]$, let $\Pi(D)$ be the distribution of $\Pi(x)$ where $x$ is drawn from $D$.


\subsubsection{Making an ADA Algorithm Fair}\label{sec:algorithm-made-fair}

We describe a reduction to make an ADA algorithm ``fair''. The idea is to compare the following two experiments: 
\begin{itemize}
    \item For an unknown $\theta$, draw $(x^1,\dots, x^n)\sim D_{\theta}^n$ and perform ADA with respect to $D_{\theta}$.
    \item For an unknown $\theta$ and \emph{unknown} $\Pi$, draw $(\Pi(x^1),\dots, \Pi(x^n))\sim \Pi(D_{\theta})^n$ and perform ADA with respect to $\Pi(D_{\theta})$.
\end{itemize}
If $\Pi$ is known, the two tasks are equivalent as $\Pi$ is a bijection. However, if $\Pi$ is unknown to the algorithm, from $(\Pi(x^1),\dots, \Pi(x^n))$, the algorithm only learns the type of each data point. The proof strategy is to design an interaction between the algorithm and the adversary, through which the algorithm gradually learns each slice of its inputs. By designing the adversary properly, we can ensure that the algorithm behaves fairly.

Let $\calA$ be the ADA algorithm. We design the adversary below. 
\begin{itemize}
\item At the start, some $\theta$ is chosen, and a random $\Pi$ is generated. A data set $(x^1,\dots, x^n)\sim D_\theta^n$ is drawn. $\calA$ receives $\Pi(x^1),\dots, \Pi(x^n)$. Note that this is equivalent to drawing $\Pi(x^1),\dots, \Pi(x^n)\sim \Pi(D_\theta)^n$ in the first place.

\item The adversary interacts with $\calA$ for $d$ stages. In the $r$-th stage, consider the set of queries
$$
H^{r} = \{ h(e^i)\otimes u^j_p \otimes v_r \}_{h:[m]\to \{\pm 1\}, p\in [k]}
$$
We understand $H^{r}$ as an ensemble of statistical queries w.r.t.~$D_\theta$. We assume $H^r$ is publicly known to both $\calA$ and the adversary. The equivalent of $H^r$ with respect to $\Pi(D_{\theta})$ is given by
$$
H^{\Pi, r} = \left\{ g(w,e^i,u^{j},v) \coloneqq h(e^i)\cdot u^j_p \cdot v(\Pi^{-1}, (w,e^i,u^j,v))_{r}\right\}_{{h:[m]\to \{\pm 1\}, p\in [k]}}.
$$
Namely, given a query $g\in H^r$, there is a corresponding query $g'\in H^{\Pi, r}$ defined by $g'(x) = g(\Pi^{-1}(x))$.

For a moment, let us consider letting the adversary send the queries $H^{\Pi, r}$ to $\calA$ in the $r$-th stage. By comparing $H^{r}$ with $H^{\Pi, r}$, $\calA$ can learn $\Pi(w,i,j,v_{<r})$ for every $w,i,j,v_{<r}$. From this, it learns the $r$-th slice of $x^1,\dots, x^n$. However, it is still clueless about the $(r+1)$-th to $d$-th slices of the inputs.

\item In each stage, the adversary receives responses from $\calA$. These responses are answers to $H^{\Pi, r}$ with respect to $\Pi(D_\theta)$. They can be directly translated into answers to $H^r$ with respect to $D_{\theta}$. By further running the post-processing from Section~\ref{sec:composition-surgery}, the adversary obtains a vector $q_{*,*,r}\in \mathbb{R}^{mk}$ that approximates $\E[D_\theta]_{*,*,r}$. Concatenating all $q_{*,*,r'}$ for $r'\le r$ would allow for tracking $\pscore^{(r)}(x;q)$ for every $x\in [N\times W]$. 

Now we come to a crucial part: At the start of each stage $r$, for each $x\in [N\times W]$, if $\pscore^{(r-1)}(x;q) > \tau$, we mark $x$ as \emph{compromised}. As soon as a point $x$ is compromised, the adversary will \emph{not} reveal the correct evaluation of $g(\Pi(x))$ in the future. Namely, for each $g\in H^{\Pi,r}$, define $g'$ as $g'(\Pi(x)) = g(\Pi(x))$ if $x$ not compromised, and $g'(\Pi(x)) = 1$ otherwise. Let $\tilde{H}^{\Pi,r}$ be the collection of all the $g'$s. Instead of giving $\calA$ the query set $H^{\Pi,r}$, the adversary actually sends the set $\tilde{H}^{\Pi,r}$.

As we will argue in a moment, queries from $\tilde{H}^{\Pi,r}$ have distributional means similar to their counterparts in $H^{\Pi,r}$. Hence, although the adversary sends queries from $\tilde{H}^{\Pi,r}$ and receives responses about them, it can post-process these responses just as if they were queries about $H^{\Pi,r}$.
\end{itemize}

\subsubsection{Analysis}\label{sec:fair-reduction-analysis}

The last section presented our adversary design. We now discuss its correctness. The interaction between $\calA$ and the adversary as a whole constitutes a query-releasing procedure $\mathcal{P}$ with respect to $D_{\theta}$. The procedure $\mathcal{P}$ takes $(x^1,\dots, x^n)$ as input and outputs $q\in \mathbb{R}^{mkd}$. We now argue that $\mathcal{P}$ is both accurate and fair.

\paragraph*{Fairness.} We start by analyzing fairness. Item $1$ of Definition~\ref{def:fair-algorithms} is satisfied, as the adversary reveals $\Pi$ ``bit-by-bit'' by our design. Regarding Item $2$, note that if a point $x = (w,e^i,u^j,v_{\le r-1}\circ v_{\ge r})$ is compromised in the $(r-1)$-th round, so is every point of the form $x'=(w,e^i,u^j,v_{\le r-1}\circ v'_{\ge r})$. Take one such $x'$, and we compare the behavior of the algorithm on, e.g., $(\Pi(x),\Pi(x^2),\dots, \Pi(x^n))$ versus $(\Pi(x'), \Pi(x^2),\dots, \Pi(x^n))$. Over a random $\Pi$, we know $\Pi(x)$ and $\Pi(x')$ are identically distributed, even after conditioning on the first $(r-1)$ slices of $\Pi$. Should $x$ be not compromised, the algorithm could have distinguished between $\Pi(x)$ and $\Pi(x')$ by examining $g(\Pi(x))$ and $g(\Pi(x'))$ for some query $g$ from $\tilde{H}^{\Pi,\ge r}$. However, since $x$ has been compromised, we have $g(\Pi(x))=g(\Pi(x')) = 1$ for every $g\in \tilde{H}^{\Pi,\ge r}$. Therefore, from the $r$-th stage onward, the algorithm cannot distinguish between the input being $x$ or $x'$. That is, the algorithm behaves the same on every $x'$ of the form $(w,e^i,u^j,v_{\le r-1}\circ v'_{\ge r})$, meaning that the $r$-th to $d$-th slice of the input is independent of the algorithm's output.

If the names (the ``$w$'' part) of $x^1,\dots, x^n$ are distinct (which happens with probability $1-\frac{1}{W^{\Omega(1)}}$), their random masks do not interfere with each other, and we can apply the argument for each input separately. This shows the procedure is $\tau$-fair with probability $1-\frac{1}{n^{\Omega(1)}}$.

\paragraph*{Accuracy.} We now prove the procedure is accurate with respect to $D_{\theta}$. The promise of $\calA$ says that its responses are $\alpha$-accurate for the query family $\tilde{H}^{\Pi,*}$. If we can argue that these responses are $(1+o(1))\alpha$-accurate to the closely related query family $H^{\Pi,*}$, then the adversary can just directly translate these responses into accurate responses for $H^{*}$ (with respect to $D_{\theta}$). Then, the accuracy of $q=\mathcal{P}(x^1,\dots, x^n)$ follows from accurate answers to $H^*$ together with the post-processing from Section~\ref{sec:composition-surgery}.

Indeed, every $g\in H^{\Pi,r}$ is associated with a query $g'\in \tilde{H}^{\Pi,r}$. By definition, we have
$$
\left| \E_{x\sim D_{\theta}}[g(x)] - \E_{x\sim D_{\theta}}[g'(x)] \right| \le \Pr_{x}[x \text{ is compromised before stage $r$}].
$$
Therefore, it suffices to prove that the fraction of compromised points is bounded by $\alpha$ across all stages. At a stage $r\in [d]$, conditioning on $q_{*,*,<r}$, we have
$$
\Pr_{x}[x \text{ is compromised before stage $r$}] = \Pr_{x}\left[\exists r'<r, \pscore^{(r')}(x,q) \le \tau \right].
$$
Take $\{ \pscore^{r'}(x,q) \}_{r' < r}$ as a random walk with $(r-1)\le d$ steps. The movement of each step has zero mean, and the length of each step is bounded by $\frac{1}{m}$. Since $\tau = \frac{C\sqrt{d\log(1/\alpha)}}{m}$, with probability $1-\alpha^{\Omega(C^2)}$ over $x$, the random walk stays below the threshold $\tau$ across the $r-1$ steps. This means the fraction of compromised inputs is always below $\alpha^{\Omega(C)}$ across the $d$ stages, regardless of the partial output $q_{*,*,\le r}$. This completes the accuracy analysis.

\subsubsection{Proof of Theorem~\ref{theo:intro-ada-lb}}\label{sec:ada-lb-finish}

We are now ready to prove one of the main results of the paper, Theorem~\ref{theo:intro-ada-lb}. We formulate a formal version of Theorem~\ref{theo:intro-ada-lb} below.

\begin{theorem}\label{theo:ada-lb-formal}
Let $N,M\ge 1$ and $\alpha\in (0, 1/10)$ be such that $N\ll 2^M$, $M\ll 2^N$ and $\frac{1}{\alpha}\ll \min(N,M)$. Let $\calA$ be an algorithm for answering statistical linear queries over the domain $[N]$. Suppose $\calA$ operates on at most $o(\frac{\sqrt{\log(N)}\log(M)}{\alpha^3 \log(1/\alpha)})$ samples. Then, there is a $O(\log N)$-round adaptive attack against $\calA$, which sends at most $m$ queries to $\calA$ and makes $\calA$ fail to be either sample- or distribution-accurate within error $\alpha$ on at least one query. The attack succeeds with probability $\Omega(\alpha)$.
\end{theorem}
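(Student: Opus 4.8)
The plan is to assemble the two ingredients already built in this section: the lower bound against \emph{fair} algorithms (Theorem~\ref{theo:lb-for-fair-algo}) and the obfuscation-based reduction of Sections~\ref{sec:obfuscate-inputs}--\ref{sec:fair-reduction-analysis}, which turns an arbitrary ADA algorithm into a fair query-releasing procedure without harming accuracy. First I would fix parameters. Write the universe as $[N] = [N_0 \times W]$, where $W = (\tfrac{\log N\,\log M}{\alpha^3})^{O(1)}$ is the padding factor from Section~\ref{sec:obfuscate-inputs}; since $\tfrac1\alpha \ll \min(N,M)$ we have $\log W = o(\log N)$, so $\log N_0 = (1-o(1))\log N$. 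Then choose $d=\Theta(\log N_0)=\Theta(\log N)$, $m=\Theta(\log M)$, $k=\Theta(1/\alpha^2)$ exactly as in Section~\ref{sec:composition-setup}, with $N_0 = 2^d k m$ and $M = 2^m d k$, and form the set $K$, the parameter body $V$, and the type-conditioned family $\{D_\theta\}$. Because $\sqrt{\log N}\,\log M = \Theta(\sqrt d\, m)$, the hypothesis $n = o\bigl(\tfrac{\sqrt{\log N}\,\log M}{\alpha^3\log(1/\alpha)}\bigr)$ is precisely $n = o\bigl(\tfrac{\sqrt d\, m}{\alpha^3\log(1/\alpha)}\bigr)$, i.e. $n$ sits below the threshold of Theorem~\ref{theo:lb-for-fair-algo}.

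Next I would run the reduction of Sections~\ref{sec:algorithm-made-fair}--\ref{sec:fair-reduction-analysis} essentially verbatim. For a (non-uniformly chosen) $\theta\in V$, the adversary draws $(x^1,\dots,x^n)\sim D_\theta^n$ over $[N_0\times W]$ and a random mask $\Pi$, hands $\calA$ the obfuscated sample $(\Pi(x^1),\dots,\Pi(x^n))$, and interacts over $d$ stages: in stage $r$ it sends the $r$-th slice of the slicing queries pushed through $\Pi^{-1}$, with every already-compromised point forced to evaluate to $1$, and post-processes the answers via the reconstruction-plus-projection surgery of Section~\ref{sec:composition-surgery} into an estimate $q_{*,*,r}$ of $\E[D_\theta]_{*,*,r}$; a point is marked \emph{compromised} once its partial score crosses $\tau = C\sqrt{d\log(1/\alpha)}/m$. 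The content to (re)establish here is exactly the two claims of Section~\ref{sec:fair-reduction-analysis}: \emph{(i) fairness} — conditioned on the slices of $\Pi$ revealed so far, $\Pi(x)$ and $\Pi(x')$ are identically distributed whenever $x,x'$ agree on those slices, so once $x^j$ is compromised the algorithm's future behaviour is independent of $x^j$'s remaining slices; hence the induced mean-estimation procedure $\mathcal{P}:[N_0\times W]^n\to\mathbb{R}^{mkd}$ is $\tau$-fair except on the $\le 1/n^{\Omega(1)}$-probability event that two samples collide on their names $w$; \emph{(ii) accuracy} — the partial scores $\{\mathrm{pscore}^{(r)}(x;q)\}_r$ form a bounded-increment (step $\le 1/m$), zero-mean walk, so a random-walk tail bound keeps the compromised fraction of the universe below $\alpha^{\Omega(C)}$ at every stage; consequently the forced queries differ from the true slicing queries by $o(\alpha)$ in distributional mean, and $\calA$'s $\alpha$-accuracy on the forced queries (this is where population-accuracy of $\calA$ is used, as flagged after Definition~\ref{def:fair-algorithms}) yields that $\mathcal{P}$ is an $(1+o(1))\alpha$-accurate query-release algorithm for $D_\theta$.

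With $\mathcal{P}$ an $O(\log N)$-round, $(1+o(1))\alpha$-accurate, $\tau$-fair procedure (the $1/n^{\Omega(1)}$ exceptional event costing only $O(1)\cdot 1/n^{\Omega(1)}$ in every expectation below), Theorem~\ref{theo:lb-for-fair-algo} applies: there is a $\theta\in V$ and a query $q'\colon[N_0\times W]\to[-1,1]$ — the clipped score $\max(-1,\min(1,\ \mathrm{score}(\cdot;\mathcal{P}(x))\cdot\tfrac{m}{2C\sqrt{d\log(1/\alpha)}}))$, which the adversary computes from $\mathcal{P}$'s output — such that, over $x\sim D_\theta^n$ and the coins of $\mathcal{P}$ and of the query-crafter,
$$
\E\Bigl[\,\bigl|\,\E_{j}[q'(x^j)] - \E_{x'\sim D_\theta}[q'(x')]\,\bigr|\,\Bigr] \ge \Omega(\alpha).
$$
To convert this into an attack on $\calA$, the adversary, after the $d$ stages, asks one final query $q''\coloneqq q'\circ\Pi^{-1}$, whose population mean under $\Pi(D_\theta)$ is $\E_{D_\theta}[q']$ and whose empirical mean over $(\Pi(x^j))_j$ is $\E_j[q'(x^j)]$; the above gap is thus exactly the sample-vs-population discrepancy of $q''$, it is bounded by $2$, so reverse Markov gives that it is $\ge\Omega(\alpha)$ with probability $\ge\Omega(\alpha)$. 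On that event $\calA$'s single numerical answer to $q''$ cannot lie within $\alpha$ of both means once the absolute constants in $n$ and in the scaling of $q'$ are fixed so the gap exceeds $2\alpha$ (possible for all $\alpha<\tfrac1{10}$: the clipping keeps $\E_j[q'(x^j)]$ below a constant, while the $\Omega(d/\alpha^2)$ total-score bound of Theorem~\ref{theo:lb-for-fair-algo} scales $\E_j[q'(x^j)]$ up as the sample-bound constant shrinks); if instead $\calA$ already violated population-accuracy on one of the $M$ slicing queries we were done at that query. In all cases $\calA$ fails to be either sample- or distribution-accurate on some query, using $d=O(\log N)$ rounds and at most $M+1$ queries. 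The main obstacle is the accuracy/fairness interplay of the middle step: one must simultaneously keep the compromised fraction negligible (the martingale tail bound on partial scores, for accuracy) and argue that withholding a compromised point's future masks genuinely erases the algorithm's information about its remaining slices (exchangeability of $\Pi$ under the revealed conditioning, for fairness), while verifying that the single threshold $\tau=\Theta(\sqrt{d\log(1/\alpha)}/m)$ is small enough to fit the accuracy budget yet large enough that the subgaussian score upper bound behind Theorem~\ref{theo:lb-for-fair-algo} still beats the $\Omega(d/\alpha^2)$ lower bound.
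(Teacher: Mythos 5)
Your proposal is correct and follows essentially the same route as the paper: fix the parameters of Section~\ref{sec:composition-setup}, run the obfuscation reduction of Sections~\ref{sec:algorithm-made-fair}--\ref{sec:fair-reduction-analysis} to obtain a $\tau$-fair, accurate procedure $\mathcal{P}$, invoke Theorem~\ref{theo:lb-for-fair-algo}, and convert the expected $\Omega(\alpha)$ sample-vs-population gap into an $\Omega(\alpha)$-probability attack via reverse Markov and the query $q'\circ\Pi^{-1}$. The only cosmetic difference is that the paper organizes the ``$\calA$ already inaccurate'' versus ``$\calA$ accurate with probability $1-c\alpha$'' dichotomy as an explicit two-case split (using the unconditional bound $\|q\|_1\le O(\sqrt{k}d)$ to control the inaccurate event), whereas you fold it into a remark at the end; both handle it adequately.
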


Let us quickly review the relevant parameter settings. We have $d=\Theta(\log N), m = \Theta(\log M)$, $k = \Theta(\frac{1}{\alpha^2})$ such that $2^d\cdot mk = N$ and $2^m\cdot kd = M$. In Section~\ref{sec:obfuscate-inputs}, we also set up a new parameter $W=(\frac{\log(N)\log(M)}{\alpha^3})^{O(1)}$ and use it to design an ``obfuscating scheme''. To keep notational consistency with previous sections, here we choose to work with a slightly large domain $[N\times W]$ in proving Theorem~\ref{theo:ada-lb-formal}, noting that this does not change our conclusion as $\log(N) = \Theta(\log(NW))$. Having verified the parameter consistency, the upcoming proof will use the constructions and designs from prior sections (such as the set $K$, the mapping $\Pi$, the matrix $A\in \{\pm 1\}^{M\times N}$, the distribution $D_{\theta}$, etc.) without further notice.

\begin{proof}
Suppose $\mathcal{A}$ is an ADA algorithm working over the universe $[N\times W]$, operating on $n\le \frac{c\sqrt{\log(N)}\log(M)}{\alpha^3}$ points. Consider an adversary $\calB$ interacting with $\mathcal{A}$ as in Section~\ref{sec:algorithm-made-fair}. Denote by $\mathcal{P}$ the whole procedure. $\mathcal{P}$ takes as input $x^1,\dots, x^n\sim D_\theta^n$ and the description of $\Pi$. It simulates the interaction between $\mathcal{A}$ and $\mathcal{B}$ and outputs an estimate of $\E[D_\theta]$, denoted by $q\in \mathbb{R}^{mkd}$. We have argued that $\mathcal{P}$ is fair w.r.t.~$(x^1,\dots, x^n)$ with probability $1-\frac{1}{W^\Omega(1)}$.

Depending on how often the output of $\calA$ is accurate, we consider two cases.

\paragraph*{Case 1.} Let $c > 0$ be sufficiently small. If, with probability $c\cdot \alpha$, $\calA$ fails to return $\alpha$-accurate answer (w.r.t.~$\Pi(D_\theta)$) to at least one query, the adversary $\mathcal{B}$ is the desired attack to $\calA$.

\paragraph*{Case 2.} Now we assume that with probability $1-c\alpha$, all outputs of $\mathcal{A}$ are $\alpha$-accurate w.r.t.~$\Pi(D_\theta)$. We argue that the output of $\mathcal{P}$ is close to $\E[D_\theta]$ on average. First, from Section~\ref{sec:fair-reduction-analysis}, we see that $\alpha$-accurate responses from $\calA$ induce an output $q$ of $\mathcal{P}$ such that $\| q - \E[D_{\theta}] \|\le O(\alpha kd)$. Second, regardless of $\mathcal{A}$ being accurate or not, the output $q\gets \mathcal{P}(x)$ always satisfies that $\| q\|_1 \le O(\sqrt{k} d)$. To see this, for every $i\in [m]$ and $r\in [d]$, there is $\lambda\in [-1,1]^{k}$ such that
$$
\| q_{i,*,r} \|_1 = \frac{1}{m} \left\| \frac{1}{k} \sum_{j=1}^{k} \lambda_j u^j \right\|_1 \le \frac{\sqrt{k}}{m} \left\| \frac{1}{k} \sum_{j=1}^{k} \lambda_j u^j \right\|_2 \le \frac{\sqrt{k}}{m}
$$
Summing over all $i,r\in [d\times m]$ verifies the claim. With this in mind, we calculate
$$
\left\| \E[\mathcal{P}(x)] - \mu_\theta \right\|_1 \le O(\alpha kd) + \Pr[\mathcal{A} \text{ not accurate}] \cdot O(\sqrt{kd}) \le O(\alpha k d).
$$
Hence, $\mathcal{P}$ is both accurate and natural, placing itself under the regime of Theorem~\ref{theo:lb-for-fair-algo}. By Theorem~\ref{theo:lb-for-fair-algo}, we can fix a $\theta$ and run the following attack. For every output $q$ of $\mathcal{P}$, we can define a query $\phi$ accordingly, such that
\[
\E_{x\sim D_\theta^n} \E_{q\sim \mathcal{P}(x), \phi} \left| \E_j[\phi(x^j)] - \E_{D_\theta}[\phi(x)] \right| \ge \Omega(\alpha).
\]
Since the deviation of the sample-mean from the distributional mean is always bounded by $1$, by the reverse Markov's inequality, with probability $\Omega(\alpha)$, it holds that: 
$$
\left| \E_j[\phi(x^j)] - \E_{D_\theta}[\phi(x)] \right| \ge \Omega(\alpha).
$$
For any $\phi$ for which the above holds, define $\phi'(x) \coloneqq \phi(\Pi^{-1}(x))$ accordingly. Then, it follows that
$$
\left| \E_j[\phi'(\Pi(x^j))] - \E_{x\sim \Pi(D_\theta)}[\phi'(x)] \right| \ge \Omega(\alpha).
$$
We see that $\phi'$ witnesses a mismatch between the data $(\Pi(x^1),\dots, \Pi(x^n))$ and the distribution $\Pi(D_\theta)$, rendering itself a desired attack query against $\calA$.

\end{proof}

\subsection{On Removing the Sample-Accurate Assumption}

In this section, we present several lower bounds against algorithms that are \emph{only} required to be accurate with respect to the distribution. The price we pay, however, is that the bound usually becomes smaller by a factor of $\frac{1}{\alpha}$.

\paragraph*{Lower bounds in the many-query regime.} We record relevant parameters here for quick reference: $N$ is the size of the universe, $M$ the number of queries, and $\alpha$ the desired accuracy. We have set $d = \Theta(\log(N)), m = \Theta(\log(M)), k = \frac{1}{\alpha^2}$ so that $2^m\cdot kd = M$ and $2^{d}\cdot km = N$. We have constructed a set of vectors $K=\{ (e^i\otimes u^j\otimes v) \}$ of size $|K| = mk2^d = N$, and an associated query matrix $A\in \{\pm 1\}^{M\times N}$.

As before, we start by analyzing the class of fair algorithms.

\begin{proposition}\label{prop:lb-for-fair-algo-distribution}
    Let $n = \frac{c \sqrt{d} m}{\alpha^2\log(1/\alpha)}$ and $\tau = \frac{C\sqrt{d\log(1/\alpha)}}{m}$. There is no $\tau$-fair algorithm $\calA:K^n\to [-1,1]^M$ that can answer all queries for the workload matrix $A\in \{\pm 1\}^{M\times N}$ to within $\ell_\infty$ generalization error $\alpha$ with probability $1-o(1)$.
\end{proposition}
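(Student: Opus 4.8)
The plan is to derive a contradiction between a \emph{score lower bound} and a \emph{score upper bound}. The lower bound is exactly the one established in Section~\ref{sec:composition-finish}: distributional accuracy lets us run the surgery of Section~\ref{sec:composition-surgery} and then apply the divergence theorem over the $\ell_1$-ball $V$. The upper bound is new here and comes for free from the $\tau$-fairness hypothesis, with no appeal to privacy. Because the sample budget $n=\frac{c\sqrt d\,m}{\alpha^2\log(1/\alpha)}$ is a factor $\approx\frac1\alpha$ larger than in Theorem~\ref{theo:lb-for-fair-algo}, these two bounds collide \emph{directly}, so — unlike in Theorem~\ref{theo:lb-for-fair-algo} — no intermediate ``witness query'' construction is needed.

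\emph{Score lower bound.} Suppose toward a contradiction that such an $\calA$ exists, and let $\tilde{\calA}:K^n\to\mathbb{R}^{mkd}$ be the post-processed mean estimator of Section~\ref{sec:composition-surgery}. By construction every output slice $\tilde{\calA}(x)_{i,*,r}$ lies in $H$, so $\|\tilde{\calA}(x)_{i,*,r}\|_\infty\le\frac1m$ and $\|\tilde{\calA}(x)\|_1\le\sqrt k\,d$ \emph{always}; moreover, on the event that $\calA$ is $\alpha$-accurate, $\|\tilde{\calA}(x)-\mu_\theta\|_1\le 4\alpha kd$. Since the bad event has probability $o(1)$ and $\sqrt k\,d = O(\alpha kd/\sqrt c)$ with $c$ a fixed constant, we still get $\|\E_x[\tilde{\calA}(x)]-\mu_\theta\|_1 = O(\alpha kd)$ for every $\theta\in V$. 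Applying Proposition~\ref{prop:type-dependent-score-1} — valid verbatim for the $d$-stage, $\theta$-informed setup by the remark following Definition~\ref{def:fair-algorithms} — together with the divergence theorem over $V$ exactly as in Section~\ref{sec:composition-finish}, and noting that the $\ell_1$-error $O(\alpha kd)$ perturbs the surface integral by only an $O(\sqrt c)$-fraction of $\E_{\theta\sim S}[\langle \mu_\theta,\vec n_\theta\rangle]=\Omega(\sqrt{d/m})$, we obtain for $c$ small enough
$$\E_{\theta\sim V}\;\E_{x\sim D_\theta^n,\,\tilde{\calA}(x)}\Bigl[\sum_{j=1}^n \score(x^j;\tilde{\calA}(x))\Bigr]\;\ge\;\Omega(kd)\;=\;\Omega\bigl(d/\alpha^2\bigr).$$

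\emph{Score upper bound and conclusion.} Fix any $\theta$ and $j\in[n]$, write $x^j=e^i\otimes u^{j'}\otimes v$, and watch the partial scores $\pscore^{(1)}(x^j;\tilde{\calA}(x)),\dots,\pscore^{(d)}(x^j;\tilde{\calA}(x))$. The $r$-th one-slice increment equals $\bigl(v_r-\E_{v':\,\mathrm{type}=\mathrm{type}(x^j)}[v'_r]\bigr)\cdot\langle u^{j'},\tilde{\calA}(x)_{i,*,r}\rangle\in[-\tfrac2m,\tfrac2m]$, since $\langle u^{j'},\tilde{\calA}(x)_{i,*,r}\rangle\in[-\tfrac1m,\tfrac1m]$. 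Let $r_0$ be the first stage at which the partial score exceeds $\tau$ (set $r_0=d$ if it never does); then $\pscore^{(r_0)}(x^j;\tilde{\calA}(x))\le\tau+\tfrac2m$. By Item~2 of Definition~\ref{def:fair-algorithms}, conditioned on $\tilde{\calA}(x)_{*,*,\le r_0}$ the full output $\tilde{\calA}(x)$ is independent of $v_{>r_0}$, while conditioned on $\mathrm{type}(x^j)$ and $v_{\le r_0}$ the bits of $v_{>r_0}$ are independent with the correct conditional means; hence the tail $\score(x^j;\tilde{\calA}(x))-\pscore^{(r_0)}(x^j;\tilde{\calA}(x))$ is, conditionally, a sum of at most $d$ independent mean-zero variables of magnitude $\le\tfrac2m$. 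Taking expectations, $\E[\score(x^j;\tilde{\calA}(x))\mid r_0]\le\tau+\tfrac2m$ for every value of $r_0$, so $\E[\score(x^j;\tilde{\calA}(x))]\le\tau+\tfrac2m=O(\tau)$ — and this used only the structure of $\tilde{\calA}$ and $\tau$-fairness, not accuracy. Summing over $j$, averaging over $\theta$, and comparing with the previous display yields $\Omega(d/\alpha^2)\le O(n\tau)=O\!\bigl(\tfrac{n\sqrt{d\log(1/\alpha)}}{m}\bigr)$, i.e. $n=\Omega\!\bigl(\tfrac{\sqrt d\,m}{\alpha^2\sqrt{\log(1/\alpha)}}\bigr)$, which contradicts $n=\tfrac{c\sqrt d\,m}{\alpha^2\log(1/\alpha)}$ for a sufficiently small absolute constant $c$ (using $\log(1/\alpha)\ge1$).

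\emph{Main obstacle.} The delicate step is the score upper bound: making the optional-stopping argument rigorous, i.e. verifying that Item~2 of $\tau$-fairness, combined with the coordinate-wise independence of $v$ under the type-conditioned exponential tilt, genuinely forces the post-$r_0$ portion of the score to be conditionally mean-zero (so that $\E[\score(x^j)\mid r_0]\le\tau+O(1/m)$), and keeping track of the fact that the same post-processed estimator $\tilde{\calA}$ — whose bounded-slice property drives the upper bound — is the one fed into the divergence identity in the lower bound. The accuracy-failure bookkeeping in the lower bound is routine given that $\sqrt k\,d$ and $\alpha kd$ have the same order when $k=\Theta(1/\alpha^2)$.
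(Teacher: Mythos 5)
Your proposal is correct and follows essentially the same route as the paper's proof: derive the $\Omega(d/\alpha^2)$ score lower bound by feeding the post-processed estimator $\tilde{\calA}$ (with the $o(1)$ accuracy-failure event absorbed via the always-valid bound $\|\tilde{\calA}(x)\|_1\le O(\sqrt{k}d)=O(\alpha kd)$) into the divergence identity, and pair it with the per-sample upper bound $\E[\score(x^j;\tilde{\calA}(x))]\le O(\tau)$ that follows from $\tau$-fairness alone via the stopped-partial-score argument. Your write-up merely spells out the optional-stopping step (conditional mean-zero tail after the first threshold crossing) in more detail than the paper, which states the same $\tau+O(\sqrt{d\log(1/\alpha)}/m)$ bound tersely; the arithmetic and the contradiction with $n=\frac{c\sqrt{d}m}{\alpha^2\log(1/\alpha)}$ coincide.
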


\begin{proof}
    Suppose for contradiction that such an $\calA$ exists. Let $\tilde{\calA}$ be the composition of $\calA$ with the post-processing of Section~\ref{sec:composition-surgery}. Recall our parameter setting that $k = \frac{c}{\alpha^2}$ for some small $c$. Then, for any $\theta$ and its induced distribution $D_\theta$, we have
    $$
    \begin{aligned}
    &~~~~ \| \E_{x\sim D_\theta^n,\tilde{\calA}(x)}[ \tilde{\calA}(x)] - \E[D_\theta] \|_1 
    &\le O(\alpha kd) + \Pr[\mathcal{A} \text{ not accurate}] \cdot O(\sqrt{k} d) \\
    &\le O(\alpha kd).
    \end{aligned}
    $$
    In particular, this means $\tilde{\calA}$ is subject to the score lower bound from Section~\ref{sec:composition-finish}. Namely,
    $$
    \E_{\theta} \E_{x,\tilde{\calA}(x)} \left[  \sum_{j=1}^n \score(x^j;\tilde{\calA}(x^j)) \right] \ge \Omega(\frac{d}{\alpha^2}).
    $$
    On the other hand, assuming $\mathcal{A}$ is $\tau$-fair, we have
    $$
    \E_{\theta}\E_{x,\tilde{\calA}(x)}[\score(x^j;\tilde{\calA}(x)] \le \tau + O(\frac{C\sqrt{d \log(1/\alpha)}}{m}) \le O(\frac{\sqrt{d \log(1/\alpha)}}{m}).
    $$
    If $n < \frac{c\sqrt{d} m}{\alpha^2 \log(1/\alpha)}$, the score lower bound and upper bound becomes contradictory. This completes the proof.
\end{proof}

Section~\ref{sec:algorithm-made-fair} has described a reduction, roughly saying the following. Let $\mathcal{A}$ be an  ADA algorithm working over the domain $[N\times W]$. One can combine $\calA$ with the reduction procedure and obtain a $\tau$-fair query releasing algorithm, which works over a smaller universe $[N]$, enjoys the same sample complexity, and incurs a slightly higher error (higher by a $(1+o(1))$ factor). Furthermore, the reduction interacts with $\calA$ for at most $d$ adaptive rounds. Since all fair algorithms have been ruled out by Proposition~\ref{prop:lb-for-fair-algo-distribution}. We can thus combine the reduction with Proposition~\ref{prop:lb-for-fair-algo-distribution} and prove the following theorem.
\begin{theorem}\label{theo:lb-for-all-algo-distribution}
    Let $N,M\ge 1$ and $\alpha \in (0,1/10)$ be such that $2^N\gg M$, $2^M\gg N$ and $\frac{1}{\alpha}\ll \min(N,M)$. For some $n=\Theta(\frac{\sqrt{\log(N)} \log(M)}{\alpha^2 \log(1/\alpha)})$, the following is true: any ADA algorithm, over the universe $[N]$ and operating on $n$ samples, cannot answer $M$ adaptively generated statistical queries to within generalization error $\alpha$.

    Furthermore, there is an attack against $\mathcal{A}$, which breaks its accuracy within $O(\log N)$ rounds of adaptivity with constant probability. The attack sends at most $M$ queries.
\end{theorem}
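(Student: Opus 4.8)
The plan is to combine Proposition~\ref{prop:lb-for-fair-algo-distribution}, which rules out $\tau$-fair query-releasing algorithms in the low-sample regime, with the ``make-fair'' reduction of Section~\ref{sec:algorithm-made-fair}, following the template of the proof of Theorem~\ref{theo:ada-lb-formal}. The key simplification over that proof is that we no longer need to manufacture a score-based witness query: an ADA algorithm that is not distribution-accurate hands us a losing query for free, which is exactly why the bound here is weaker by a $\frac1\alpha$ factor --- we only invoke Proposition~\ref{prop:lb-for-fair-algo-distribution}, with its $n=\Theta(\sqrt d\, m/(\alpha^2\log(1/\alpha)))$, rather than Theorem~\ref{theo:lb-for-fair-algo}. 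Throughout, set $d=\Theta(\log N)$, $m=\Theta(\log M)$, $k=\Theta(1/\alpha^2)$ with $2^d k m=N$ and $2^m k d=M$, and $\tau=C\sqrt{d\log(1/\alpha)}/m$; we work over the enlarged domain $[N\times W]$ with $W=(\log(N)\log(M)/\alpha^3)^{O(1)}$, which is harmless since $\log(NW)=\Theta(\log N)$.

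First I would take a hypothetical ADA algorithm $\calA$ over $[N\times W]$ using $n=\Theta(\sqrt{\log N}\,\log M/(\alpha^2\log(1/\alpha)))$ samples and run it against the obfuscation-based adversary $\calB$ from Section~\ref{sec:algorithm-made-fair}: draw $\theta$, a random $\Pi$, and $x^1,\dots,x^n\sim D_\theta^n$; feed $\Pi(x^1),\dots,\Pi(x^n)$ to $\calA$ (equivalently, $n$ i.i.d.\ draws from $\Pi(D_\theta)$); over $d$ stages send the batches $\tilde H^{\Pi,r}$, freezing compromised points, and post-process the answers via Section~\ref{sec:composition-surgery} into a vector $q\in\mathbb{R}^{mkd}$. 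Call this whole procedure $\calP$. From Section~\ref{sec:fair-reduction-analysis} I would import two facts: (i) with probability $1-1/W^{\Omega(1)}$ all names are distinct and $\calP$ is $\tau$-fair; and (ii) on the event that $\calA$'s answers to all of $\tilde H^{\Pi,*}$ are $\alpha$-accurate with respect to $\Pi(D_\theta)$, the output obeys $\|q-\E[D_\theta]\|_1\le O(\alpha kd)$, because in every stage the fraction of compromised points stays below $\alpha$ (a random walk with zero-mean, $\tfrac{1}{m}$-bounded increments started below the threshold $\tau=\Theta(\sqrt{d\log(1/\alpha)}/m)$). The interaction runs in $d=O(\log N)$ adaptive rounds and sends $d\cdot 2^m k=M$ queries in total.

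Then I would argue by contradiction. Suppose $\calA$ answers every query of $\tilde H^{\Pi,*}$ to within generalization error $\alpha$ with probability $1-o(1)$ over all the randomness. A union bound with fact (i) gives, with probability $1-o(1)$, that $\calP$ is simultaneously $\tau$-fair and produces $q$ with $\|q-\E[D_\theta]\|_1\le O(\alpha kd)$ --- i.e.\ $\calP$ is a $\tau$-fair algorithm answering the workload $A\in\{\pm 1\}^{M\times N}$ to within $\ell_\infty$ generalization error $O(\alpha)$ with probability $1-o(1)$. Choosing the constant in $n$ small enough, this contradicts Proposition~\ref{prop:lb-for-fair-algo-distribution}. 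Hence $\calA$ must, with probability $\Omega(1)$, be inaccurate on some query of $\tilde H^{\Pi,r}$; since each such query is a genuine statistical query over $[N\times W]$ whose target is its mean under $\Pi(D_\theta)$, the adversary $\calB$ exhibiting that failure is an $O(\log N)$-round, $M$-query attack that breaks $\calA$'s distributional accuracy with constant probability, which is the theorem.

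The main obstacle I anticipate is bookkeeping rather than a new idea: one must fix the order of the randomness ($\theta$, then $\Pi$, then the data, then the interaction) and check that the fairness and accuracy guarantees of Section~\ref{sec:fair-reduction-analysis} are uniform enough for the union bound in the last step --- in particular that the ``compromised fraction $\le\alpha$'' event holds simultaneously across all $d$ stages no matter what (possibly adversarial) partial outputs $\calA$ produces, which is where the choice of $\tau$ and the subgaussian random-walk estimate are load-bearing. A secondary point is that translating the accuracy promise from $\tilde H^{\Pi,*}$ to $H^{\Pi,*}$ and then to $H^*$ costs only a $(1+o(1))$ factor, so that $\calP$'s error is $(1+o(1))\alpha$ and Proposition~\ref{prop:lb-for-fair-algo-distribution} still applies after absorbing the slack into constants.
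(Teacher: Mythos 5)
Your proposal is correct and follows essentially the same route as the paper: the paper's own proof is exactly the combination of the make-fair reduction of Section~\ref{sec:algorithm-made-fair} (with its fairness and accuracy analysis from Section~\ref{sec:fair-reduction-analysis}) with Proposition~\ref{prop:lb-for-fair-algo-distribution}, deriving a contradiction if the ADA algorithm were distribution-accurate with probability $1-o(1)$. Your observation that no score-based witness query is needed here --- the failing query itself is the attack, which is precisely why the bound loses a $1/\alpha$ factor relative to Theorem~\ref{theo:ada-lb-formal} --- matches the paper's intent, and your accounting of the round count ($d=O(\log N)$), query count ($d\cdot 2^m k = M$), and the $(1+o(1))$ accuracy slack is consistent with the paper's argument.
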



\begin{remark} 
If we ``unpack'' the proof of Theorem~\ref{theo:lb-for-all-algo-distribution}, we will find out that the attack strategy is similar to prior works \cite{HardtU14,SteinkeU15}. Namely, we assign a ``score'' to every point in the universe. We deem any point with a large score likely in the data set. As such, we remove their contribution to future queries. At a certain moment, almost all data points of the algorithm are exposed. Consequently, the algorithm fails to evaluate new queries accurately. However, due to the limitation of traditional fingerprinting lemmas (which required a Boolean hypercube structure), prior works cannot push the technique to the many-query small-universe regime. In contrast, our geometric fingerprinting technique allows for working with a richer class of geometric structures, enabling us to make further progress in understanding the limitations of efficient adaptive data analysis.
\end{remark}

\paragraph*{Lower bounds in the few-query regime.} So far, we are interested in the regime where the number of queries is large (compared with log-universe-size). For a smaller number of queries, we establish the following theorem.

\begin{theorem}\label{theo:lb-for-fewer-queries}
Let $n = \frac{c\sqrt{m}}{\alpha \log(1/\alpha)}$ and $N = 2^{O(\alpha^2 m)}$. Let $\calA$ be an ADA algorithm over the universe $[N]$ where $N=2^{O(\alpha^2 m)}$. Suppose $\calA$ receives only $n$ samples. Then, it cannot answer $m$ adaptively generated statistical queries within generalization error $\alpha$.

Furthermore, there is an attack against $\mathcal{A}$ that breaks its accuracy within $O(\alpha^2 m)$ rounds of adaptivity with constant probability. The attack sends at most $m$ queries.
\end{theorem}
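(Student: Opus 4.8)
The plan is to re-run the argument behind Theorem~\ref{theo:lb-for-all-algo-distribution} in the ``few-query'' parameter regime: keep only the Hadamard ($u^j$) component of the construction (the part responsible for the $\Omega(1/\alpha^2)$ reconstruction boost), drop the Boolean-predicate ($e^i/h$) component, and run the resulting two-way-marginals-style workload (Section~\ref{sec:composition-two-way}) adaptively over $d$ rounds. Concretely, set $k=\Theta(1/\alpha^2)$ and $d=\Theta(\alpha^2 m)$ (the number of rounds), with constants chosen so that $kd\le m$ and $\log(k\cdot 2^d)=O(\alpha^2 m)\le\log N$; take $\{u^j\}_{j=1}^k\subseteq\{\pm1\}^k$ pairwise orthogonal, $K=\{u^j\otimes v:j\in[k],v\in\{\pm1\}^d\}\subseteq\mathbb{R}^{kd}$, $V=B_{\ell_2}(0,\sqrt d)$, and the type-conditioned exponential tilt $D_\theta$ of Section~\ref{sec:composition-two-way}. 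The workload is $\{q_{\ell,r}:q_{\ell,r}(u^j\otimes v)=u^j_\ell v_r\}_{\ell\in[k],r\in[d]}$, partitioned into $d$ slices of $k$ queries (slice $r$ sent in round $r$), for $kd\le m$ queries in total.

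First I would prove a lower bound against $\tau$-fair algorithms, mirroring Proposition~\ref{prop:lb-for-fair-algo-distribution}. Fairness is defined as in Definition~\ref{def:fair-algorithms}, now with the type of $u^j\otimes v$ being $u^j$, the output split into $d$ slices of dimension $k$, and partial scores $\pscore^{(r)}$ accumulating the first $r$ slices. Given a $\tau$-fair, distributional-$\alpha$-accurate $\calA$, the post-processing of Section~\ref{sec:composition-two-way-surgery} yields $\hat\mu$ with $\|\hat\mu-\mu_\theta\|_2\le 2\alpha\sqrt{kd}$ and $|\langle\hat\mu_{*,r},u^j\rangle|\le 1$ for all $j,r$; the divergence-to-score identity (which still holds when the algorithm receives $\theta_{*,<r}$ after stage $r$, by the same remark as in Section~\ref{sec:lb-ada}) together with the surface-integral computation of Section~\ref{sec:composition-two-way} gives the total-score lower bound $\E_\theta\E_x\bigl[\sum_j\score(x^j;\calA(x))\bigr]\ge\Omega(kd)$. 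For the matching upper bound, conditioned on its type $u^j$ the score of a fresh point is $\langle v-\E[v'],\calA(x)\cdot u^j\rangle$, a sum of $d$ independent terms each of magnitude $O(1)$, hence $O(\sqrt d)$-subgaussian; by fairness, for each in-sample $x^j$ the score is stochastically dominated by $\tau+O(\sqrt d)\cdot N(0,1)$, so with $\tau=\Theta(\sqrt{d\log(1/\alpha)})$ we obtain $\E[\score(x^j;\calA(x))]\le O(\sqrt{d\log(1/\alpha)})$. Summing over $j$ forces $n\cdot O(\sqrt{d\log(1/\alpha)})\ge\Omega(kd)$, i.e.\ $n\ge\Omega\bigl(k\sqrt d/\sqrt{\log(1/\alpha)}\bigr)=\Omega\bigl(\sqrt m/(\alpha\sqrt{\log(1/\alpha)})\bigr)$, which contradicts $n=\tfrac{c\sqrt m}{\alpha\log(1/\alpha)}$ for a sufficiently small constant $c$.

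Next I would lift this to all ADA algorithms via the obfuscation/fair-reduction machinery of Sections~\ref{sec:obfuscate-inputs}--\ref{sec:fair-reduction-analysis}, with the $e^i$ component removed: enlarge the universe by a $W=(\log N/\alpha)^{O(1)}$ factor (keeping $\log(NW)=O(\alpha^2 m)$), mask the $v$-part of each point with a random $\Pi$ that is revealed slice-by-slice over $d$ rounds, and ``compromise'' a point (stop revealing its mask) once its partial score crosses $\tau$. Exactly as in Section~\ref{sec:fair-reduction-analysis}, a martingale/random-walk bound on $\{\pscore^{(r)}\}_r$ shows that at most an $\alpha^{\Omega(C)}$ fraction of universe points are ever compromised, so the induced query-releasing procedure $\mathcal P$ is $(1+o(1))\alpha$-accurate and $\tau$-fair on the $1-o(1)$ event that the $n$ sampled points have distinct names. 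Hence, if an ADA algorithm over $[N]$ using $n=\tfrac{c\sqrt m}{\alpha\log(1/\alpha)}$ samples were distributional-$\alpha$-accurate on all queries with probability $1-o(1)$, then $\mathcal P$ would be a $\tau$-fair, $(1+o(1))\alpha$-accurate algorithm with $n$ samples, contradicting the previous paragraph; so the algorithm must be inaccurate on one of the $\le m$ queries with $\Omega(1)$ probability. The reduction proceeds in $d=O(\alpha^2 m)$ adaptive rounds, as claimed.

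The main obstacle is the bookkeeping of logarithmic factors, as in the many-query case: one must ensure that the per-point score bound $\tau+O(\sqrt{d\log(1/\alpha)})$ and the total-score bound $\Omega(kd)$ are genuinely contradictory at the target sample size $n=\tfrac{c\sqrt m}{\alpha\log(1/\alpha)}$, and that the divergence-to-score identity survives the staged revelation of $\theta$. A secondary constraint is that the fair reduction must not inflate the number of rounds or queries beyond $d=\Theta(\alpha^2 m)$ and $kd\le m$; this pins down the constants in $k$ and $d$ and, together with $\log(NW)=O(\alpha^2 m)$, requires the mild regularity $m\gg\log(1/\alpha)/\alpha^2$ (so that $d\gg1$ and $\log N\asymp\alpha^2 m$).
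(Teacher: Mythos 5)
Your proposal is correct and follows essentially the same route as the paper: it lifts the two-way-marginal-style construction of Section~\ref{sec:composition-two-way} (with $k=\Theta(1/\alpha^2)$, $d=\Theta(\alpha^2 m)$, the type-conditioned tilt over $K=\{u^j\otimes v\}$, and $\tau$-fairness with $\tau=\Theta(\sqrt{d\log(1/\alpha)})$) to an ADA lower bound via the obfuscation/fair-reduction machinery of Section~\ref{sec:algorithm-made-fair}. The score upper and lower bounds, the resulting $n\ge\Omega(\sqrt m/(\alpha\,\mathrm{polylog}(1/\alpha)))$ contradiction, and the $O(\alpha^2 m)$-round count all match the paper's argument.
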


Theorem~\ref{theo:lb-for-fewer-queries} recovers the main information-theoretic lower bound of \cite{SteinkeU15}. Our theorem is slightly stronger in the sense that our attack breaks the algorithm within $\alpha^2 k$ rounds of adaptivity, where in each round, the attacker sends $\frac{1}{\alpha^2}$ queries in a batch. Prior constructions require full adaptivity, and they send queries one by one.

\begin{proof}[Proof sketch.]
    We lift the query-releasing lower bound of Theorem~\ref{theo:two-way-marginal-lb} (whose proof appeared in Section~\ref{sec:composition-two-way}) to an ADA lower bound.

    In particular, recall that $k = \frac{c}{\alpha^2}$. For the ensemble of vectors
    $$
    K = \{u^j\otimes v: v\in \{\pm 1\}^{d} \} \subseteq \mathbb{R}^{kd}
    $$
    and its associated query matrix $A\in \{\pm 1\}^{kd\times (k2^d)}$, we have shown that any $\alpha$-accurate (in $\ell_\infty$ metric w.r.t.~$D_{\theta}$) algorithm $\mathcal{A}$ satisfies that 
    $$
    \E_{\theta} \E_{x,\calA(x)} \left[ \sum_{j=1}^{n} \score(x^j;\calA(x)) \right] \ge \Omega( \frac{d}{\alpha^2}).
    $$
    If we further assume that $\calA$ is $\tau=C\sqrt{d\log(1/\alpha)}$-fair w.r.t.~the slicing of $\mathbb{R}^{kd}$ into $d$ slices, then we have
    $$
    \E_{\theta}\E_{x,\calA(x)} \left[\score(x^j;\calA(x)) \right] \le \tau + O(\sqrt{d\log(1/\alpha)}) \le O(\sqrt{d\log(1/\alpha)}).
    $$
    The score upper and lower bounds give a lower bound of $n\ge \Omega\left(\frac{\sqrt{d}}{\alpha^2 \log(1/\alpha)}\right)$. In terms of the number of queries $m=kd$, this is an $\Omega(\frac{\sqrt{m}}{\alpha \log(1/\alpha)})$ lower bound. The universe size is $k2^d = 2^{O(d)} = 2^{O(\alpha^2 m)}$.

    We have shown a lower bound against fair algorithms. We can use a version of the reduction from Section~\ref{sec:algorithm-made-fair} to obtain a lower bound against all algorithms. The reduction enlarges the universe size by a factor of $W = n^{O(1)}$, which is negligible for a typical parameter regime where $2^{\alpha^2 m}\gg n$.
\end{proof}


\section{Lower bounds on Random Query Releasing via Fingerprinting}

In this section, we prove the following lower bound for a set of random queries.

\begin{theorem}\label{theo:random-query-lb}
Let $N,d\ge 0$ be two integers such that $N\gg d$ and $\log(N) \le o(d)$. Consider a random matrix $A\in\{\pm 1\}^{d\times N}$ where each entry of $A$ is independently set to $\pm 1$ with equal probability.

Then, with probability $1-o(1)$ over $A$, the following is true for all $\alpha < \frac{c \sqrt{\log N}}{\sqrt{d}}$: for every $\eps \in (0, 1)$ and $\delta < \frac{\alpha}{d}$, any $(\eps,\delta)$-DP algorithm $\calA$ for query releasing with workload matrix $A$ needs at least $\Omega\left( \frac{\sqrt{d}}{\eps \alpha} \right)$ samples to achieve a mean-squared error of $\alpha^2 d$.
\end{theorem}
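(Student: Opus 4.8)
The plan is to apply the geometric fingerprinting template of \cref{sec:fingerprinting-meets-geometry} with $K$ taken to be the set of $N$ columns of the random matrix $A$, so that an $\ell_2^2$-query-release algorithm for $A$ is exactly an algorithm that, given $n$ i.i.d.\ samples from a subset of $\{\pm 1\}^d$, outputs an approximation to their mean. I would take the parameter body to be the Euclidean ball $V = B_{\ell_2}(0,R)$ with $R = \Theta(\alpha\sqrt d)$: the hypothesis $\alpha < c\sqrt{\log N}/\sqrt d$ then forces $R \le c'\sqrt{\log N}$ for a small absolute constant $c'$, and $\log N = o(d)$ forces $R = o(\sqrt d)$. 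The argument rests on two structural facts about the random set $K$, which I would prove hold with probability $1-o(1)$ over $A$ and, after a net argument, simultaneously for all $\theta \in V$: \textbf{(i)} $\|\COV(D_\theta)\|_{\mathrm{op}} = O(1)$; and \textbf{(ii)} $\mu_\theta := \mathbb E_{v\sim D_\theta}[v]$ satisfies $\|\mu_\theta\|_2 = O(R)$ and, for a $1-o(1)$ fraction of $\theta \in S := \partial V$, $\langle \mu_\theta,\theta\rangle \ge \Omega(R^2)$. Morally, once $R = O(\sqrt{\log N})$ the exponentially tilted ``typical set'' of the full cube still carries a $\ge N^{-o(1)}$ fraction of its mass, so a uniformly random $N$-point subset $K$ is rich enough there that $D_\theta$ inherits the behavior of the product tilt $\tilde D_\theta$ over the whole cube, for which one has \emph{exactly} $\COV(\tilde D_\theta)=\mathrm{diag}(\mathrm{sech}^2\theta_j)\preceq I$, $\tilde\mu_\theta=(\tanh\theta_j)_j$, and $\langle\tilde\mu_\theta,\theta\rangle = \sum_j\theta_j\tanh\theta_j \ge \Omega(\|\theta\|_2^2)$ whenever no coordinate of $\theta$ is large.

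Assume toward a contradiction that an $(\eps,\delta)$-DP algorithm $\calA$ using $n \le c\,\sqrt d/(\eps\alpha)$ samples achieves $\ell_2^2$ error at most $\alpha^2 d$, i.e.\ $\|\calA(v) - \tfrac1n\sum_j v^j\|_2 \le \alpha\sqrt d$; we may truncate its output into $[-1,1]^d$. Since $\mathbb E[\tfrac1n\sum_j v^j] = \mu_\theta$, accuracy gives $\|g(\theta) - \mu_\theta\|_2 = O(\alpha\sqrt d)$ for $g(\theta) := \mathbb E_{v\sim D_\theta^n}[\calA(v)]$. Applying \cref{prop:differentiate-to-score} and the divergence theorem over the ball $V$ (so $\vec n_\theta = \theta/R$ and $\Area(S)/\Vol(V)=d/R$),
\[
\mathbb E_{\theta\sim V}\;\mathbb E_{v\sim D_\theta^n,\,\calA(v)}\Big[\textstyle\sum_{j=1}^n \langle \calA(v),\, v^j - \mu_\theta\rangle\Big] \;=\; \frac{d}{R}\,\mathbb E_{\theta\sim S}\Big[\big\langle g(\theta),\tfrac{\theta}{R}\big\rangle\Big] \;\ge\; \frac{d}{R}\Big(\Omega(R) - O(\alpha\sqrt d)\Big) \;\ge\; \Omega(d),
\]
using \textbf{(ii)} and choosing $R$ a large enough constant multiple of $\alpha\sqrt d$ that the error term is a small fraction of $R$. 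Fix a $\theta$ realizing $\mathbb E_{v,\calA(v)}\big[\sum_j \langle \calA(v), v^j-\mu_\theta\rangle\big] \ge \Omega(d)$.

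For that $\theta$, condition on $\calA(v)$ and draw a fresh $v'\sim D_\theta$: then $X := \langle v' - \mu_\theta, \calA(v)\rangle$ has mean $0$, variance $\calA(v)^\top \COV(D_\theta)\calA(v) \le O(1)\cdot\|\calA(v)\|_2^2 = O(R^2)$ by \textbf{(i)} together with $\|\calA(v)\|_2 \le \|\mu_\theta\|_2 + \alpha\sqrt d = O(R)$ from \textbf{(ii)} and accuracy, and $|X| \le \|v'-\mu_\theta\|_2\,\|\calA(v)\|_2 = O(\sqrt d\,R)$. Privacy makes $X$ and $\langle v^j - \mu_\theta, \calA(v)\rangle$ $(\eps,\delta)$-indistinguishable, so \cref{prop:privacy-imply-small-score-difference} yields, for each in-sample $v^j$,
\[
\mathbb E\big[\langle \calA(v), v^j - \mu_\theta\rangle\big] \;\le\; O\big(\eps R + \delta\sqrt d\,R\big) \;\le\; O(\eps R),
\]
since $\delta < \alpha/d$ and $R = \Theta(\alpha\sqrt d)$ make the $\delta$-term lower order. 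Summing over the $n$ samples and comparing with the $\Omega(d)$ lower bound forces $n\cdot O(\eps R) \ge \Omega(d)$, i.e.\ $n \ge \Omega(d/(\eps R)) = \Omega(\sqrt d/(\eps\alpha))$, contradicting the choice of $n$.

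The template steps (divergence-to-score, privacy-to-score, the accuracy bounds) are essentially mechanical here; the real content, and the main obstacle, is establishing \textbf{(i)}–\textbf{(ii)} for the random set $K$ \emph{uniformly in $\theta\in V$}. The crude estimate $\|\COV(D_\theta)\|_{\mathrm{op}} \le (\max_i e^{\langle\theta,v^i\rangle})/(\mathrm{avg}_i e^{\langle\theta,v^i\rangle})\cdot O(1)$ is useless (the ratio is $N^{\Omega(1)}$), so one must show that the columns of $K$ lying in the exponentially tilted typical set are themselves isotropic enough — a matrix-concentration statement whose effective sample size is $\approx N e^{-\Theta(R^2)} = N^{1-o(1)}$, so $N$ must be large enough relative to $d$ to survive a union bound over a sufficiently fine net of $V$ (the map $\theta\mapsto\COV(D_\theta)$ being smooth, as a second derivative of a log-partition function, which controls the net). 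A secondary technicality is that a generic $\theta$ on the sphere of radius $R=o(\sqrt d)$ has all coordinates $o(1)$ but not every $\theta$ does; this is handled by discarding the $o(1)$-measure part of $S$ carrying an atypically large coordinate (harmless in the averaged divergence identity) or by using $\tanh$/$\mathrm{sech}$ estimates uniform for bounded coordinates. Finally, since each good event occurs with probability $1-o(1)$, a single matrix $A$ can be fixed on which every estimate above holds at once.
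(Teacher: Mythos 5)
Your skeleton matches the paper's: tilt the uniform distribution over the columns of $A$, integrate the divergence identity over a Euclidean ball, lower-bound the surface term via the tilted mean, and upper-bound the per-sample score via Proposition~\ref{prop:privacy-imply-small-score-difference} together with an $O(1)$ bound on $\lambda_{\max}(\COV(D_\theta))$. Your two structural facts are exactly the paper's ``expanding'' and ``regular'' properties (Propositions~\ref{prop:random-is-expanding} and~\ref{prop:random-is-regular}), and your bound $\|\calA(v)\|_2=O(R)$ plays the same role as the paper's recentred score $\langle x'-\mu_\theta,\calA(x)-\mu_\theta\rangle$. Two points, however, are genuine gaps rather than routine details.

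First, you propose to prove $\lambda_{\max}(\COV(D_\theta))=O(1)$ \emph{uniformly} over $\theta\in V$ by a net argument, and you yourself concede this needs ``$N$ large enough relative to $d$.'' The theorem assumes only $N\gg d$; a net of $B_2(0,R)\subseteq\mathbb{R}^d$ has $e^{\Theta(d\log(1/\eta))}$ points, while the per-$\theta$ concentration (comparing $\COV(D_\theta(A))$ to the product tilt after pruning the far tail of the cube) fails only with probability $e^{-N^{\Omega(1)}}$, so for $N$ barely superlinear in $d$ the union bound does not close. The paper never proves a uniform statement: it bounds $\Pr_{\theta,A}[\mathrm{bad}]\le d^{-10}$ over the \emph{joint} randomness, applies Markov to get $\Pr_\theta[\mathrm{bad}]\le d^{-5}$ for all but an $o(1)$ fraction of matrices $A$, and charges the exceptional $\theta$'s the trivial score bound $O(d)$, costing only $O(d^{-5}\cdot d)$ in the $\theta$-averaged upper bound. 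That averaging device is the missing idea; without it (or a substantially stronger uniform concentration) your route does not cover the stated range of $N$. The same remark applies to your fact \textbf{(ii)}, which the paper likewise only establishes for a $1-d^{-5}$ fraction of $\theta$.

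Second, your inequality $\|\calA(v)\|_2\le\|\mu_\theta\|_2+\alpha\sqrt d$ conflates the empirical mean with $\mu_\theta$: accuracy controls $\|\calA(v)-\tfrac1n\sum_j v^j\|_2$, and $\|\tfrac1n\sum_j v^j-\mu_\theta\|_2=\Theta(\sqrt{d/n})$, which exceeds $\alpha\sqrt d$ exactly when $n<1/\alpha^2$. The paper closes this by the padding reduction at the end of its proof (trade $n\mapsto n/k$ samples for error $k\alpha$, so one may assume $n\ge 1/\alpha^2$ without loss of generality); you need the same reduction, or your variance bound degrades to $O(R^2+d/n)$ and the final comparison no longer yields $n=\Omega(\sqrt d/(\eps\alpha))$. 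Your choice of radius $R=\Theta(\alpha\sqrt d)$ in place of the paper's $c\sqrt{\log N}$ is fine: under the hypothesis on $\alpha$ both are $O(\sqrt{\log N})$, and both make the accuracy error a small fraction of the surface term.
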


We will prove Theorem~\ref{theo:random-query-lb} via our geometric fingerprinting framework. Here, both the upper and lower bounds on the ``score'' are not straightforward, and we will use randomness of $A$ in an essential way to establish both bounds.

We also highlight that Theorem~\ref{theo:random-query-lb} is the first application of our framework on a set $K$ that does not appear to contain a large hypercube. In contrast, the lower bounds presented in Sections~\ref{sec:composition} and \ref{sec:lb-ada} all depend heavily on an embedded hypercube structure.

\subsection{Basic Facts on Rademacher Sums}

We need tight control on the tail bounds of Rademacher sums. To begin with, the following tail upper bound is a direct consequence of the Hoeffding inequality.

\begin{lemma}\label{lemma:rad-tail-ub}
    Let $a\in \mathbb{R}^d$. Then, for every $t\ge 1$, it holds that
    $$
    \Pr_{x\sim \{\pm 1\}^{d}}[\langle a, x\rangle \ge t\|a\|_2] \le \exp(-\frac{t^2}{2}).
    $$
\end{lemma}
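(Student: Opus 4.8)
The statement to prove is \cref{lemma:rad-tail-ub}: for $a\in\mathbb{R}^d$ and $t\ge 1$, $\Pr_{x\sim\{\pm1\}^d}[\langle a,x\rangle\ge t\|a\|_2]\le\exp(-t^2/2)$.

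\textbf{Approach.} This is the one-sided Hoeffding bound for a Rademacher sum, and I would prove it by the standard exponential-moment (Chernoff) method. Writing $X=\langle a,x\rangle=\sum_{i=1}^d a_i x_i$ with the $x_i$ independent and uniform on $\{\pm1\}$, the plan is to bound $\mathbb{E}[e^{\lambda X}]$ for $\lambda>0$, apply Markov's inequality, and optimize over $\lambda$.

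\textbf{Key steps.} First, by independence, $\mathbb{E}[e^{\lambda X}]=\prod_{i=1}^d\mathbb{E}[e^{\lambda a_i x_i}]=\prod_{i=1}^d\cosh(\lambda a_i)$. Second, use the elementary inequality $\cosh(s)\le e^{s^2/2}$, valid for all real $s$ (which follows by comparing Taylor coefficients: $\frac{1}{(2k)!}\le\frac{1}{2^k k!}$). This gives $\mathbb{E}[e^{\lambda X}]\le\prod_i e^{\lambda^2 a_i^2/2}=e^{\lambda^2\|a\|_2^2/2}$. Third, by Markov, for any $\lambda>0$,
\[
\Pr[X\ge t\|a\|_2]\le e^{-\lambda t\|a\|_2}\,\mathbb{E}[e^{\lambda X}]\le\exp\!\left(\tfrac{\lambda^2\|a\|_2^2}{2}-\lambda t\|a\|_2\right).
\]
Fourth, choose $\lambda=t/\|a\|_2$ (assuming $a\ne 0$; the case $a=0$ is trivial since then the event is empty for $t\ge 1>0$), which makes the exponent equal to $\tfrac{t^2}{2}-t^2=-\tfrac{t^2}{2}$, yielding the claimed bound. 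Note the bound in fact holds for all $t>0$, not just $t\ge 1$, so the hypothesis $t\ge 1$ is only there because that is the regime in which it will be used.

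\textbf{Main obstacle.} There is essentially no obstacle here — this is a textbook Chernoff/Hoeffding argument. The only ``content'' is the inequality $\cosh(s)\le e^{s^2/2}$, which is itself standard and has a one-line proof via power series. So I would keep the proof to a few lines along the outline above.
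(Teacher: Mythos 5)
Your proof is correct and matches the paper, which simply invokes Hoeffding's inequality without proof; your Chernoff argument via $\mathbb{E}[e^{\lambda\langle a,x\rangle}]=\prod_i\cosh(\lambda a_i)\le e^{\lambda^2\|a\|_2^2/2}$ and the choice $\lambda=t/\|a\|_2$ is exactly the standard derivation of that bound. (One immaterial nit: for $a=0$ the event $\langle a,x\rangle\ge t\|a\|_2$ is the whole space rather than empty, so the stated inequality is vacuously false in that degenerate case — but the lemma is only ever applied with $a\ne0$.)
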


We also need the following tail \emph{lower} bound.

\begin{lemma}\label{lemma:rad-tail-lb}
    Let $a\in \mathbb{R}^d$ be such that there are at least $d/2$ coordinates $i\in [d]$ with $|a_i| \ge \frac{1}{5 \sqrt{d}} \|a\|_2$. Then, there is an absolute $c > 0$ such that for every $t\in [0, c\sqrt{d})$, it holds that
    $$
    \Pr_{x\sim \{\pm 1\}^{d}}[\langle a, x\rangle \ge t\|a\|_2] \ge \exp(-c t^2).
    $$
\end{lemma}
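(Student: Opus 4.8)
The plan is to prove the (equivalent, up to constants) bound $\Pr_{x\sim\{\pm1\}^d}[\langle a,x\rangle\ge t\|a\|_2]\ge c_1 e^{-c_2 t^2}$ for absolute constants $c_1,c_2$, via an exponential-tilt / change-of-measure argument. Normalize $\|a\|_2=1$. First I would discard the coordinates that are too large or too small: let $S=\{i:|a_i|\ge \tfrac1{5\sqrt d}\}$, so $|S|\ge d/2$ by hypothesis, and $B=\{i:|a_i|>\tfrac{10}{\sqrt d}\}$, so $|B|\le d/100$ since $\sum_i a_i^2=1$. Set $S'=S\setminus B$; then $|S'|>d/3$, every $i\in S'$ has $\tfrac1{5\sqrt d}\le|a_i|\le\tfrac{10}{\sqrt d}$, and $\sigma^2:=\sum_{i\in S'}a_i^2\in[\tfrac1{75},1]$. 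Writing $\langle a,x\rangle=Y+Z$ with $Y=\sum_{i\in S'}a_i x_i$ and $Z=\sum_{i\notin S'}a_i x_i$, the two are independent and $Z$ is symmetric about $0$, so $\Pr[\langle a,x\rangle\ge t]\ge\Pr[Y\ge t]\,\Pr[Z\ge0]\ge\tfrac12\Pr[Y\ge t]$. It then suffices to lower-bound $\Pr[Y\ge t]$, where $b:=a|_{S'}$ has all entries of order $1/\sqrt d$.

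Next I would tilt. For $\lambda\ge0$ let $Q_\lambda$ be the product law on $\{\pm1\}^{S'}$ with density proportional to $e^{\lambda\langle b,x\rangle}$ relative to uniform; then
$$
\frac{dP}{dQ_\lambda}(x)=Z_\lambda\,e^{-\lambda\langle b,x\rangle},\qquad Z_\lambda:=\prod_{i\in S'}\cosh(\lambda b_i)\ \ge\ 1 .
$$
Choose $\lambda^\ast$ so that $\E_{Q_{\lambda^\ast}}[\langle b,x\rangle]=\sum_{i\in S'}b_i\tanh(\lambda^\ast b_i)=t+2$: this is well defined because the left side increases continuously from $0$ to $\|b\|_1\ge\tfrac{\sqrt d}{15}$, and $t+2<\tfrac{\sqrt d}{15}$ throughout the stated range of $t$. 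Comparing $\lambda^\ast$ with $\sqrt d/10$ (using $\tanh(u)\ge u/2$ for $|u|\le1$ to see the target is already exceeded at $\lambda=\sqrt d/10$) shows $\lambda^\ast<\sqrt d/10$, hence $|\lambda^\ast b_i|\le1$ for all $i\in S'$ — the ``linear regime''. In this regime $\tanh(u)/u\in[\tfrac12,1]$ gives $\lambda^\ast\le 2(t+2)/\sigma^2\le150(t+2)$, and $\operatorname{sech}^2(u)\ge\operatorname{sech}^2(1)$ gives $\Var_{Q_{\lambda^\ast}}[\langle b,x\rangle]=\sum_{i\in S'}b_i^2\operatorname{sech}^2(\lambda^\ast b_i)\in[\operatorname{sech}^2(1)\sigma^2,\sigma^2]\subseteq[c_0,1]$ for an absolute $c_0>0$. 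By Chebyshev, $Q_{\lambda^\ast}[\langle b,x\rangle\in[t,t+4]]=Q_{\lambda^\ast}[\,|\langle b,x\rangle-(t+2)|\le2\,]\ge3/4$.

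Then I would assemble: on the event $\{\langle b,x\rangle\in[t,t+4]\}$ we have $\tfrac{dP}{dQ_{\lambda^\ast}}(x)\ge Z_{\lambda^\ast}e^{-\lambda^\ast(t+4)}\ge e^{-\lambda^\ast(t+4)}$, so
$$
\Pr[Y\ge t]\ \ge\ e^{-\lambda^\ast(t+4)}\cdot Q_{\lambda^\ast}\big[\langle b,x\rangle\in[t,t+4]\big]\ \ge\ \tfrac34\,e^{-\lambda^\ast(t+4)}\ \ge\ \tfrac34\,e^{-150(t+2)(t+4)} .
$$
Combined with Step 1 this yields $\Pr[\langle a,x\rangle\ge t\|a\|_2]\ge\tfrac38\,e^{-O(t^2+1)}$, which gives the stated $e^{-ct^2}$ once the constant is absorbed for $t$ bounded away from $0$ (for $t$ near $0$ the left-hand side is anyway $\Omega(1)$). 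Retaining the sharper estimate $\log Z_{\lambda^\ast}\ge\tfrac5{12}(\lambda^\ast)^2\sigma^2$ (from $\log\cosh u\ge\tfrac{u^2}2-\tfrac{u^4}{12}$) in place of $Z_{\lambda^\ast}\ge1$ recovers the asymptotically correct constant $e^{-(1+o(1))t^2/2}$, but that refinement is not needed here.

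The hard part will be the choice and control of the tilt parameter $\lambda^\ast$: it must place the window $[t,t+4]$ exactly at the $Q$-mean so the window carries $\Omega(1)$ probability, while keeping $|\lambda^\ast b_i|\le1$ so that $\log\cosh$ is quadratic and the tilted variance does not degenerate. This is precisely where the two hypotheses enter — a constant fraction of comparable-size coordinates (so $\sigma^2=\Omega(1)$ and $\Var_{Q_{\lambda^\ast}}$ is bounded below) and $t\ll\sqrt d$ (so $\lambda^\ast$ stays in the linear regime) — and the reduction to $S'$ in the first step is exactly what makes the bound $|\lambda^\ast b_i|\le1$ provable without circular reasoning. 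Everything else is routine bookkeeping of absolute constants.
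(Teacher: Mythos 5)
Your proposal is correct, but it takes a genuinely different route from the paper. The paper does not tilt at all: it invokes a known anti-concentration inequality for Rademacher sums (Lemma~\ref{lemma:rademacher-with-K12}, cited from the literature), which lower-bounds $\Pr[\langle a,x\rangle \ge c^{-1}K_{1,2}(a,t)]$ by $c^{-1}e^{-ct^2}$ in terms of the Peetre-type functional $K_{1,2}(a,t)=\inf\{\|a'\|_1+t\|a''\|_2 : a'+a''=a\}$, and then proves only the elementary geometric fact (Lemma~\ref{lemma:decompose-good-a}) that for vectors with $d/2$ coordinates of size $\ge\frac{1}{5\sqrt d}\|a\|_2$ and $t<c\sqrt d$ one has $K_{1,2}(a,t)=\Theta(t\|a\|_2)$. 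You instead give a self-contained change-of-measure proof: prune to the coordinates of comparable size $\Theta(1/\sqrt d)$ (using the same hypothesis the paper uses, and for the same reason -- it is what keeps the tilt in the linear regime and the tilted variance bounded below), peel off the remaining coordinates by symmetry, and run the standard exponential-tilt/Chebyshev argument. The paper's route is shorter but rests on an external black box; yours is longer but elementary and makes explicit where both hypotheses ($\Omega(d)$ comparable coordinates, $t\ll\sqrt d$) enter. All the quantitative steps in your sketch check out ($|B|\le d/100$, $\sigma^2\ge 1/75$, $\tanh(u)\ge u/2$ on $[0,1]$, $\lambda^\ast\le 150(t+2)$, tilted variance in $[\operatorname{sech}^2(1)\sigma^2,1]$, the likelihood-ratio bound on the window). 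The only caveat is the behavior at small $t$ and small $d$, where the literal inequality $\Pr\ge e^{-ct^2}$ cannot hold (at $t=0$ it would force probability $1$); but the lemma as stated in the paper has the identical defect, the intended form is $c_1e^{-c_2t^2}$ with the constant absorbed for $t$ bounded away from $0$, and you flag and handle exactly this point, so it is not a gap relative to the paper.
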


We include a proof of Lemma~\ref{lemma:rad-tail-lb} below. To start, for any vector $a\in \mathbb{R}^{d}$ and $t>0$, define
$$
K_{1,2}(a,t) = \inf \{ \| a' \|_1 + t \|a''\|_2 : a'+a'' = a\}.
$$
For small $t$, the quantity $K_{1,2}$ behaves like $t \|a\|_2$ while for large $t$ (in particular for $t > \sqrt{d}$), the quantity converges to $\|a\|_1$. The following lemma instantiates this intuition for all ``good'' vectors, which are of interest to us.

\begin{lemma}\label{lemma:decompose-good-a}
    There is a small constant $c > 0$ for which the following is true. Suppose $a\in \mathbb{R}^d$ is such that there are at least $d/2$ coordinates $i\in [d]$ with $|a_i| \ge \frac{1}{5 \sqrt{d}} \|a\|_2$. Then, for $t < c\sqrt{d}$, it holds that
    \[
    t\|a\|_2 \ge K_{1,2}(a, t) \ge c\cdot t\cdot \|a\|_2.
    \]
\end{lemma}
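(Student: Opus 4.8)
The plan is to prove the two inequalities separately. The upper bound $t\|a\|_2 \ge K_{1,2}(a,t)$ is immediate from the trivial decomposition $a' = 0$, $a'' = a$, whose cost is exactly $\|0\|_1 + t\|a\|_2 = t\|a\|_2$. So the whole content is the lower bound $K_{1,2}(a,t) \ge c\, t\, \|a\|_2$.

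For the lower bound I would normalize so that $\|a\|_2 = 1$, and write $G = \{i\in[d] : |a_i| \ge \frac{1}{5\sqrt d}\}$ for the set of ``large'' coordinates, so $|G| \ge d/2$ by hypothesis. Fix an arbitrary decomposition $a = a' + a''$; the goal is to show $\|a'\|_1 + t\|a''\|_2 \ge c\,t$ for a universal constant $c$, and then take the infimum over decompositions. Split $G$ according to how the mass at a good coordinate is apportioned: let $G_1 = \{i\in G : |a'_i| \ge \frac{1}{10\sqrt d}\}$ and $G_2 = G\setminus G_1$. For $i\in G_2$ the triangle inequality forces $|a''_i| \ge |a_i| - |a'_i| \ge \frac{1}{10\sqrt d}$.

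Now I would run the dichotomy $|G_1| \ge |G|/2$ or $|G_2| \ge |G|/2$. In the first case, $\|a'\|_1 \ge \sum_{i\in G_1}|a'_i| \ge \frac{d}{4}\cdot\frac{1}{10\sqrt d} = \frac{\sqrt d}{40}$, and since we assume $t < c\sqrt d$ this is $\Omega(t)$. In the second case, $\|a''\|_2^2 \ge \sum_{i\in G_2}|a''_i|^2 \ge \frac{d}{4}\cdot\frac{1}{100 d} = \frac{1}{400}$, hence $t\|a''\|_2 \ge t/20 = \Omega(t)$. Either way the cost of the decomposition is $\Omega(t) = \Omega(t\|a\|_2)$; choosing $c$ small enough (e.g.\ $c = 1/20$, with room to spare) makes all the constants consistent, and taking the infimum over $(a',a'')$ gives $K_{1,2}(a,t) \ge c\,t\,\|a\|_2$.

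I do not expect a genuine obstacle here; it is a short two-case argument. The one point worth flagging is why the hypothesis $t < c\sqrt d$ is essential: once $t \gtrsim \sqrt d$, the functional $K_{1,2}(a,\cdot)$ saturates near $\|a\|_1$, and a flat good vector has $\|a\|_1 \asymp \sqrt d \ll t$, so the claimed bound $c\,t\,\|a\|_2$ would be false. The ``goodness'' hypothesis is used precisely in the dichotomy: it supplies $\Omega(d)$ coordinates of magnitude $\asymp \|a\|_2/\sqrt d$, so that wherever a decomposition tries to absorb them it must pay either an $\ell_1$ price of order $\sqrt d$ or an $\ell_2$ price of order $1$.
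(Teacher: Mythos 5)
Your proposal is correct and follows essentially the same argument as the paper: both prove the lower bound by a dichotomy on the $\Omega(d)$ good coordinates, showing that any decomposition must either pay an $\ell_1$ cost of order $\sqrt{d}\,\|a\|_2$ (dominating $ct\|a\|_2$ since $t < c\sqrt{d}$) or leave $\|a''\|_2 = \Omega(\|a\|_2)$. Your write-up is in fact slightly cleaner in that it quantifies over arbitrary decompositions rather than assuming the infimum is attained.
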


\begin{proof}
    The upper bound on $K_{1,2}$ is obvious. We establish the lower bound here. Suppose $K_{1,2}(a,t) = \|a'\|_1 + t \|a''\|_2$. Among all the coordinates $i\in[d]$ such that $|a_i|\ge \frac{1}{5\sqrt{d}}\|a\|_2$, if there are half of them such that $a'_i > |a_i|/2$, then it is clear that $\|a'\|_1 \ge \frac{1}{10\sqrt{d}} \cdot \frac{d}{4}\cdot \|a\|_2 > c t \|a\|_2$ (recall that $t < c\sqrt{d}$).

    Otherwise, there are at least half of $i$'s with $|a''_i|\ge |a_i|/2$. Then, we have
    $$
    \|a''\|_2^2 \ge \frac{d}{4} \cdot \left( \frac{1}{10\sqrt{d}} \right)^2 \cdot \|a\|_2^2 > c^2 \|a\|_2^2.
    $$
    Consequently, we have $t \|a''\|_2 \ge c t \|a\|_2$ as desired.
\end{proof}

We also need the following anti-concentration inequality from \cite{dist-of-rademacher}.

\begin{lemma}[\cite{dist-of-rademacher}]\label{lemma:rademacher-with-K12}
There is a constant $c > 0$ such that for every $a\in \mathbb{R}^d$, we have
$$
\Pr_{x\sim \{\pm 1\}^d}[\langle x, a\rangle \ge c^{-1} K_{1,2}(a,t) ] \ge c^{-1} e^{-ct^2}.
$$
\end{lemma}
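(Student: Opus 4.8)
This is the lower-tail half of Montgomery--Smith's description of the distribution of Rademacher sums, and the plan is to reduce a general vector $a$ to a ``flat'' one via the extremal decomposition underlying $K_{1,2}$, and then run a Cram\'er-type large-deviation lower bound on the flat part. First I would identify that decomposition. The constraint set $\{\|u\|_\infty\le 1,\ \|u\|_2\le t\}$ is the unit ball of the norm dual to $K_{1,2}(\cdot,t)$, so $K_{1,2}(a,t)=\max\{\langle a,u\rangle:\|u\|_\infty\le 1,\ \|u\|_2\le t\}$, and a Lagrange/KKT computation identifies the maximizer as the clipped vector $u_i=\operatorname{sign}(a_i)\min(1,\mu|a_i|)$, where $\mu\ge 0$ is chosen so that $\|u\|_2=t$ (or $u=\operatorname{sign}(a)$ when $|\operatorname{supp}(a)|\le t^2$, so the $\ell_2$ constraint is slack). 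Writing $C=\{i:\mu|a_i|\ge 1\}$ for the clipped coordinates and $L=[d]\setminus C$, this gives
\[
K_{1,2}(a,t)=\|a_C\|_1+\mu\|a_L\|_2^2,\qquad |C|+\mu^2\|a_L\|_2^2=t^2,
\]
so in particular $|C|\le t^2$ and $\|a_L\|_\infty< 1/\mu$.

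Next I would split $\langle x,a\rangle=\langle x_C,a_C\rangle+\langle x_L,a_L\rangle$ and condition on the event $E$ that $x_i=\operatorname{sign}(a_i)$ for every $i\in C$; then $\Pr[E]=2^{-|C|}\ge 2^{-t^2}$ and on $E$ we have $\langle x_C,a_C\rangle=\|a_C\|_1$. It remains to control $\langle x_L,a_L\rangle$, which is independent of $E$, and I would branch on which of the two terms of $K_{1,2}(a,t)$ dominates. If $\|a_C\|_1\ge \mu\|a_L\|_2^2$, then $K_{1,2}(a,t)\le 2\|a_C\|_1$; since $\langle x_L,a_L\rangle\ge 0$ with probability at least $\tfrac12$ by symmetry, on the intersection of $E$ with that event $\langle x,a\rangle\ge \|a_C\|_1\ge \tfrac12 K_{1,2}(a,t)$, an event of probability at least $\tfrac12 2^{-t^2}$. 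If instead $\mu\|a_L\|_2^2>\|a_C\|_1$, then $K_{1,2}(a,t)\le 2\mu\|a_L\|_2^2$, and I would apply the flat-vector anti-concentration sub-lemma below to $a_L$ at scale $s:=\mu\|a_L\|_2$; this is legitimate because $\|a_L\|_\infty<1/\mu=\|a_L\|_2/s$ is precisely the required flatness and $s^2=\mu^2\|a_L\|_2^2=t^2-|C|\le t^2$ (one also checks $s\ge 1$ here, since otherwise $|C|\ge t^2-1$ forces $\|a_C\|_1\ge |C|/\mu\ge 1/\mu>\mu\|a_L\|_2^2$, contradicting the branch). The sub-lemma then gives $\langle x_L,a_L\rangle\ge c_0 s\|a_L\|_2=c_0\mu\|a_L\|_2^2$ with probability at least $c_0 e^{-s^2/c_0}\ge c_0 e^{-t^2/c_0}$, so on the intersection with $E$ we get $\langle x,a\rangle\ge c_0\mu\|a_L\|_2^2\ge \tfrac{c_0}{2}K_{1,2}(a,t)$ with probability at least $c_0 2^{-t^2}e^{-t^2/c_0}$. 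Collecting the constants from the two branches yields the lemma for a single absolute $c$; the ranges $t=O(1)$ and $t\gtrsim\sqrt{d}$ are handled by the trivial estimate $\Pr[E]\ge 2^{-|C|}$ together with $|C|\le\min(t^2,d)$, and the argument only uses $t\lesssim\sqrt d$ so that clipping $|C|\le t^2$ coordinates is feasible, matching the regime in which the lemma is invoked.

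The crux is the flat-vector sub-lemma: for $s\ge 1$ and $b\in\mathbb{R}^n$ with $\|b\|_\infty\le\|b\|_2/s$, one has $\Pr_y[\langle y,b\rangle\ge c_0 s\|b\|_2]\ge c_0 e^{-s^2/c_0}$. I would prove this by exponential tilting. Normalize $\|b\|_2=1$ and set $\lambda=s$, so $\lambda\|b\|_\infty\le 1$; then $\mathbb{E}[e^{\lambda\langle y,b\rangle}]=\prod_i\cosh(\lambda b_i)$ lies in $[e^{c_1 s^2},e^{s^2/2}]$ using $e^{c_1 u^2}\le\cosh u\le e^{u^2/2}$ for $|u|\le 1$. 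Under the tilted law $d\mathbb{Q}\propto e^{\lambda\langle y,b\rangle}\,d\mathbb{P}$, differentiating $\log\mathbb{E}[e^{\lambda\langle y,b\rangle}]$ once and twice gives $\mathbb{E}_{\mathbb{Q}}[\langle y,b\rangle]=\sum_i b_i\tanh(\lambda b_i)\gtrsim s$ (and $\le s$) and $\operatorname{Var}_{\mathbb{Q}}[\langle y,b\rangle]=\sum_i b_i^2\operatorname{sech}^2(\lambda b_i)\le 1$, so Chebyshev confines $\langle y,b\rangle$ to a window of width $O(1)$ around $\Theta(s)$ with $\mathbb{Q}$-probability at least $\tfrac12$ once $s$ exceeds an absolute constant; transferring that event back to $\mathbb{P}$ costs the factor $\mathbb{E}[e^{\lambda\langle y,b\rangle}]^{-1}e^{O(\lambda s)}\ge e^{-\Theta(s^2)}$, which yields the bound. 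The remaining range $s=O(1)$ follows directly from the Paley--Zygmund inequality, using $\mathbb{E}[\langle y,b\rangle^4]\le 3\|b\|_2^4$ and symmetry. I expect this tilting-plus-Chebyshev step to be the main obstacle, since it is the only place where the precise $e^{-\Theta(t^2)}$ rate and the flatness hypothesis are genuinely used; everything else is bookkeeping around the extremal decomposition.
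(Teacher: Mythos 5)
The paper does not prove this lemma at all: it is imported verbatim from Montgomery--Smith's work on the distribution of Rademacher sums (\cite{dist-of-rademacher}) and used as a black box in the proof of Lemma~\ref{lemma:rad-tail-lb}. So there is no internal proof to compare against; what you have written is a self-contained derivation of the cited result. Your route --- dualizing $K_{1,2}(a,t)$ as $\sup\{\langle a,u\rangle:\|u\|_\infty\le 1,\ \|u\|_2\le t\}$, identifying the clipped maximizer (equivalently a Holmstedt-type splitting into the at most $t^2$ clipped coordinates and a flat remainder), forcing the signs on the clipped block at cost $2^{-|C|}\ge 2^{-t^2}$, and handling the flat block by exponential tilting plus Chebyshev, with Paley--Zygmund covering the bounded-$s$ range --- is sound and is essentially the standard way to prove Montgomery--Smith's lower bound; it correctly isolates where the flatness hypothesis and the $e^{-\Theta(t^2)}$ rate are actually used, and it covers the full range of $t$ (including $t\gtrsim\sqrt d$), not just the regime $t<c\sqrt d$ in which the paper invokes the lemma.

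Two small slips, neither fatal to the sketch: (i) in the change of measure you state the transfer factor as $\mathbb{E}[e^{\lambda\langle y,b\rangle}]^{-1}e^{O(\lambda s)}$, but the correct factor is $\min\frac{d\mathbb{P}}{d\mathbb{Q}}$ over the Chebyshev window, namely $\mathbb{E}[e^{\lambda\langle y,b\rangle}]\cdot e^{-\lambda\max\langle y,b\rangle}$, i.e.\ the reciprocal of what you wrote; the claimed magnitude $e^{-\Theta(s^2)}$ is nonetheless right because both factors are $e^{\Theta(s^2)}$ (and $\mathbb{E}[e^{\lambda\langle y,b\rangle}]\ge 1$ already suffices), but the displayed expression should be fixed. (ii) Your parenthetical argument that $s\ge 1$ in the second branch silently assumes $|C|\ge 1$; when $|C|=0$ and $t<1$ it gives nothing, though that case is anyway covered by the Paley--Zygmund endpoint of your sub-lemma, so you should route it there explicitly. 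With those repairs the argument is complete and would make the paper self-contained where it currently relies on the external citation.
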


Now, it is clear that Lemma~\ref{lemma:rad-tail-lb} is a direct consequence of Lemma~\ref{lemma:rademacher-with-K12} combined with Lemma~\ref{lemma:decompose-good-a} (The constant ``$c$'' appearing in the three lemma statements are not necessarily the same).

\subsection{Lower Bounding the Score}

\paragraph*{Expanding vector family.} We state the following ``expanding'' property on a collection of vectors.
\begin{definition}\label{def:expanding-vectors}
    Let $d\in \mathbb{N}$. A collection of $d$-dimensional vectors $S = \{s^1,\dots, s^N\}$ is $(r,\eta)$-expanding, if the following is true:
    with probability $1-\frac{1}{d^{5}}$ over $\theta\sim \partial B_2(0,r)$, the exponential tilt $D_{\theta}$ of $S$ satisfies that
    $$
    \E_{v\sim D_{\theta}(S)}[\langle v, \theta\rangle ] \ge \eta.
    $$
    A matrix $A\in \{\pm 1\}^{d\times N}$ is $(r,\eta)$-\emph{expanding}, if the ensemble of column vectors of $A$ is $(r,\eta)$-expanding.
\end{definition}

Looking ahead, an expanding family of vectors gives a query matrix $A$ for which we can prove the score lower bound (c.f.~Section~\ref{sec:fingerprinting-overview}). The following proposition states that a random query matrix is expanding with desired parameters with high probability.

\begin{proposition}\label{prop:random-is-expanding}
     For all sufficiently small $c > 0$, the following is true. Suppose $N\gg d$ and $\log(N) \le o(d)$. Let $A\sim \{\pm 1\}^{d\times N}$ be chosen uniformly at random. With probability $1-o(1)$, the matrix $A$ is $(c\sqrt{\log N}, \Omega_c(\log(N)))$-expanding.
\end{proposition}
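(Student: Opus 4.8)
The plan is to reduce, via Fubini's theorem and Markov's inequality, to analyzing a single ``generic'' direction $\theta$ against a random $A$; then to bound the expansion quantity below by the log-partition function of the exponential tilt; and finally to lower-bound that partition function using the Rademacher anti-concentration bound of Lemma~\ref{lemma:rad-tail-lb} together with a Chernoff estimate on the number of ``heavy'' columns.

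\emph{Step 1: reducing the quantifiers.} Write $r = c\sqrt{\log N}$ and let $\eta = \eta(c)\log N$ be the target value, to be pinned down below. Put $\Phi(A) := \Pr_{\theta\sim\partial B_2(0,r)}[\E_{v\sim D_\theta(A)}[\langle v,\theta\rangle] < \eta]$. By Markov, $\Pr_A[\Phi(A) > d^{-5}] \le d^5\,\E_A[\Phi(A)]$, and by Fubini $\E_A[\Phi(A)] = \E_\theta[\Pr_A[\E_{v\sim D_\theta(A)}[\langle v,\theta\rangle] < \eta]]$, so it will suffice to make this last quantity $o(d^{-5})$. I would call $\theta$ \emph{good} if at least $d/2$ of its coordinates satisfy $|\theta_i| \ge \frac{1}{5\sqrt d}\|\theta\|_2$; writing $\theta = r\,g/\|g\|$ with $g\sim N(0,I_d)$ and bounding $\|g\|$ and the number of small coordinates of $g$ by Chernoff, one gets $\Pr_\theta[\theta\text{ not good}] \le e^{-\Omega(d)}$. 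It then remains to show, for every fixed good $\theta$, that $\Pr_A[\E_{v\sim D_\theta(A)}[\langle v,\theta\rangle] < \eta] \le e^{-N^{\Omega(1)}}$.

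\emph{Step 2: the partition function, and its lower bound.} Let $Z(\theta) = \sum_{j=1}^N e^{\langle\theta,s^j\rangle}$ where $s^1,\dots,s^N$ are the columns of $A$. Nonnegativity of the KL divergence from $D_\theta$ (the law on $[N]$ with $\Pr[j]\propto e^{\langle\theta,s^j\rangle}$) to the uniform law on $[N]$ gives $\E_{v\sim D_\theta}[\langle v,\theta\rangle] \ge \log Z(\theta) - \log N$, so it is enough to show $Z(\theta) \ge N^{1+\gamma}$ for an absolute $\gamma = \gamma(c) > 0$ and then take $\eta = \frac{\gamma}{2}\log N$. Now fix a good $\theta$; over the random columns the $\langle\theta,s^j\rangle$ are i.i.d.\ Rademacher sums with coefficient vector $\theta$, $\|\theta\|_2 = r$. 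Let $c_0$ be the constant of Lemma~\ref{lemma:rad-tail-lb}, set $\lambda = \frac{c}{2c_0}$ and threshold $T = \lambda\sqrt{\log N}\cdot r$. Because $\log N = o(d)$, for $d$ large we have $T/r = \lambda\sqrt{\log N} \in [0, c_0\sqrt d)$, so Lemma~\ref{lemma:rad-tail-lb} applies and yields $p := \Pr_s[\langle\theta,s\rangle \ge T] \ge e^{-c_0\lambda^2\log N} = N^{-c_0\lambda^2}$. The count $Y$ of columns with $\langle\theta,s^j\rangle \ge T$ is $\mathrm{Bin}(N,p)$ with $Np \ge N^{1-c_0\lambda^2} \ge N^{1/2}$ for $c$ small, so a Chernoff bound gives $\Pr_A[Y < \tfrac12 Np] \le e^{-Np/8} \le e^{-N^{1/2}/8}$, and off this event
\[
Z(\theta) \ge Y\cdot e^{T} \ge \tfrac12\,Np\cdot e^{T} \ge \tfrac12\,N^{\,1 - c_0\lambda^2 + \lambda c} = \tfrac12\,N^{\,1 + c^2/(4c_0)},
\]
using $e^{T} = N^{\lambda c}$ and the identity $\lambda c - c_0\lambda^2 = c^2/(4c_0) =: \gamma > 0$ forced by $\lambda = c/(2c_0)$. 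Hence, off an event of probability $\le e^{-N^{1/2}/8}$, $\E_{v\sim D_\theta}[\langle v,\theta\rangle] \ge \gamma\log N - \log 2 \ge \frac{\gamma}{2}\log N = \eta$, which is the bound needed in Step 1. Assembling the estimates, $\E_\theta[\Pr_A[\cdots < \eta]] \le e^{-\Omega(d)} + e^{-N^{1/2}/8} = o(d^{-5})$ (using $N\gg d$), so $\Pr_A[\Phi(A) > d^{-5}] = o(1)$ and $A$ is $(c\sqrt{\log N},\, \frac{c^2}{8c_0}\log N)$-expanding with probability $1-o(1)$, giving $\Omega_c(\log N) = \frac{c^2}{8c_0}\log N$.

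\emph{Where the difficulty lies.} The delicate point will be Step 2's lower bound on $Z(\theta)$. One cannot route the bound through a single heavy column: when $c$ is small the largest inner product a column can plausibly achieve is only $\Theta_c(\log N)$, and $\Theta_c(\log N) - \log N$ need not be positive. Instead one has to aggregate the $\approx N^{1-O(c^2)}$ columns sitting at the optimal moderate level $T$, each worth a factor $N^{O(c^2)}$, and the choice $\lambda = c/(2c_0)$ is precisely what balances these two exponents into a genuine surplus $N^{c^2/(4c_0)}$. The only role of the hypothesis $\log N = o(d)$ is to keep $T/r$ inside the validity window $[0, c_0\sqrt d)$ of Lemma~\ref{lemma:rad-tail-lb}; the hypothesis $N\gg d$ merely guarantees $e^{-N^{1/2}/8} = o(d^{-5})$.
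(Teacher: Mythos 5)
Your proof is correct and follows essentially the same route as the paper: condition on $\theta$ having many coordinates of size $\Omega(\|\theta\|_2/\sqrt d)$, use Lemma~\ref{lemma:rad-tail-lb} plus concentration to find $\approx N^{1-O(c^2)}$ columns with inner product $\approx \lambda c\log N$, and then transfer the joint $(\theta,A)$ bound to most $A$ by Markov. The only difference is cosmetic: where the paper converts "a $N^{-c_{prob}}$ fraction of columns exceed the threshold" into a tilted-mean bound via its Lemma~\ref{lemma:exponentiation-to-score}, you use the equivalent inequality $\E_{v\sim D_\theta}[\langle v,\theta\rangle]\ge \log Z(\theta)-\log N$ (nonnegativity of KL to uniform) and lower-bound the partition function directly, with the same $\lambda c - c_0\lambda^2$ exponent balancing.
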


\begin{proof}
Let $r = c\sqrt{\log N}$. We claim that
\begin{align}
\Pr_{\substack{\theta\sim \partial B_2(0,r) \\ A\sim \{\pm 1\}^{d\times N}}} \left[ \E_{v\sim D_{\theta}(A)}[\langle v, \theta\rangle ] \ge \eta \right] \le \frac{1}{d^{10}}. \label{equ:random-theta-A-for-expanding}
\end{align}

To justify \eqref{equ:random-theta-A-for-expanding}, we first sample $\theta$ and condition on the event that half of $\theta_i$ satisfies that $|\theta_i| \ge \frac{1}{5\sqrt{d}} \|\theta\|_2$, which happens with probability $1-\exp(-\Omega(d))$. Then, by Lemma~\ref{lemma:rad-tail-lb}, for any small $c_{dev} > 0$, there is a suitable $c_{prob} = O({c_{dev}}^2)$ such that
$$
\Pr_{s\sim \{\pm 1\}^d} \left[ \langle s, \theta \rangle > \frac{c_{dev} \sqrt{\log N}}{\sqrt{d}} \|\theta\|_1 \right] \approx \frac{1}{N^{c_{prob}}}.
$$

It follows that with probability $1-\exp(-N^{\Omega(1)})$ over a random matrix $A\sim \{\pm 1\}^{d\times N}$, there are $\frac{1}{2}\cdot \frac{N}{N^{c_{prob}}} = \frac{1}{2} N^{1-{c_{prob}}}$ column vectors $A_{*,j}$ with $\langle \theta, A_{*,j}\rangle > \frac{c_{dev} \sqrt{\log N}}{\sqrt{d}} \|\theta\|_1 \ge \frac{c_{dev} \log N}{4}$. For such matrices $A$, if we were to sample a column uniformly at random, with probability $\frac{1}{2 N^{c_{prob}}}$ we would get a column $A_{*,j}$ such that $\langle A_{*,j}, \theta\rangle \ge \frac{c_{dev} \log N}{4}$. Next, we make use of the following lemma.

\begin{lemma}\label{lemma:exponentiation-to-score}
Let $X$ be a random variable with density $p:\mathbb{R}\to \mathbb{R}_{\ge 0}$. Suppose that $\Pr[X \ge \eta] \ge \delta$. Define a random variable $Y$ with density proportional to $q(y)\coloneqq p(y)\cdot \exp(y)$. Then, we have
$$
\E[Y]\ge \eta - 2\log(1/\delta).
$$
\end{lemma}
We defer the proof of Lemma~\ref{lemma:exponentiation-to-score} to the end of the subsection. Assuming its truth and applying it to the random variable $\langle A_{*,j}, \theta\rangle$ (where the randomness is over $j\sim [N]$), we get that for the exponential tilt $D_{\theta}(A)$, it holds that
$$
\E_{v\sim D_{\theta}}[  \langle v, \theta \rangle ] \ge \frac{c_{dev}}{4} \log N - 2c_{prob}\log (N) \ge \Omega(\log N).
$$
Here, we choose $c_{dev}\in (0, 1)$ to be sufficiently small so that we have $c_{prob} = \Theta({c_{dev}}^2) \ll c_{dev}$.

To wrap up the proof of Proposition~\ref{prop:random-is-expanding}, note that we have shown \eqref{equ:random-theta-A-for-expanding} (even with a stronger probability bound of $\exp(-\Omega(d)) + \exp(-N^{\Omega(1)})$ on the right hand side). Next, by Markov's inequality, this means that with probability $1-o(1)$ over the sampling of $A$, we have
$$
\Pr_{\theta} \left[\E_{v\sim D_{\theta}(A)}[\langle v, \theta\rangle ] \le \Omega(\log N)
 \right] < \frac{1}{d^{5}}.
$$
(If not, there would be at least $\Omega(1)\cdot \frac{1}{d^5}$ fraction of $\theta$-$A$ pairs that are ``bad'', which would contradict to \eqref{equ:random-theta-A-for-expanding}). This completes the proof of the proposition.
\end{proof}

We fill in the last gap of the proof by proving Lemma~\ref{lemma:exponentiation-to-score} below.

\begin{proof}[Proof of Lemma~\ref{lemma:exponentiation-to-score}]
Without loss of generality, we consider the case that $X\le \eta$ with probability one. Namely, it suffices to prove the lemma for the random variable $\min(X, \eta)$. We also observe that $Y$ has an exponentially decaying tail bound. Namely for every $t < \eta$, it holds that
$$
\Pr[Y \le t] \le \frac{\Pr[X\le t]}{\Pr[X \ge \eta]} \cdot \frac{\exp(t)}{\exp(\eta)} \le \frac{\exp(t - \eta)}{\delta}.
$$
Given the observation, we simply use integration by parts on $Y$ to obtain
$$
\begin{aligned}
\E[Y]
&= \eta - \int_{-\infty}^\eta \Pr[Y \le t] dt  \\
&\ge \eta - 1.9\log(1/\delta) - \int_{-\infty}^{\eta - 1.9\log(1/\delta)} \Pr[Y \le t] dt \\
&\ge \eta - 1.9\log(1/\delta) - \int_{-\infty}^{\eta-1.9\log(1/\delta)} \frac{\exp(t-\eta)}{\delta} dt \\
&\ge \eta - 1.9 \log(1/\delta) - \frac{\exp(-1.9\log(1/\delta))}{\delta} \\
&\ge \eta - 2 \log(1/\delta).
\end{aligned}
$$
This completes the proof.
\end{proof}

\subsection{Upper Bounding the Score}

\paragraph*{Regular vector family.} In this subsection, we define and study the following ``regularity'' condition on a collection of Boolean vectors.

\begin{definition}\label{def:regular-vectors}
    Let $d\in \mathbb{N}$. A collection of $d$-dimensional Boolean vectors $S = \{s^1,\dots, s^N\}$ is $r$-regular, if the following is true:
    with probability $1-\frac{1}{d^5}$ over $\theta \sim B_2(0,r)$, the exponential tilt of $S$ satisfies that
    $$
    \lambda_{max}(\COV(D_{\theta}(S))) < O\left( 1 \right).
    $$
    A matrix $A\in \{\pm 1\}^{d\times N}$ is $r$-regular, if the ensemble of column vectors of $A$ is $r$-regular.
\end{definition}

Looking ahead, a regular family of vectors gives a query matrix $A$ for which we can prove the score upper bound via our framework (c.f.~Section~\ref{sec:fingerprinting-overview}). We prove the following proposition, stating that a random query matrix is regular with desired parameters.

\begin{proposition}\label{prop:random-is-regular}
For all sufficiently small $c > 0$, the following is true. Suppose $N\gg d$ and $\log(N) \le o(d)$. Let $A\sim \{\pm 1\}^{d\times N}$ be chosen uniformly at random. With probability $1-o(1)$, the matrix $A$ is $(c\sqrt{\log N})$-regular.
\end{proposition}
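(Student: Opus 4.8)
The plan is to bound $\lambda_{\max}(\COV(D_{\theta}(A)))$ by passing to a scalar-normalized sum of i.i.d.\ rank-one PSD matrices and applying matrix concentration, which avoids any net over directions. By Fubini's theorem and Markov's inequality it suffices to show $\Pr_{A,\,\theta\sim\partial B_2(0,r)}\!\left[\lambda_{\max}(\COV(D_\theta(A)))>K\right]=o(d^{-5})$ for an absolute constant $K$ and $r=c\sqrt{\log N}$; this gives that $A$ is $r$-regular with probability $1-o(1)$. I would first restrict to a typical $\theta$: with probability $1-e^{-\Omega(d)}$ one has $\|\theta\|_\infty=O\!\left(r\sqrt{\log d/d}\right)=o(1)$ and $F(\theta):=\sum_i\log\cosh\theta_i=\tfrac{r^2}{2}(1+o(1))$ (using $\sum_i\theta_i^2=r^2$ and $\log\cosh x=\tfrac{x^2}{2}+O(x^4)$ together with $\sum_i\theta_i^4\le\|\theta\|_\infty^2 r^2=o(r^2)$), and at least $d/2$ coordinates with $|\theta_i|\ge r/(5\sqrt d)$, so the Rademacher tail estimates of Lemmas~\ref{lemma:rad-tail-ub}--\ref{lemma:rademacher-with-K12} are available.

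The structural heart of the argument is that the covariance about the true mean is dominated by the second moment about \emph{any} fixed center; choosing that center to be the population tilted mean $\mu:=(\tanh\theta_i)_i$ yields
$$
\COV(D_\theta(A))\ \preceq\ M_\theta:=\frac{1}{Z}\sum_{j=1}^{N}e^{s_j}(A_{*,j}-\mu)(A_{*,j}-\mu)^{\top},\qquad s_j:=\langle A_{*,j},\theta\rangle,\quad Z:=\sum_{j=1}^{N}e^{s_j}.
$$
Unlike $\COV(D_\theta(A))$ itself, $M_\theta$ is a scalar-normalized sum of i.i.d.\ rank-one PSD matrices, with no subtracted rank-one correction. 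The exponential-tilt identity $\E_{X\sim\{\pm1\}^d}\!\left[e^{\langle X,\theta\rangle}g(X)\right]=e^{F(\theta)}\,\E_{X\sim\mathrm{tilt}}[g(X)]$, where $\mathrm{tilt}$ is the product distribution on $\{\pm1\}^d$ with mean $\mu$, gives $\E_A\!\left[\sum_j e^{s_j}(A_{*,j}-\mu)(A_{*,j}-\mu)^{\top}\right]=Ne^{F(\theta)}\,\mathrm{diag}(\mathrm{sech}^2\theta_i)\preceq Ne^{F(\theta)}I$; and, since the summands defining $Z$ are nonnegative, a scalar Chernoff bound applied to the truncated sum $\sum_j e^{s_j}\mathbf 1[s_j\le L]$ shows $Z\ge\tfrac12 Ne^{F(\theta)}$ except with probability $e^{-N^{\Omega(1)}}$. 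Thus it remains to prove the matrix upper tail $\bigl\|\sum_j e^{s_j}(A_{*,j}-\mu)(A_{*,j}-\mu)^{\top}\bigr\|\le\tfrac32 Ne^{F(\theta)}$ with probability $1-o(d^{-5})$; combining this with the lower bound on $Z$ gives $\lambda_{\max}(\COV(D_\theta(A)))\le\|M_\theta\|\le 3=O(1)$.

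For the matrix upper tail I would apply the matrix Chernoff inequality for sums of i.i.d.\ PSD matrices, after truncating away the few heavy columns. The per-term bound $\|e^{s_j}(A_{*,j}-\mu)(A_{*,j}-\mu)^{\top}\|\le 4d\,e^{s_j}$ is far too large on its own because the weights $e^{s_j}$ are (sub-)lognormal, so I would split $[N]$ into a bulk part $\{s_j\le L\}$ and a heavy part $\{s_j>L\}$ with $L\approx r^2+Cr\sqrt{\log d}$. On the heavy part $\|A_{*,j}-\mu\|_2^2\le 4d$ holds deterministically, so the contribution is at most $4d\sum_{s_j>L}e^{s_j}$, and a scalar Chernoff bound — the relevant mean being $Ne^{F(\theta)}\Pr_{\mathrm{tilt}}[S>L]=o\!\left(Ne^{F(\theta)}/d\right)$ by the choice of $L$ — controls this \emph{uniformly over directions}. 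On the bulk part, matrix Chernoff with $R\approx d\,e^{L}$ and $\mu_{\max}\approx Ne^{F(\theta)}$ gives failure probability at most $d\cdot\rho^{\,\mu_{\max}/R}$ for an absolute $\rho<1$, and $\mu_{\max}/R\gtrsim \tfrac{N}{d}e^{F(\theta)-L}=\tfrac{N}{d}N^{-O(c)}$, which is $\omega(\log d)$ once $N$ is sufficiently larger than $d$ and $c$ is small, so the failure probability is $o(d^{-5})$. I expect this last step to be the main obstacle: the truncation level $L$ must be chosen so that simultaneously the heavy tail carries a negligible fraction of $Z$ (pushing $L$ up) and $e^{L-F(\theta)}$ stays $N^{o(1)}$ relative to the ambient dimension $d$ (pushing $L$ down), and it is exactly closing this window that consumes the hypotheses $N\gg d$, $\log N=o(d)$, and $c$ sufficiently small. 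Intersecting the $O(1)$ failure events (typicality of $\theta$, the lower bound on $Z$, the heavy and bulk estimates) and invoking the Fubini--Markov reduction then completes the proof.
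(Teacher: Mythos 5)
Your proposal is sound and reaches the right conclusion, but by a genuinely different route from the paper. The paper never forms a matrix inequality: it prunes the cube to $B' = \{x : \langle x,\theta\rangle \le 3\sqrt{\log N}\,\|\theta\|_2\}$ so that all the relevant statistics (the partition function $Z$, the mean vector $M$, and each entry $C_{u,v}$ of the second-moment matrix) become bounded by $N^{1/5}$, applies scalar Hoeffding to each of these $O(d^2)$ quantities to show the empirical tilted covariance $\COV(D_\theta(A))$ is \emph{entrywise} within $O(N^{-1/4})$ of $\COV(D_\theta(B'))$ (which is essentially diagonal with entries at most $1$), and converts to a spectral bound by paying a factor of $d$ via the all-ones matrix $J$. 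You instead dominate $\COV(D_\theta(A))$ by the second moment $M_\theta$ about the fixed center $\mu=(\tanh\theta_i)_i$, turning the problem into an operator-norm bound on a normalized sum of i.i.d.\ rank-one PSD matrices, and handle it by matrix Chernoff after splitting off the heavy columns. What your route buys is a $\log d$ overhead in the spectral conversion instead of the paper's factor of $d$, and it isolates cleanly why the truncation level $L$ is constrained from both sides; what it costs is having to control the (sub-)lognormal weights $e^{s_j}$ by hand, which the paper sidesteps entirely by conditioning all columns into $B'$. Both arguments ultimately consume the same hypotheses ($N$ polynomially larger than $d$, $c$ small), so neither is strictly stronger.

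Two small points to tighten. First, on the heavy part $\{s_j>L\}$ the summands $e^{s_j}\|A_{*,j}-\mu\|_2^2$ are nonnegative but unbounded, so a Chernoff upper-tail bound is not directly available; what you actually need (and what your first-moment computation $N e^{F(\theta)}\Pr_{\mathrm{tilt}}[S>L]\le N e^{F(\theta)} d^{-\Omega(C^2)}$ delivers) is just Markov's inequality at threshold $N e^{F(\theta)}/(8d)$, giving failure probability $O(d^{1-\Omega(C^2)})=o(d^{-5})$ for $C$ large. Second, Definition~\ref{def:regular-vectors} samples $\theta$ from the ball $B_2(0,r)$ rather than the sphere $\partial B_2(0,r)$; your argument is monotone in $\|\theta\|_2$ in the relevant places (it only uses $F(\theta)\ge 0$ and $L-F(\theta)=O(c\log N)$), so this is a notational slip rather than a gap, but it should be stated for the ball.
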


\begin{proof}
Let $r = c\sqrt{\log N}$. We claim that
\begin{align}
\Pr_{\substack{\theta\sim B_2(0,r) \\ A\sim \{\pm 1\}^{d\times N}}} \left[ \lambda_{max}(\COV(D_\theta(A))) > 2 \right] \le \frac{1}{d^{10}}. \label{equ:random-theta-A-for-regular}
\end{align}
To prove \eqref{equ:random-theta-A-for-regular}, we first sample and condition on $\theta$. We advise readers to keep in mind that $\|\theta\|_2\le c\sqrt{\log N}$ where $c$ is sufficiently small.

Let $B = \{\pm 1\}^d$ be the Boolean cube. It is easy to see that $\COV(D_{\theta}(B))$ is a diagonal matrix with $\lambda_{max}(\COV(D_\theta(B))) \le 1$. Our proof strategy is to show that $\COV(D_{\theta}(A))$ is extremely close to $\COV(D_{\theta}(B))$. This happens because $A$ can be understood as sampling $N$ random vectors from $B$ with replacement. Since $N$ is extremely large, concentration inequalities apply, and we reach the desired conclusion.

We give the formal details here. First, define
$$
B' = B \setminus \{ x\in \{\pm 1\}^{d} : \langle x, \theta\rangle > 3 \sqrt{\log N} \|\theta\|_2 \}
$$
to be a pruning of $B$ where we remove all vectors that have too large an inner product with $\theta$. We claim that $D_{\theta}(B')$ is close to $D_{\theta}(B)$ in total variation distance. Indeed, we have
$$
\begin{aligned}
d_{\mathrm{TV}}(D_{\theta}(B'),D_{\theta}(B))
&= \Pr_{v\sim D_{\theta}(B)}[v\notin B'] \\
&\le \int_{3\sqrt{\log N}}^{+\infty} \Pr_{x\sim \{\pm 1\}^d}[\langle x,\theta\rangle > t \|\theta\|_2 ] \cdot \exp(t \|\theta\|_2) ~ dt \\
&\le \int_{3\sqrt{\log N}}^{\infty} \exp(-\frac{t^2}{2} + t \|\theta\|_2) ~ dt \\
&\le \frac{1}{N^2}.
\end{aligned}
$$

It then follows that $\lambda_{max}(\COV(D_{\theta}(B'))) < 1+o(1)$. Furthermore, when we sample the column vectors of $A$, with probability $1 - \frac{1}{N}$, all the vectors will be in the set $B'$. We condition on this event.

We recall a basic fact on concentration: for any bounded function $f: B' \to [-N^{1/5}, N^{1/5}]$, with probability $1-\exp(-N^{\Omega(1)})$ over $s^1,\dots, s^N\sim B'$, we have:
$$
\sum_{i \sim [N]} f(s^i) \in N \cdot \E_{v\sim B'}[ f(v) ] \pm O\left( N^{3/4} \right).
$$
We apply this fact to the following functions:
\begin{itemize}
    \item $Z(s^1,\dots, s^N) \coloneqq \sum_{i=1}^{N} \exp(\langle s^i, \theta \rangle)$.
    \item $M_u(s^1,\dots, s^N) \coloneqq \sum_{i=1}^N \exp(\langle s^i, \theta \rangle) \cdot s^i_u$ for every $u\in [d]$.
    \item $C_{u,v}(s^1,\dots, s^N) \coloneqq \sum_{i=1}^{N} \exp(\langle s^i, \theta\rangle) \cdot s^i_{u}\cdot s^i_v$, for every $u,v\in [d]$.
\end{itemize}
Note that all these functions are bounded by $N^{1/5}$, because we have that $\langle v, \theta\rangle < 3\sqrt{\log N} \|\theta\|_2 < \frac{1}{5} \log(N)$ for all $v\in B'$. Hence, we get that with high probability over $\{s^1,\dots, s^N\}$, each of these functions is close to the following functions:
\begin{itemize}
    \item $Z^* = N\cdot \E_{s\sim B'}[\exp(\langle s, \theta\rangle)]$.
    \item $M^*_u = N\cdot \E_{s\sim B'}[\exp(\langle s,\theta\rangle) \cdot s_u]$ for every $u\in [d]$.
    \item $C^*_{u,v} = N\cdot E_{s\sim B'}[\exp(\langle s, \theta\rangle) \cdot s_u\cdot s_v]$ for every $u,v\in [d]$.
\end{itemize}

Denote $A = [s^1,\dots, s^N]$. Note that the covariance matrix of $A$ is
$$
\COV(D_{\theta}(A)) = \frac{C(s)}{Z(s)} - \frac{M(s) M(s)^\top}{Z(s)^2}.
$$
Observe that $\E_{s\sim B'}[\exp(\langle s,\theta\rangle)] \ge \Omega(1)$. Then, it is straightforward to see that
$$
\begin{aligned}
\COV(D_{\theta}(A)) - \COV(D_{\theta}(B'))
&=
\left( \frac{C(s)}{Z(s)} - \frac{M(s) M(s)^\top}{Z(s)^2} \right) - \left( \frac{C^*}{Z^*} - \frac{M^*(M^*)^\top}{(Z^*)^2}  \right) \\
&\le\left( C(s)\left( \frac{1}{Z(s)} - \frac{1}{Z^*}\right) - M(s)M(s)^\top \left(\frac{1}{Z(s)^2} - \frac{1}{(Z^*)^2} \right) \right) - \\
&~~~~ - \left( \frac{C(s) - C^*}{Z^*} - \frac{M(s) M(s)^\top - M^*(M^*)^\top}{(Z^*)^2}  \right)  \\
&\le \frac{O(N^{3/4})\cdot J}{Z^*} \\
&\le O(N^{-1/4}) \cdot J.
\end{aligned}
$$
This shows that the distance between $\COV(D_{\theta}(A))$ and $\COV(D_{\theta}(B'))$ is negligible with high probability, implying that $\lambda_{max}(\COV(D_{\theta}(A)))<1+o(1)$.

Now we have shown that \eqref{equ:random-theta-A-for-regular} is true. By Markov's inequality, this implies that with probability $1-o(1)$ over $A$, we have
$$
\Pr_{\theta}\left[ \lambda_{max}(\COV(D_{\theta}(A))) > 2 \right] < \frac{1}{d^5}.
$$
(Again, if not, there would be at least $\Omega(1)\cdot \frac{1}{d^5}$ fraction of $\theta$-$A$ pairs that are bad, contradicting to \eqref{equ:random-theta-A-for-regular}). This completes the proof of the proposition.
\end{proof}

\subsection{Proof of Theorem~\ref{theo:random-query-lb}}

We are ready to conclude the proof of Theorem~\ref{theo:random-query-lb}. We condition on a matrix $A\in \{\pm 1\}^{d\times N}$ that is $(c\sqrt{\log(N)},\Omega(\log(N)))$-expanding and $(c\sqrt{\log N})$-regular for some absolute constant $c>0$. There are all but $o(1)$-fraction of such matrices by Propositions~\ref{prop:random-is-expanding} and \ref{prop:random-is-regular}. Given $A$, we use $a^i, i\in [N]$ to denote its column vectors.

Let $\alpha < \frac{c'\sqrt{\log N}}{\sqrt{d}}$ be a desired accuracy parameter, where $c' < c$ is sufficiently small. Let $n = \frac{c'\sqrt{d}}{\eps \alpha}$. Fix $\delta < \frac{1}{d}$. Suppose $\calA:[N]^n\to \mathbb{R}^{d}$ is an $(\eps,\delta)$-DP algorithm with $\ell_2^2$ error of at most $\alpha^2 d$. We derive a contradiction via the correlation analysis.

To set up the fingerprinting argument, choose $V = B_2(0,c\sqrt{\log(N)})$ and let $S$ be its surface. For every $\theta\in V$, define the exponential tilt $D_\theta$ as usual. Namely, $\Pr[D_\theta = a^i] \propto \exp(\langle \theta, e^i\rangle)$. Write $\mu_\theta\coloneqq \E[D_\theta]$ as the mean of $D_\theta$.

\paragraph*{Score lower bound.} Since the mean-squared error of $\mathcal{A}$ is $\alpha^2 d$, by Jensen's inequality, we have that $\| \E[\calA(x)] - \E_{j\sim [n]}[x^j] \|_2\le \alpha \sqrt{d}$ for every data set $x=(x^1,\dots, x^n)$. In light of this observation, we apply Proposition~\ref{prop:differentiate-to-score} and the divergence theorem to obtain
$$
\begin{aligned}
&~~~~ \E_{\theta\sim V} \E_{x\sim D_\theta^n,\calA(x)} \left[ \sum_{j=1}^{n} \langle x^j - \mu_\theta, \calA(x) \rangle \right] \\
&= \frac{\Area(S)}{\Vol(V)} \E_{\theta\sim S} \left\langle \E_{x\sim D_\theta^n,\calA(x)}[\calA(x)], \frac{\theta}{\|\theta\|_2} \right\rangle \\
&\ge \frac{d}{c\sqrt{\log N}} \left(\E_{\theta\sim S} \left\langle \E_{x\sim D_\theta^n}\E_{j\sim [n]}[x^j], \frac{\theta}{\|\theta\|_2} \right\rangle - \alpha \sqrt{d} \right) \\
&= \frac{d}{c\sqrt{\log N}} \left(\E_{\theta\sim S} \left\langle \E_{x\sim D_\theta}[x], \frac{\theta}{\|\theta\|_2} \right\rangle - \alpha \sqrt{d} \right) \\
&\ge \frac{d}{c\sqrt{\log N}} \left( (1-d^{-5})\cdot \Omega_c(\frac{\log N}{\sqrt{\log N}}) - O(d^{-5} \cdot d) - \alpha \sqrt{d} \right) & \text{(expanding of $A$)} \\
&\ge \frac{d}{c\sqrt{\log N}} \left( \Omega_c(\sqrt{\log N}) - c'\sqrt{\log N} \right) \\
&\ge \Omega_{c}(d).
\end{aligned}
$$
To achieve a tight dependence on $\alpha$, we define the score slightly differently. In particular, we observe that
$$
\E_{\theta\sim V} \E_{x\sim D_\theta^n,\calA(x)} \left[ \sum_{j=1}^{n} \langle x^j - \mu_\theta, \mu_\theta \rangle \right] = 0.
$$
Consequently,
\begin{align}
\E_{\theta\sim V} \E_{x\sim D_\theta^n,\calA(x)} \left[ \sum_{j=1}^{n} \langle x^j - \mu_\theta, \calA(x) - \mu_\theta \rangle \right] \ge  \Omega_{c}(d). \label{equ:random-query-score-lb}
\end{align}
We define the score of $x'\in [N]$ w.r.t.~$\calA(x)$ as $\score(x';\calA(x)) = \langle x' - \mu_\theta, \calA(x) - \mu_\theta \rangle$. As we will soon see, shifting $\calA(x)$ by $\mu_\theta$ allows for reducing the ``variance'' of scores, making it feasible to prove a tight upper bound on the score.

\paragraph*{Score upper bound.} To establish the score upper bound, we will use Proposition~\ref{prop:privacy-imply-small-score-difference}. The first step is to understand the variance of $\score(x',\calA(x))$ where the randomness is over $x\sim D_\theta^n, \calA(x)$ and $x'\sim D_\theta$. Since $\score(x',\calA(x))$ has zero mean, we just calculate its second moment. We proceed as
$$
\E[\score(x',\calA(x))^2] = \E\left[ (\calA(x) - \mu_\theta)^\top \COV(D_\theta) (\calA(x) - \mu_\theta)\right] \le \lambda_{max}(\COV(D_\theta)) \E[\|\calA(x) - \mu_\theta\|_2^2].
$$
By triangle inequality, we have
$$
\E[\|\calA(x) - \mu_\theta\|_2^2] \le 2\E_{x\sim D_\theta^n,\calA(x)}\left[ \|\calA(x) - \E_j[x^j]\|_2^2 \right] +2 \E_{x\sim D_\theta^n}\left[ \|\mu_\theta - \E_j[x^j]\|_2^2 \right].
$$
The first term is bounded by $\alpha^2 d$ since $\calA$ has a low mean-squared error. The second term is also bounded by $O(\alpha^2 d)$, \emph{provided that $n\ge \frac{1}{\alpha^2}$}. We can assume this is the case without loss of generality (i.e., $\frac{1}{\alpha} < \sqrt{d}$). This is because we can appeal to a folklore reduction, saying that any algorithm with $n$ samples and error $\alpha$ implies an algorithm with $\frac{n}{k}$ samples and error $k\alpha$. We prove this fact at the end of the section for completeness. (The right lower bound in the regime where $\alpha < \frac{1}{\sqrt{d}}$ can also be proven using hereditary discrepancy approaches.)

Back to our discussion, we have shown that
$$
\E[\score(x',\calA(x))^2] \le O\left( \alpha^2 d \cdot \lambda_{max}(\COV(D_\theta))\right).
$$
Now, we utilize the regular property of $A$, which tells us that $\lambda_{max}(\COV(D_\theta))\le O(1)$ with probability $1-d^{-5}$ over $\theta\sim V$. For every such $\theta$, we may use Proposition~\ref{prop:privacy-imply-small-score-difference} to conclude that
$$
\E[\score(x^j,\calA(x))] \le O(\alpha \eps \sqrt{d} + \delta d) \le O(\alpha \eps \sqrt{d})
$$
for every $j\in [n]$. 

For those $\theta$ such that $\lambda_{max}(\COV(D_\theta))$ is too large, we use the naive bound of $\E[\score(x^j,\calA(x))]\le O(d)$. Combining both cases, we conclude that
\begin{align}
\E_{\theta\sim V}\E[\score(x^j,\calA(x))] \le O(\alpha \eps \sqrt{d}). \label{equ:random-query-score-ub}
\end{align}
Unlike proofs presented in Sections~\ref{sec:fingerprinting-meets-geometry} to \ref{sec:lb-ada}, here we are only able to establish the score upper bound on average over $\theta$. Nevertheless, this is sufficient to carry out the proof.

Finally, we compare \eqref{equ:random-query-score-lb} with \eqref{equ:random-query-score-ub}. The upper and lower bounds on the score imply the desired lower bound $n\ge \Omega\left( \frac{d}{\alpha \eps \sqrt{d}} \right) \ge \Omega\left( \frac{\sqrt{d}}{\alpha \eps} \right)$, concluding the proof.

\paragraph*{Trading off accuracy for smaller sample size.} We present the promised reduction that allows one to reduce the sample size at the price of a higher error. Suppose $\mathcal{A}:[N]^n\to \mathbb{R}^{d}$ is $\alpha$-accurate and $(\eps,\delta)$-DP. We design an algorithm $\mathcal{B}:[N]^{n/k}\to \mathbb{R}^{d}$ that is $\alpha k$-accurate and retains $(\eps,\delta)$-DP. Denote $m = n/k$. Say $(x^1,\dots, x^{m})$ is the input to $\mathcal{B}$. Take $z$ to be arbitrary. Make $n-m$ copies of $z$ and feed $\mathcal{A}$ with the input $(x^1,\dots, x^{m}, z,\dots, z)$. Let $q\in \mathbb{R}^{d}$ be the output of $\mathcal{A}$. We let $\mathcal{B}$ output $\frac{n}{m}\left(q - \frac{n - m}{n}\cdot z \right)$. $\mathcal{B}$ is clearly private. Regarding accuracy, we have
$$
\left\| \frac{n}{m}(q-\frac{n - m}{n}\cdot z) - \frac{1}{m}\sum_{j=1}^m x^j \right\|_2 = \left\| \frac{n}{m} \left( q - \frac{1}{n} \left( \sum_{j=1}^{m} x^j + (n - m)\cdot z\right) \right) \right\|_2.
$$
Note that the right hand side measures the error of $\mathcal{A}$ on $(x^1,\dots, x^m,z,\dots, z)$ and multiply it with $\frac{n}{m} = k$. Hence, assuming $\mathcal{A}$ has error $\alpha$, it follows that $\mathcal{B}$ has error $k\alpha$.

\section{Releasing Random Linear Queries via Sparse Histogram} \label{sec:random-ub}

In this section, we present our algorithm for answering random linear queries with near-optimal sample complexity in the low-accuracy regime. The main result covered in this section is Theorem~\ref{theo:query-release-via-histogram}, to appear in Section~\ref{sec:random-ub-finish}.

\subsection{The Structure of Random Queries}

We need the following structural result concerning random matrices. Roughly, it says given a random query matrix $A\in \{\pm 1\}^{d\times N}$, any small number of column vectors behave like mutually orthogonal vectors.

\begin{lemma}\label{lemma:random-query-orthogonal}
    Let $N,d\ge 0$ be two integers. Consider a random matrix $A:\{\pm 1\}^{d\times N}$ where each entry of $A$ is independently set to $\pm 1$ with equal probability.

    Then, with probability $1-o(1)$ over $A$, the following is simultaneously true for all $k\in \left[1, c\cdot \frac{d}{\log(N)}\right]$: for all possible subset of $k$ distinct columns $i_1,\dots, i_k \in [N]$, the column vectors $A_{*, i_1},\dots, A_{*, i_{k}}$ satisfy $\left\|\sum_{j} A_{*,i_j} \right\|_2 \le \sqrt{2kd}$.
\end{lemma}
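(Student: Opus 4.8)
The plan is to fix a subset $S=\{i_1,\dots,i_k\}$ of $k$ distinct columns, bound the probability that $\bigl\|\sum_{j} A_{*,i_j}\bigr\|_2^2 > 2kd$ by $\exp(-\Omega(d))$ using a one–sided Bernstein inequality, and then take a union bound over all $\binom{N}{k}$ choices of $S$ and all $k\le cd/\log N$. The point is that $\binom{N}{k}\le N^k\le N^{cd/\log N}=e^{cd}$, so once $c$ is chosen small enough relative to the Bernstein exponent constant, the $e^{cd}$ union–bound factor is beaten by the $e^{-\Omega(d)}$ tail.

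For the single–set estimate, write $M_S:=\sum_{j\in S}A_{*,i_j}\in\mathbb{R}^d$, so that $\|M_S\|_2^2=\sum_{r=1}^d Y_r$ where $Y_r:=\bigl(\sum_{j\in S}A_{r,i_j}\bigr)^2$. The crucial observation is that, for a \emph{fixed} $S$, the rows of $A$ are independent, hence the $Y_r$ are i.i.d.\ over $r\in[d]$. Each inner sum $\sum_{j\in S}A_{r,i_j}$ is a sum of $k$ independent Rademacher variables, hence $O(\sqrt k)$-subgaussian by Lemma~\ref{lemma:rad-tail-ub} (equivalently Hoeffding), so $Y_r$ is a nonnegative subexponential random variable with $\mathbb{E}[Y_r]=k$ and subexponential norm $O(k)$. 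Bernstein's inequality for sums of independent subexponential random variables then gives, for an absolute constant $c_0>0$,
$$
\Pr\!\left[\sum_{r=1}^d Y_r > kd + t\right]\le \exp\!\left(-c_0\min\!\left(\frac{t^2}{dk^2},\ \frac{t}{k}\right)\right),
$$
and taking $t=kd$ yields $\Pr[\|M_S\|_2^2>2kd]\le \exp(-c_0 d)$. (For $k=1$ the claim is deterministic, since $\|A_{*,i_1}\|_2^2=d\le 2d$, so no probability is needed there.)

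Finally, union bounding over $k$ and over subsets, the probability that the stated bound fails for some integer $k\in[1,cd/\log N]$ and some $k$-subset is at most
$$
\sum_{k=1}^{\lfloor cd/\log N\rfloor}\binom{N}{k}\exp(-c_0 d)\ \le\ \sum_{k\le cd/\log N} N^k\exp(-c_0 d)\ \le\ 2\,N^{cd/\log N}\exp(-c_0 d)\ =\ 2\exp\bigl((c-c_0)d\bigr),
$$
using $N\ge 2$ to sum the geometric series. Choosing the constant $c<c_0$ makes this $o(1)$ as $d\to\infty$, which proves the lemma. The only real obstacle is bookkeeping: one must verify that $Y_r$ is genuinely $O(k)$-subexponential (i.e.\ that the square of an $O(\sqrt k)$-subgaussian is $O(k)$-subexponential) and feed this into the correct two–regime form of Bernstein, and then make sure the constant $c$ controlling the range of $k$ is strictly below the Bernstein exponent constant $c_0$ so the union bound closes. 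An alternative to the i.i.d.-over-rows decomposition is to write $\|M_S\|_2^2 = kd + \sum_{j\ne j'}\langle A_{*,i_j},A_{*,i_{j'}}\rangle$ and apply a Hanson–Wright-type bound, but the per-row decomposition above is cleaner since it reduces everything to a sum of $d$ independent bounded-mean subexponential terms.
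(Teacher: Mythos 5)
Your proposal is correct and follows essentially the same route as the paper: the paper also reduces to the observation that $\sum_j A_{*,i_j}$ has $d$ independent $\sqrt{k}$-subgaussian entries, bounds $\Pr[\|\sum_j A_{*,i_j}\|_2^2 > 2kd]$ by $\exp(-\Omega(d))$ via Hanson--Wright with $M=I$ (which for diagonal $M$ is exactly your Bernstein bound for a sum of $d$ i.i.d.\ $O(k)$-subexponential squares), and then union bounds over the at most $N^k \le 2^{cd}$ subsets and the at most $d$ values of $k$. Your version is, if anything, slightly more careful about making the constant $c$ in the range of $k$ strictly smaller than the tail exponent so the union bound closes.
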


Before starting the proof, we introduce one useful concentration inequality about quadratic forms of sub-gaussian random variables, known as the Hanson-Wright inequality.

\begin{lemma}[\cite{Hanson-Wright-ineq}]\label{lemma:Hanson-Wright}
There is a universal $c > 0$ for which the following is true. Let $X = (X_1,\dots, X_n)$ be a vector of independent random variables that are all zero-mean and $K$-subgaussian. Let $M$ be an $n\times n$ matrix. Then, for every $t\ge 0$, it holds that
$$
\Pr[|X^TMX - \E[X^T M X] | > t] \le 2\exp\left(-c \min\left(\frac{t^2}{K^4\|M\|_F^2}, \frac{t}{K^2 \|M\|_{op}} \right) \right).
$$
\end{lemma}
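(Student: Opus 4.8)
The statement is the Hanson–Wright inequality, and I would prove it by the moment-generating-function method of Rudelson and Vershynin, splitting the quadratic form into its diagonal and off-diagonal contributions and bounding each tail separately.

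\emph{Diagonal part.} Write $M = D + A$ where $D=\mathrm{diag}(M_{11},\dots,M_{nn})$ and $A$ is the off-diagonal part, so that $X^\top M X - \E[X^\top M X] = \sum_i M_{ii}(X_i^2-\E X_i^2) + \sum_{i\ne j}A_{ij}X_iX_j$. Since each $X_i$ is $K$-subgaussian, $X_i^2-\E X_i^2$ is centered and $O(K^2)$-subexponential, so the first sum is a sum of independent centered subexponential variables; Bernstein's inequality gives a tail $2\exp(-c\min(t^2/(K^4\sum_i M_{ii}^2),\, t/(K^2\max_i|M_{ii}|)))$, and bounding $\sum_i M_{ii}^2\le\|M\|_F^2$ and $\max_i|M_{ii}|\le\|M\|_{op}$ puts this in the claimed form.

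\emph{Off-diagonal part.} Let $S=\sum_{i\ne j}A_{ij}X_iX_j$ and let $X'$ be an independent copy of $X$. I would first apply the standard decoupling inequality for convex functions to get $\E e^{\lambda S}\le \E\exp(4\lambda\sum_{i,j}A_{ij}X_iX_j')$. Conditioning on $X$, the inner sum is $\sum_j (A^\top X)_j X_j'$, a linear combination of independent $K$-subgaussians, so the $X'$-expectation is at most $\exp(CK^2\lambda^2\|A^\top X\|_2^2)$. I would then linearize the quadratic $\|A^\top X\|_2^2$ using a fresh Gaussian $g\sim N(0,I_n)$ and the identity $e^{\|v\|_2^2/2}=\E_g e^{\langle g,v\rangle}$ (with $v$ proportional to $A^\top X$), which converts the bound into $\E_g\E_X\exp(c\lambda K\langle Ag,X\rangle)$; conditioning now on $g$ and using subgaussianity of the $X_i$ again bounds the $X$-expectation by $\exp(C'K^4\lambda^2\|Ag\|_2^2)$. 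Finally, for Gaussian $g$ the MGF of $\|Ag\|_2^2=g^\top A^\top A g$ equals $\prod_i(1-2s\sigma_i^2)^{-1/2}$ in the singular values $\sigma_i$ of $A$, which for $|s|\le 1/(4\|A\|_{op}^2)$ is at most $\exp(2s\|A\|_F^2)$; chaining these with $s=C'K^4\lambda^2$ yields $\E e^{\lambda S}\le\exp(C K^4\|A\|_F^2\lambda^2)$ valid whenever $|\lambda|\le c/(K^2\|A\|_{op})$.

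\emph{Assembly.} A Chernoff bound with this subgaussian-type MGF estimate, optimizing $\lambda$ subject to the constraint $|\lambda|\le c/(K^2\|A\|_{op})$, gives $\Pr[|S|>t]\le 2\exp(-c\min(t^2/(K^4\|A\|_F^2),\,t/(K^2\|A\|_{op})))$, and the bounds $\|A\|_F\le\|M\|_F$, $\|A\|_{op}\le2\|M\|_{op}$ rewrite this in terms of $M$. A union bound over the diagonal and off-diagonal tails with budget $t/2$ each, absorbing constants into $c$, completes the proof. The main obstacle is the off-diagonal estimate: extracting the correct subgaussian MGF bound requires both the decoupling inequality and the Gaussian-linearization trick, after which everything reduces to routine Bernstein and Chernoff bookkeeping; one just has to track that the decoupling constant $4$ and the $\|A\|_{op}\le 2\|M\|_{op}$ loss only affect the absolute constant $c$.
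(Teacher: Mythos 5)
This lemma is not proved in the paper at all—it is imported verbatim from the literature (the Hanson--Wright inequality), so there is no in-paper argument to compare against. Your proposal is a correct sketch of the standard Rudelson--Vershynin proof of that cited result: the diagonal/off-diagonal split with Bernstein for the subexponential diagonal sum, decoupling plus the Gaussian-linearization identity $e^{\|v\|_2^2/2}=\E_g e^{\langle g,v\rangle}$ for the off-diagonal MGF (valid in the range $|\lambda|\le c/(K^2\|A\|_{op})$), and the final constrained Chernoff optimization with the norm comparisons $\|A\|_F\le\|M\|_F$, $\|A\|_{op}\le 2\|M\|_{op}$ are all accurate, so the proposal matches the source the paper relies on.
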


\begin{proof}[Proof of Lemma~\ref{lemma:random-query-orthogonal}]
    Fix one $k$. Let $i_1,\dots, i_k$ be a list of fixed indices. Observe that $\sum_{j=1}^{k} A_{*,j}$ is a vector of $d$ independent entries where each entry is $\sqrt{k}$-subgaussian (this follows because the sum of $k$ Bernoulli random variables is $\sqrt{k}$-subgaussian). As such, we use Lemma~\ref{lemma:Hanson-Wright} by taking $M$ there to be $I$, and derive that
    $$
    \Pr\left[ \left\| \sum_{j} A_{*,j} \right\|_2^2 - kd > t\right] < 2\exp\left( -c \min\left( \frac{t^2}{k^2 d}, \frac{t}{k} \right) \right).
    $$
    By taking $t = kd$, we see that
    $$
    \Pr\left[ \left\| \sum_{j} A_{*,j} \right\|_2 > \sqrt{2 kd} \right]  =\Pr\left[ \left\| \sum_{j} A_{*,j} \right\|_2^2 > 2 kd\right] < \exp(-\Omega(d)).
    $$
    We can then union-bound over all possible choices of $i_1,\dots, i_k$. Note that there are $N^{k} = 2^{\frac{c d}{\log(N)}\cdot \log(N)} < 2^{c\cdot d}$ of them. Lastly, we union-bound over all possible $k$'s (there are at most $\sqrt{d}$ of them).
\end{proof}

\subsection{Private Sparse Histogram}

We need the following well-known result from the differential privacy literature.

\begin{lemma}\label{lemma:private-sparse-histogram}
Let $\mathcal{X}$ be a (possibly unbounded) universe and $n\ge 1$ be a finite integer. There is an algorithm which, on input a non-negative vector $x\in \mathbb{R}_{\ge 0}^{\mathcal{X}}$ with $\| x\|_1 = n$, with probability one returns a vector $\hat{x}$ with $\| \hat{x} \|_1 = n$ such that
$$
\| x - \hat{x} \|_{\infty} < O\left( \frac{1}{\eps} \log(1/\delta) \right).
$$
Furthermore, the algorithm is $(\eps,\delta)$-DP with respect to any adjacent inputs $x,x'$ with $\| x - x'\|_1 < 1$.
\end{lemma}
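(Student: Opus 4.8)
The plan is to implement a bounded‑noise variant of the classical ``stable histogram'' mechanism. The observation that makes this work over an unbounded $\mathcal{X}$ is that only coordinates with $x_j$ above a small threshold can ever influence the output, so although $\mathrm{supp}(x)$ may be large it is finite and the mechanism need only touch finitely many coordinates. Concretely: fix $A=\Theta(\tfrac1\eps\log\tfrac1\delta)$ and a threshold $t=\Theta(\tfrac1\eps\log\tfrac1\delta)$ with $t>A+1$; for each $j\in\mathrm{supp}(x)$ draw an independent $Z_j$ from the truncated Laplace law on $[-A,A]$ (density $\propto e^{-\eps|z|}$) and set $\tilde x_j=\max(x_j+Z_j,0)$, with $\tilde x_j=0$ for $j\notin\mathrm{supp}(x)$; let $S=\{j:\tilde x_j\ge t\}$ and $y_j=\tilde x_j\mathbf{1}[j\in S]$; finally redistribute the discrepancy $n-\|y\|_1$ to obtain $\hat x$ with $\|\hat x\|_1=n$ (discussed below).

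Privacy would be proved by the standard two‑case template. Let $x,x'$ be neighbours with $\|x-x'\|_1<1$ and $T=\mathrm{supp}(x)\cap\mathrm{supp}(x')$. Any coordinate $j\in\mathrm{supp}(x')\setminus\mathrm{supp}(x)$ has $x'_j<1$, hence $\tilde x_j\le x'_j+A<t$ with probability one, so $j\notin S$ and contributes nothing to the output of the run on $x'$ — exactly as in the run on $x$, where it is never touched; symmetrically for $\mathrm{supp}(x)\setminus\mathrm{supp}(x')$. Thus in both runs $\hat x$ is a deterministic post‑processing of $(\tilde x_j)_{j\in T}$, and $x\mapsto(\tilde x_j)_{j\in T}$ has $\ell_1$‑sensitivity $<1$; since adding truncated Laplace noise at this scale is $(\eps,\delta)$‑DP for $\ell_1$‑sensitivity $1$, we obtain $(\eps,\delta)$‑DP. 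Using bounded noise also means the $\delta$ comes entirely from the noise mechanism and not from a ``thresholding failure'' event, which is what lets the accuracy hold with probability one; if one prefers, one may instead use ordinary $\mathrm{Lap}(O(1/\eps))$ noise with the usual stability argument, at the cost of the accuracy holding only with probability $1-O(\delta)$.

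For accuracy, note that with probability one: for $j\in S$ we have $|y_j-x_j|\le|Z_j|\le A$; for $j\notin S$ with $x_j>0$ we have $\tilde x_j<t$ and $\tilde x_j\ge x_j-A$, so $0<x_j<t+A$ and $|y_j-x_j|=x_j<t+A$. Hence $\|y-x\|_\infty<t+A=O(\tfrac1\eps\log\tfrac1\delta)$ and $\mathrm{supp}(y)\subseteq\mathrm{supp}(x)$.

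The step I expect to be the main obstacle is turning $y$ into $\hat x$ with $\|\hat x\|_1=n$ exactly while keeping $\|\hat x-x\|_\infty=O(\tfrac1\eps\log\tfrac1\delta)$. When $\|y\|_1\ge n$ this is easy: each surviving coordinate has $y_j\ge t$, so one can shave a total of $\|y\|_1-n$ off the coordinates of $S$ with no single one dropping by more than $O(\tfrac1\eps\log\tfrac1\delta)$. The delicate direction is $\|y\|_1<n$, i.e. thresholding removed net mass $\Delta=n-\|y\|_1>0$: one may only add mass to coordinates in $S$ (adding to a thresholded‑away coordinate would reveal it lies in $\mathrm{supp}(x)$) or to a single data‑independent overflow symbol, so one can safely absorb $O(\tfrac1\eps\log\tfrac1\delta)\cdot|S|$ of $\Delta$ into $S$ and place the remainder on the overflow symbol. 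This already gives the claimed $\ell_\infty$ bound whenever the thresholded‑away mass is $O(\tfrac1\eps\log\tfrac1\delta)$ per surviving coordinate — in particular whenever $n=O(\tfrac1\eps\log\tfrac1\delta)$, which covers the regime in which the lemma is invoked; handling the general case cleanly (e.g. by stating the $\ell_\infty$ bound over the original universe and keeping the overflow symbol outside it) is the one place that needs extra care. Everything else is routine.
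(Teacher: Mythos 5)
Your mechanism (noise only on $\mathrm{supp}(x)$, then threshold, then renormalize) is the classical stable-histogram route, and the privacy and the $\ell_\infty$ bound for the \emph{thresholded} vector $y$ are fine. But the renormalization step $y \mapsto \hat{x}$ is a genuine gap, and it is not confined to an irrelevant corner case. In the deficit direction the thresholding can remove essentially all the mass: take $x$ uniform over $n$ distinct symbols, each with weight $1$; then every coordinate satisfies $\tilde{x}_j \le 1 + A < t$ with probability one, so $S=\emptyset$ and $\Delta = n$. With your repair (absorb $O(\frac{1}{\eps}\log\frac{1}{\delta})$ per surviving coordinate plus a single overflow symbol) the overflow symbol receives mass $n$, so $\|\hat{x}-x\|_\infty \ge n$. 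Your fallback claim that $n = O(\frac{1}{\eps}\log\frac{1}{\delta})$ ``covers the regime in which the lemma is invoked'' is false: Theorem~\ref{theo:query-release-via-histogram} invokes the lemma with $n = \Theta\bigl(\frac{\log(1/\delta)}{\eps\alpha^2}\bigr)$, which exceeds $\frac{1}{\eps}\log\frac{1}{\delta}$ by a factor of $\alpha^{-2}$, and its utility analysis genuinely needs the $\ell_\infty$ bound on every coordinate (a single coordinate of $\hat{x}-x$ with magnitude $n$ contributes error $\sqrt{d}$ after applying $A$ and dividing by $n$).

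The paper avoids this entirely by not thresholding: it adds truncated Laplace noise to every coordinate (clipped at $0$) to get $\tilde{x}$ with $\|\tilde{x}-x\|_\infty \le v$, and then \emph{projects} $\tilde{x}$ onto $H=\{y\ge 0: \|y\|_1=n\}$ minimizing $\ell_\infty$ movement. Since $x\in H$ is within $\ell_\infty$ distance $v$ of $\tilde{x}$, the projection moves $\tilde{x}$ by at most $v$, so $\|\hat{x}-x\|_\infty\le 2v$ by the triangle inequality, and the projection is post-processing so privacy is unaffected. If you want to stay within your thresholding framework, the fix is to spread the deficit $\Delta$ over $\lceil \Delta/v\rceil$ \emph{data-independent} dummy symbols at mass at most $v$ each (rather than one overflow symbol); this is still post-processing of $y$ and keeps $\|\hat{x}-x\|_\infty = O(\frac{1}{\eps}\log\frac{1}{\delta})$. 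Either way, the step you flagged as needing ``extra care'' is exactly where your writeup currently fails, so it must be filled in. One smaller point: your one-line assertion that truncated Laplace is $(\eps,\delta)$-DP ``for $\ell_1$-sensitivity $1$'' spread across many coordinates needs the per-coordinate statement that a shift by $\Delta_u\le 1$ is $(\eps\Delta_u,\delta\Delta_u)$-indistinguishable followed by basic composition over $u$ (the paper proves this explicitly); the fact is true but is not literally the single-query sensitivity-$1$ statement you cite.
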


Most previous works using or designing sparse histogram algorithms are concerned with the typical definition of add-remove or change-one privacy. For our purpose, we need a version of the sparse histogram algorithm to handle $\ell_1$-adjacent data sets. For completeness, we include a proof of Lemma~\ref{lemma:private-sparse-histogram} below.

\begin{proof} We start by introducing the truncated Laplace mechanism and giving a ``fine-grained'' privacy analysis for it, taking the distance between two near inputs into account.

    \medskip\noindent\textbf{Truncated Laplace mechanism.} For $v,\eps > 0$, let $\TruncLap_{v}(\frac{1}{\eps})$ be a truncated Laplace random variable, defined as the following. To sample from $\TruncLap_{v}(\frac{1}{\eps})$, one \emph{repeatedly\footnote{There is another popular way of defining truncated Laplace noise by drawing a single sample from $\mathrm{Lap}(1/\eps)$ and truncating it into the range $[-v, v]$. However, for our algorithm and its analysis, the rejection sampling version is needed.}} draws $A\sim \mathrm{Lap}(\frac{1}{\eps})$, and returns the first $A$ such that $A\in [-v, v]$.

    Now, let $\eps,\delta$ be given as in Lemma~\ref{lemma:private-sparse-histogram}. We choose $v = \frac{5\log(1/\delta)}{\eps}$ and make the following claim: if two real numbers $X, Y$ are such that $|X - Y| = \Delta\le 1$, then $X + \TruncLap_{v}(\frac{1}{\eps})$ and  $Y + \TruncLap_{v}(\frac{1}{\eps})$ are $(\Delta\eps,\Delta \delta)$-indistinguishable. To briefly justify this, consider the output distribution of, e.g., $X + \TruncLap_{v}(\frac{1}{\eps})$. We consider getting a sample from $\TruncLap_{v}(\frac{1}{\eps})$ with value inside $[-v,-v+\Delta]\cup [v-\Delta, v]$ as a privacy failure. The probability of failure is then easily shown to be at most $\Delta \delta$. If the failure event does not happen, the output $X+\TruncLap_v(1/\eps)$ is equally likely (up to a multiplicative factor of $e^{\Delta \eps}$) to occur as well when we sample from $Y + \TruncLap_{v}(\frac{1}{\eps})$ instead.

    \medskip\noindent\textbf{Algorithm.} Let us design the algorithm. For each $u\in \mathcal{X}$, we tentatively set $\tilde{x}_u = \max(0, x_u + \TruncLap_v(\frac{1}{\eps}))$.
    This gives us a vector $\tilde{x}$ which satisfies that
    $$
    \|\tilde{x} - x\|_{\infty} \le v.
    $$
    Currently, $\tilde{x}$ does not necessarily satisfy that $\|\tilde{x}\|_1 = n$. Consider the set $H = \{ y\in \mathbb{R}_{\ge 0}^{\mathcal{X}}: \|y\|_1 = n\}$. We project $\tilde{x}$ onto $H$ by minimizing $\ell_{\infty}$ movement. Denote the resulting point by $\hat{x}$. Since the vector $x$ is in $H$, and $x$ is close to $\tilde{x}$ in $\ell_{\infty}$ distance, an application of triangle inequality shows that
    $$
    \| \hat{x} - x \|_{\infty} \le 2 \| \tilde{x} - x \|_{\infty}  = 2v.
    $$

    \medskip\noindent\textbf{Privacy.} The above concludes the description of the algorithm as well as its utility analysis. In terms of privacy, let $x,x'$ be a pair of vectors such that $\|x-x'\|_1 \le 1$. For every element $u\in \mathcal{X}$, the value $\tilde{x}_u$ and $\tilde{x'}_u$ will be $(\eps |x_u - x'_u|,\delta | x_u - x'_u |)$-indistinguishable. Hence, using the basic composition of differential privacy, we conclude that the two vectors $\tilde{x}, \tilde{x'}$ are $(\eps,\delta)$-indistinguishable. Since $\hat{x}$ and $\hat{x'}$ are post-processing of $\tilde{x},\tilde{x'}$, this projection step does not change our privacy analysis.
\end{proof}

\subsection{The Query Releasing Algorithm} \label{sec:random-ub-finish}

We are ready to design our query-releasing algorithm for random linear queries.

\begin{theorem}\label{theo:query-release-via-histogram}
Let $N,d\ge 0$ be two integers. Consider a random matrix $A\in\{\pm 1\}^{d\times N}$ where each entry of $A$ is independently set to $\pm 1$ with equal probability.

Then, with probability $1-o(1)$ over $A$, the following is simultaneously true for all $\alpha \in \left( C\sqrt{\frac{\log N}{d}}, 1 \right)$: for every $\eps,\delta \in (0, 1)$ there is a bound $n = O\left( \frac{\log(1/\delta)}{\alpha^2 \eps} \right)$ and an $(\eps,\delta)$-DP algorithm $\calA$ such that for every input vector $x\in \mathbb{R}^{N}$ with $\|x\|_1 \ge n$, $\calA$ with probability one returns a $\hat{y}$ such that $\left\| A \frac{x}{\|x\|_1} - \hat{y} \right\|_2 < \alpha \sqrt{d}$. Furthermore, $\calA$ is $(\eps,\delta)$-DP w.r.t.~$\ell_1$-adjacent data sets.
\end{theorem}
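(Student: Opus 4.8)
The plan is to run a private histogram that is $\ell_\infty$-close to the normalized input and then multiply by $A$, arguing that the tiny $\ell_\infty$ error is killed by the near-orthogonality of the columns of $A$ established in Lemma~\ref{lemma:random-query-orthogonal}. Concretely, on input $x \in \mathbb{R}_{\ge 0}^{N}$ with $\|x\|_1 \ge n$, I will invoke the $(\eps,\delta)$-DP sparse histogram of Lemma~\ref{lemma:private-sparse-histogram} to obtain $\hat{x}$ with $\|\hat{x}\|_1 = \|x\|_1$ and $\|\hat{x} - x\|_\infty < v_0 = O(\log(1/\delta)/\eps)$ deterministically, and then output $\hat{y} \coloneqq A\hat{x}/\|\hat{x}\|_1$. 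Privacy is immediate: the map $v \mapsto Av/\|v\|_1$ is a deterministic post-processing, so $\hat{y}$ inherits the $(\eps,\delta)$-DP guarantee of $\hat{x}$ with respect to $\ell_1$-adjacent datasets. Since the histogram preserves $\ell_1$ mass, $\hat{y} = A\hat{x}/\|x\|_1 = A\hat{p}$ where $\hat{p} \coloneqq \hat{x}/\|x\|_1$ satisfies $\|\hat{p}\|_1 = \|p\|_1 = 1$ and $\|\hat{p} - p\|_\infty < v_0/\|x\|_1 \le v_0/n =: \beta_0$, with $p \coloneqq x/\|x\|_1$ the target answer.

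I will then fix $n = \Theta(\log(1/\delta)/(\eps\alpha^2))$ with a large enough hidden constant that $\beta_0 \le \alpha^2/C$ for a constant $C$ chosen later. The hypothesis $\alpha > C'\sqrt{\log N/d}$, with $C'$ large relative to $C$ and to the constant $c$ of Lemma~\ref{lemma:random-query-orthogonal}, then yields $1/\beta_0 \le O(1/\alpha^2) \le cd/\log N$, which is exactly the range in which Lemma~\ref{lemma:random-query-orthogonal} is meaningful; I condition on $A$ being in its $(1-o(1))$-probability good event. To bound the error $\|A(\hat{p} - p)\|_2$, decompose $\hat{p} - p = r^+ - r^-$ with $r^\pm \ge 0$ of disjoint support; since $\|\hat{p}\|_1 = \|p\|_1$, each part has $\|r^\pm\|_1 = \tfrac12\|\hat{p}-p\|_1 \le 1$ and $\|r^\pm\|_\infty \le \beta_0$. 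Thus $\lambda \coloneqq r^+/\beta_0 \in [0,1]^N$ has $\|\lambda\|_1 \le 1/\beta_0 \le cd/\log N$, so by the standard fact that the capped simplex $\{\mu \in [0,1]^N : \|\mu\|_1 \le k\}$ is the convex hull of $0$-$1$ indicator vectors with at most $k$ ones, $\lambda$ is a convex combination of indicators $\mathbf{1}_S$ with $|S| \le O(1/\beta_0) \le cd/\log N$. For each such $S$, Lemma~\ref{lemma:random-query-orthogonal} gives $\|A\mathbf{1}_S\|_2 \le \sqrt{2|S|d} = O(\sqrt{d/\beta_0})$; by convexity of $\|\cdot\|_2$ this passes to $\|A\lambda\|_2 = O(\sqrt{d/\beta_0})$, hence $\|Ar^+\|_2 = \beta_0\|A\lambda\|_2 = O(\sqrt{\beta_0 d}) = O(\alpha\sqrt{d}/\sqrt{C})$, and symmetrically for $r^-$. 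Choosing $C$ large makes $\|A(\hat{p}-p)\|_2 < \alpha\sqrt{d}$, which is the required accuracy, and the bound holds deterministically since the truncated-Laplace noise in Lemma~\ref{lemma:private-sparse-histogram} is bounded.

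The one place where a little care is needed is the observation that neither genuine sparsity of the histogram nor any $\ell_1$-accuracy guarantee is required: the error vector $\hat{p}-p$ may well have $\ell_1$-norm close to $2$, but its $\ell_\infty$-norm $\beta_0$ is of order $\alpha^2$, and the near-orthogonality of any $O(1/\alpha^2) \lesssim d/\log N$ columns of $A$ is precisely the budget needed to absorb it; this is also the origin of the accuracy threshold $\alpha \gtrsim \sqrt{\log N / d}$. A secondary subtlety is that the normalization by $\|x\|_1$ must not leak privacy, which is why I write the output as $A\hat{x}/\|\hat{x}\|_1$ and use that the histogram of Lemma~\ref{lemma:private-sparse-histogram} leaves $\|\cdot\|_1$ unchanged, so that it is honest post-processing of $\hat{x}$ alone.
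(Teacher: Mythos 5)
Your algorithm and accuracy analysis are essentially the paper's: run the private sparse histogram of Lemma~\ref{lemma:private-sparse-histogram}, output $A\hat{x}/\|\hat{x}\|_1$, and absorb the $\ell_\infty$ histogram error using the near-orthogonality of any $O(1/\alpha^2)\le cd/\log N$ columns of $A$ from Lemma~\ref{lemma:random-query-orthogonal}. Your decomposition of $\hat p - p$ into positive and negative parts and then into a convex combination of scaled indicators of size $O(1/\beta_0)$ is the same convex-hull argument the paper uses (the paper works with the unnormalized $\hat x - x$, which has $\|\cdot\|_1\le 2n$ and $\|\cdot\|_\infty\le v$, and extreme points $\pm v\, e_J$ with $|J|\le 2n/v$), and the parameter bookkeeping $\sqrt{\beta_0 d}=O(\alpha\sqrt d/\sqrt C)$ matches.

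There is, however, one genuine gap, in the privacy argument. You apply Lemma~\ref{lemma:private-sparse-histogram} directly to the raw input $x$ with $\|x\|_1\ge n$ arbitrary, and claim $\hat y$ is DP because it is a post-processing of $\hat x$. But the lemma's guarantee (and its proof) is for inputs of a \emph{fixed common} $\ell_1$ mass: the construction projects the noised vector onto $H=\{y:\|y\|_1=\|x\|_1\}$, so for two $\ell_1$-adjacent inputs with $\|x\|_1\ne\|x'\|_1$ (which $\ell_1$-adjacency permits), the outputs satisfy $\|\hat x\|_1=\|x\|_1$ and $\|\hat x'\|_1=\|x'\|_1$ deterministically; $\hat x$ is therefore not $(\eps,\delta)$-DP, and you cannot obtain DP for $\hat y$ by post-processing a non-private intermediate. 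The paper closes exactly this hole with a preliminary rescaling: replace $x$ by $nx/\|x\|_1$ for the fixed public $n$, verify that this rescaling at most doubles the $\ell_1$ distance between adjacent inputs, and invoke the histogram with parameters $(\eps/2,\delta/2)$. Adding that step (at the cost of constants) repairs your argument; your accuracy analysis goes through unchanged since the relative error only improves to $v/n$ after rescaling.
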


\begin{proof}
We first state the algorithm, and then give its privacy and utility analysis.

\medskip\noindent\textbf{Algorithm.} Given input $x$, we first scale it properly to make it the case that $\|x\|_1 = n$ (namely we change $x$ to $\frac{n x}{\|x\|_1}$). Then, we use the sparse histogram algorithm of Lemma~\ref{lemma:private-sparse-histogram} on $x$ to find a vector $\hat{x}$ such that $\| x - \hat{x} \|_\infty < O(\frac{1}{\eps}\log(1/\delta))$. Finally, we output $\frac{A\hat{x}}{\|\hat{x}\|_1}$.

\medskip\noindent\textbf{Privacy.} To understand the privacy of the algorithm, note that for two adjacent $x,x'$ with $\min(\|x\|_1,\|x'\|_1)\ge n$, the scaling step increases their $\ell_1$ distance by at most $1$:
$$
\begin{aligned}
\left\| \frac{ n x}{\|x\|_1} - \frac{n x'}{\|x'\|_1} \right\|_1
& \le \left\| \frac{ n x}{\|x\|_1} - \frac{n x'}{\|x\|_1} \right\|_1 + \left\| \frac{ n x'}{\|x\|_1} - \frac{n x'}{\|x'\|_1} \right\|_1 \\
& \le \frac{n}{\|x\|_1} \|x - x'\|_1 + n \|x'\|_1 \cdot \left| \frac{1}{\|x\|_1} - \frac{1}{\|x'\|_1} \right| \\
&\le 1 + \frac{n \|x'\|_1}{\|x\|_1 \cdot \|x'\|_1} \\
&\le 2.
\end{aligned}
$$
Therefore, to ensure the final algorithm is $(\eps,\delta)$-DP, we can work with slightly smaller privacy parameters $(\eps/2,\delta/2)$ when invoking Lemma~\ref{lemma:private-sparse-histogram}. This only blows up the relevant parameters by a constant factor.

\medskip\noindent\textbf{Utility.} We now prove the utility of the algorithm. We assume the matrix $A$ is such that the conclusion of Lemma~\ref{lemma:random-query-orthogonal} holds (which happens with probability $1-o(1)$). We also assume the input $x$ has $\|x\|_1 = n$. Let $\hat{x}$ be the algorithm's output. From Lemma~\ref{lemma:private-sparse-histogram} we see that
$$
\| \hat{x} - x \|_{1} \le \|\hat{x}\|_1 + \|x\|_1 \le 2n
$$
and
$$
\| \hat{x} - x\|_{\infty} \le v
$$
for some $v\le O(\frac{\log(1/\delta)}{\eps})$.

For every subset $J\subseteq [N]$ of $|J|\le\frac{2n}{v}$ columns, let $e_J\in \mathbb{R}^{N}$ be a vector which takes value $1$ on coordinates from $J$ and equals zero elsewhere. It is easy to see that $\hat{x} - x$ can be written as a convex combination of $\{\pm v\cdot e_J : J\subseteq [N], |J| \le 2n/v\}$. Here, we assume $\alpha > C\sqrt{\frac{\log N}{d}}$ for a large enough $C$, so that by setting $n = \frac{C' \log(1/\delta)}{\eps \alpha^2}$ for an appropriate $C'$, the cardinality of $J$ (i.e.,~$\frac{2n}{v} = \frac{C'}{C^2}\frac{d}{\log N}$) falls under the regime of Lemma~\ref{lemma:random-query-orthogonal}. Then, for every $J$, by Lemma~\ref{lemma:random-query-orthogonal} we have
$$
\| A e_J \|_2 \le 2\sqrt{d  |J|} \le 4 \sqrt{\frac{d n}{v}}.
$$
Finally, by an averaging argument, we see that
$$
\begin{aligned}
\left\| \frac{Ax}{\|x\|_1} - \frac{A\hat{x}}{\|\hat{x}\|_1} \right\|_2
&= \frac{1}{n} \| A(x - \hat{x}) \|_2 \\
&\le \frac{1}{n} \sup_{J} \{ \| v\cdot A e_J \|_2 \} \\
&\le \frac{1}{n} \cdot v \cdot \frac{4\sqrt{dn}}{\sqrt{v}} \\
& \le 4 \sqrt{d\cdot \frac{v}{n}} \\
& \le \alpha \sqrt{d}.
\end{aligned}
$$
The last inequality is valid so long as we choose $C'$ to be large enough, so that with $n = \frac{C' \log(1/\delta)}{\eps \alpha^2}$ and $v = O\left( \frac{\log(1/\delta)}{\eps} \right)$ (this big-Oh hides a constant independent of $\alpha, n, d$ and $N$), we still have $\frac{v}{n} < \frac{\alpha^2}{16}$. This completes the proof.
\end{proof}

\addcontentsline{toc}{section}{References}
\bibliographystyle{alpha}
\bibliography{submission/references}

\end{document}